\documentclass[lettersize,journal]{IEEEtran}
\usepackage{amsmath,amsfonts}
\usepackage{algorithmic}
\usepackage{array}
\usepackage[caption=false,font=normalsize,labelfont=sf,textfont=sf]{subfig}
\usepackage{textcomp}
\usepackage{stfloats}
\usepackage{url}
\usepackage{verbatim}
\usepackage{graphicx}
\def\BibTeX{{\rm B\kern-.05em{\sc i\kern-.025em b}\kern-.08em
    T\kern-.1667em\lower.7ex\hbox{E}\kern-.125emX}}
\usepackage{balance}

\usepackage{float}
\usepackage{placeins}

\usepackage{graphicx}
\usepackage{float}
\usepackage{placeins}

\usepackage{xcolor}

\usepackage{amsmath,amssymb,amsfonts}
\usepackage{amsthm}
\usepackage{mathtools}
\usepackage{bm}
\usepackage{cite}
\usepackage{graphicx}
\usepackage{subcaption}
\usepackage{algorithm}
\usepackage{algorithmic}
\usepackage{color}
\usepackage{booktabs}
\usepackage{array}
\usepackage{multirow}
\usepackage{url}
\usepackage{tikz}
\usetikzlibrary{positioning}

\usepackage{graphicx}

\usepackage{amsthm}
\usepackage{amssymb}

\usepackage{graphicx}
\usepackage{tikz}
\usetikzlibrary{arrows.meta}
\usepackage{float}
\usepackage{scalerel}
\usetikzlibrary{svg.path}

\definecolor{orcidlogocol}{HTML}{A6CE39}
\tikzset{
  orcidlogo/.pic={
    \fill[orcidlogocol] svg{M256,128c0,70.7-57.3,128-128,128C57.3,256,0,198.7,0,128C0,57.3,57.3,0,128,0C198.7,0,256,57.3,256,128z};
    \fill[white] svg{M86.3,186.2H70.9V79.1h15.4v48.4V186.2z}
                 svg{M108.9,79.1h41.6c39.6,0,57,28.3,57,53.6c0,27.5-21.5,53.6-56.8,53.6h-41.8V79.1z M124.3,172.4h24.5c34.9,0,42.9-26.5,42.9-39.7c0-21.5-13.7-39.7-43.7-39.7h-23.7V172.4z}
                 svg{M88.7,56.8c0,5.5-4.5,10.1-10.1,10.1c-5.6,0-10.1-4.6-10.1-10.1c0-5.6,4.5-10.1,10.1-10.1C84.2,46.7,88.7,51.3,88.7,56.8z};
  }
}

\newcommand\orcidicon[1]{\href{https://orcid.org/#1}{\mbox{\scalerel*{
\begin{tikzpicture}[yscale=-1,transform shape]
\pic{orcidlogo};
\end{tikzpicture}
}{|}}}}

\usepackage{xcolor}
\definecolor{myblue}{RGB}{0,80,160}

\usepackage[
    colorlinks=true,
    linkcolor=myblue,
    citecolor=myblue,
    urlcolor=myblue
]{hyperref}

\newtheorem{assumption}{Assumption}
\newtheorem{definition}{Definition}
\newtheorem{lemma}{Lemma}
\newtheorem{theorem}{Theorem}
\newtheorem{remark}{Remark}
\newtheorem{proposition}{Proposition}

\begin{document}
\title{Fixed-Time Integral Reinforcement Learning for Saturated Nonlinear Multi-Agent Systems Under FDI Attacks}
\author{
Tien Dat Vu and Minh Doan
\thanks{T. D. Vu and M. Doan are with the Faculty of Mechanical Engineering,
Ho Chi Minh City University of Technology (HCMUT), Vietnam National University
Ho Chi Minh City (VNU-HCM), Ho Chi Minh City, Vietnam
(e-mail: dat.vuv@hcmut.edu.vn; minh.doan@hcmut.edu.vn).}

\thanks{Corresponding author: N. M. Doan (e-mail: minh.doan@hcmut.edu.vn).}
}

\markboth{ }%
{How to Use the IEEEtran \LaTeX \ Templates}

\maketitle

\maketitle

\begin{abstract}
The leader–follower formation control problem is investigated for nonlinear multi-agent systems with unknown dynamics, external disturbances, and false data injection (FDI) attacks on actuator channels. The problem is formulated as a zero-sum differential game and solved using the Integral Bellman–Isaacs approach. To address input saturation constraints, a non-quadratic control cost function is incorporated into the optimization problem, leading to a bounded control law. Furthermore, this paper proposes a cost function construction method and develops a critic learning law, which together guarantee the practical fixed-time stability of the system while overcoming the limitations of existing fixed-time reinforcement learning formulations. Finally, the practical fixed-time convergence of both the critic weight estimation error and the leader-referenced formation tracking error to bounded residual sets is rigorously proven. Simulation results demonstrate the effectiveness of the proposed method under external disturbances, FDI attacks, and input constraints.
\end{abstract}

\begin{IEEEkeywords}
False-data-injection attack, fixed-time practical stability, integral reinforcement learning, multi-agent systems, secure
control.
\end{IEEEkeywords}

\section{Introduction}
\label{sec:introduction}

Formation control enables networked agents to achieve coordinated motion using locally exchanged information and has been extensively studied in cooperative robotics, autonomous vehicles, and distributed cyber-physical systems
\cite{OlfatiSaber2007Consensus,Ren2007InformationConsensus}.
However, the communication and actuation channels required for
distributed coordination create attack surfaces. False-data-injection (FDI) signals may corrupt measurements, exchanged information, or control commands and propagate through the interaction graph, thus degrading collective behavior\cite{MoSinopoli2010FDI,Pasqualetti2013AttackDetection,
Teixeira2015SecureControl}. Zero-sum differential games provide a natural framework for modeling the interaction between a controller and worst-case disturbances or attacks, with the associated optimality condition characterized by the Hamilton--Jacobi--Isaacs (HJI) equation \cite{Isaacs1965DifferentialGames,BasarOlsder1999DynamicGames}.

For nonlinear systems, directly solving the HJI equation is generally intractable and requires knowledge of the system drift. Adaptive dynamic programming and reinforcement-learning methods alleviate this difficulty by approximating the value function from measured data \cite{Vrabie2009DirectAdaptiveOptimalControl,
Vamvoudakis2010OnlineActorCritic}. Integral reinforcement learning further recasts the pointwise optimality equation as a finite-window Bellman identity, allowing the unknown drift to be eliminated from the learning residual \cite{Modares2014OptimalTrackingIRL}. Bounded control authority must also be respected: rather than externally saturating an unconstrained policy, nonquadratic input utilities embed symmetric command constraints directly into the optimal-control problem and yield hyperbolic-tangent policies \cite{AbuKhalaf2005Nonquadratic}. In the present setting, this construction constrains the defender-generated secure command, while same channel FDI is treated separately as an adversarial input.

Convergence time is another important requirement. Finite-time stability guarantees convergence but with a settling-time bound that may depend on the initial condition \cite{BhatBernstein2000FiniteTime},
whereas fixed-time stability provides a uniform bound independent of the initial state \cite{Polyakov2012FixedTime}. Recently,
\cite{Gong2025SecureFormationFDI} considered fixed-time reinforcement
learning for secure formation under FDI attacks, but the formulation is
restricted to a comparatively simple second-order multi-agent system (MAS) model rather than
the unknown nonlinear dynamics considered here. Consequently, jointly
addressing unknown nonlinear drift, graph-coupled secure-control
channels, same-channel FDI attacks, bounded secure commands, and
fixed-time critic learning remains challenging.

This paper develops a local HJI-based integral reinforcement-learning framework to address these requirements. The graph-coupled coordination-error dynamics are decomposed into the unknown drift, the
self and neighboring secure-control channels, and the local and neighboring adversarial channels. A two-power state penalty in the running cost then makes the HJI identity generate the comparison terms required for fixed-time analysis, while an integral Bellman--Isaacs
residual removes the unknown drift from critic learning and recorded data preserve excitation once the trajectory approaches the desired formation. 

The main contributions of this paper are fourfold.
\emph{First}, a novel cost function construction method is developed to
achieve fixed-time convergence in reinforcement learning-based control
for nonlinear systems with unknown dynamics, while relaxing the strong
assumptions adopted in \cite{Gong2025SecureFormationFDI} and
\cite{Kokotakis2025PredefinedTimeRL}. \emph{Second}, a new critic weight
update law is proposed to guarantee fixed-time convergence of the neural
network parameter estimation error for systems with unknown dynamics.
\emph{Third}, a unified stability analysis framework is established to
rigorously prove the fixed-time convergence properties of the proposed
learning and control scheme. \emph{Fourth}, a resilient control strategy
is developed for unknown nonlinear systems under false data injection
(FDI) attacks, while ensuring the stability and robustness of the
closed-loop system.

\textbf{Notation:}
The notation used in this paper is standard. For a vector $\mathbf{z}$,
$\|\mathbf{z}\|$ denotes the Euclidean norm. While \(A>0\) (\(A\ge0\)) denotes that \(A\) is positive definite (positive semidefinite); for a matrix $\mathbf{M}$,
$\|\mathbf{M}\|$ denotes the induced norm. If
$\mathbf{M}=\mathbf{M}^{\top}$, then $\lambda_{\min}(\mathbf{M})$ and
$\lambda_{\max}(\mathbf{M})$ denote its minimum and maximum
eigenvalues. The symbols $\operatorname{col}(\cdot)$,
$\operatorname{diag}(\cdot)$, and $\mathbf{I}_n$ denote column
stacking, a diagonal matrix, and the $n\times n$ identity matrix,
respectively. The symbol \(\mathbf{1}_n\in\mathbb{R}^n\) denotes the \(n\)-dimensional column vector whose entries are all equal to one. The notation \(AC_{\mathrm{loc}}([0,\infty);\mathbb R^n)\) denotes the
space of locally absolutely continuous functions from \([0,\infty)\)
to \(\mathbb R^n\), i.e., functions that are absolutely continuous on
every compact subinterval of \([0,\infty)\). A continuous function
\(\alpha:[0,\infty)\to[0,\infty)\) belongs to class
\(\mathcal K_{\infty}\) if \(\alpha(0)=0\), \(\alpha\) is strictly
increasing, and \(\alpha(s)\to\infty\) as \(s\to\infty\).

\section{Preliminaries and Problem Formulation}

\subsection{Preliminaries}


Consider a graph
\(\mathcal G=(\mathcal V,\mathcal E,\mathbf A)\), where
\(\mathcal V=\{1,\ldots,N\}\),
\(\mathcal E\subseteq\mathcal V\times\mathcal V\), and
\(\mathbf A=[a_{ij}]\in\mathbb R^{N\times N}\) (Adjacency matrix). The weight \(a_{ij}>0\)
means that follower \(i\) receives information from follower \(j\), and
\(a_{ij}=0\) otherwise. Let
\begin{equation}
\begin{aligned}
    \mathcal N_i
    &=
    \{j\in\mathcal V:(j,i)\in\mathcal E\},\\
    \mathbf L
    &=
    \mathbf D_{\mathcal G}-\mathbf A,\qquad
    \mathbf D_{\mathcal G}
    =
    \operatorname{diag}
    \left\{
        \sum_{j=1}^{N}a_{ij}
    \right\},\\
    \mathbf B_0
    &=
    \operatorname{diag}(b_{10},\ldots,b_{N0}),\qquad
    \mathbf H
    =
    \mathbf L+\mathbf B_0 .
\end{aligned}
\label{eq:graph_def}
\end{equation}
where \(b_{i0}\ge0\) is the pinning gain from the leader to follower
\(i\). The leader-rooted construction in \eqref{eq:graph_def} follows
the secure formation-control setting of
\cite{Gong2025SecureFormationFDI} and will be used to define the
distributed formation error.

\begin{assumption}[Graph connectivity \cite{Hong2008DistributedObservers,BermanPlemmons1994}]
\label{ass:graph_connectivity}
The leader is globally reachable from the follower graph. Equivalently,
\(\mathbf H\) is nonsingular. For directed graphs, there exists a
positive diagonal matrix
\(\mathbf\Pi=\operatorname{diag}(\pi_1,\ldots,\pi_N)\) such that
\begin{equation}
    \mathbf H_s
    =
    \frac{\mathbf\Pi\mathbf H+\mathbf H^\top\mathbf\Pi}{2}
    >0 .
    \label{eq:Hs_def}
\end{equation}
\end{assumption}

\begin{definition}[Symmetric input constraint]
\label{def:symmetric_input_constraint}
For each follower \(i\), the actuator input is said to satisfy a
symmetric input constraint if
\begin{equation}
    u_{ci}\in\mathbb U_i
    :=
    \left\{
        u_{ci}\in\mathbb R^m:
        |u_{ci\ell}|<\bar u_{i\ell},\
        \ell=1,\ldots,m
    \right\},
    \label{eq:symmetric_input_constraint}
\end{equation}
where \(\bar u_{i\ell}>0\) is the admissible magnitude of the
\(\ell\)-th input channel.
\begin{equation}
    \bar{\mathbf U}_i
    =
    \operatorname{diag}
    (\bar u_{i1},\ldots,\bar u_{im}),
    \label{eq:Ubar_def}
\end{equation}
\end{definition}

This symmetric bounded-input setting is standard in constrained optimal
control and saturation-aware ADP/HJI designs
\cite{AbuKhalaf2006HJIInputSaturation,
AbuKhalaf2005Nonquadratic,
Bai2019AdaptiveRLSaturation}. It will be used in Section~\ref{Sec3} to
construct a nonquadratic saturation-compatible utility and to obtain a
bounded HJI policy satisfying \(u_{ci}\in\mathbb U_i\) by construction.

\subsection{Problem Formulation}

Consider a leader--follower network with one leader and \(N\) nonlinear
followers. The leader satisfies
\begin{equation}
    \dot x_0=f_0(x_0),\qquad x_0\in\mathbb R^n,
    \label{eq:leader_model}
\end{equation}
and follower \(i\) is governed by
\begin{equation}
    \dot x_i
    =
    f_i(x_i)
    +
    \mathbf g_i(x_i)u_i
    +
    \mathbf d_i(x_i)\omega_i,
    \qquad
    i=1,\ldots,N,
    \label{eq:follower_model}
\end{equation}
where \(x_i\in\mathbb R^n\), \(u_i\in\mathbb R^m\), and
\(\omega_i\in\mathbb R^{n_{\omega_i}}\) is an unknown disturbance.
Following the control-channel FDI model used in secure formation
learning \cite{Gong2025SecureFormationFDI}, the signal entering the
physical input channel is decomposed as
\begin{equation}
    u_i(t)=u_{ci}(t)+u_{ai}(t),
    \label{eq:input_decomposition}
\end{equation}
where \(u_{ci}\in\mathbb R^m\) is the secure control generated by the
defender and \(u_{ai}\in\mathbb R^m\) is the false-data-injection
signal. The disturbance and attack signals are assumed to satisfy
\(\omega_i,u_{ai}\in
L_{2}([0,\infty))\cap L_{\infty}([0,\infty))\),
with \(\|\omega_i(t)\|\leq \bar\omega_i\) and
\(\|u_{ai}(t)\|\leq \bar u_{ai}\) for all \(t\geq0\). Hence,
\begin{equation}
    \dot x_i
    =
    f_i(x_i)
    +
    \mathbf g_i(x_i)u_{ci}
    +
    \mathbf g_i(x_i)u_{ai}
    +
    \mathbf d_i(x_i)\omega_i .
    \label{eq:attacked_follower_model}
\end{equation}
The symmetric input constraint in
Definition~\ref{def:symmetric_input_constraint} will be imposed on the
admissible secure policy in the HJI design; no saturation map is
included in the plant model.



\begin{assumption}[Regularity]
\label{ass2}
Let \(\Omega_i\subset\mathbb R^n\) be a compact admissible operating region containing the origin. The functions \(f_0(x_0)\), \(f_i(x_i)\), \(\mathbf g_i(x_i)\), and \(\mathbf d_i(x_i)\) are locally Lipschitz on \(\Omega_i\). The desired offsets \(h_i(t)\) are continuously differentiable, and \(h_i(t)\), \(\dot h_i(t)\) are bounded.
\label{ass:regularity}
\end{assumption}

\begin{definition}[Admissible secure policy \cite{AbuKhalafLewis2005,VamvoudakisLewis2010}]
For the local graphical game defined above, a secure control policy \(u_{ci}=u_{ci}(\chi_i)\) is said to be admissible on \(\Omega_i\), denoted by \(u_{ci}\in\Psi_i(\Omega_i)\), if \(u_{ci}(\chi_i)\) is continuous on \(\Omega_i\), \(u_{ci}(0)=0\), \(u_{ci}(\chi_i)\in\mathbb U_i\) for all \(\chi_i\in\Omega_i\), the disturbance and attack local coordination-error dynamics under \(u_{ci}\) are asymptotically stable on \(\Omega_i\), and the associated infinite-horizon cost \(J_i\)

\end{definition}

\begin{remark}
A minor refinement adopted in this work is that, instead of merely
postulating the availability of an initial admissible policy as in
classical PI/ADP (Policy iteration/Adaptive dynamic programing) formulations
\cite{AbuKhalafLewis2005,Vrabie2009DirectAdaptiveOptimalControl,VamvoudakisLewis2010},
a data-reused construction is provided. In particular, Appendix~\ref{app:admissible_policy}
shows how the replay data already collected for the IRL mechanism can
be exploited to construct and certify an initial policy.
\end{remark}

Let \(h_i(t)\in\mathbb R^n\) be the desired offset of follower \(i\).
Define
\(
e_i=x_i-x_0-h_i,\qquad
\chi_i
=
\sum_{j\in\mathcal N_i}
a_{ij}(e_i-e_j)
+
b_{i0}e_i.
\)
Equivalently,
\(
\chi_i
=
\sum_{j\in\mathcal N_i}
a_{ij}
\big[
(x_i-h_i)-(x_j-h_j)
\big]
+
b_{i0}
\big[
(x_i-h_i)-x_0
\big].
\)
With
\(e=\operatorname{col}(e_1,\ldots,e_N)\) and
\(\chi=\operatorname{col}(\chi_1,\ldots,\chi_N)\), one obtains
\(
\chi=(\mathbf H\otimes\mathbf I_n)e.
\)
Under Assumption~\ref{ass:graph_connectivity},
\(
\chi=0
\Longleftrightarrow
e=0,\qquad
\|e\|
\le
\|(\mathbf H^{-1}\otimes\mathbf I_n)\|\,\|\chi\|.
\)
Therefore, formation tracking can be studied through the stabilization
of the distributed coordination error \(\chi\).

for compactness,
\(
F_i^e
=
f_i(x_i)-f_0(x_0)-\dot h_i,\qquad
\mathbf g_i
=
\mathbf g_i(x_i),\qquad
\mathbf d_i
=
\mathbf d_i(x_i).
\)

From \eqref{eq:attacked_follower_model} and
\(e_i=x_i-x_0-h_i\),
\(
\dot e_i
=
F_i^e
+
\mathbf g_i u_{ci}
+
\mathbf g_i u_{ai}
+
\mathbf d_i\omega_i,
\)
for compactness,
Differentiating \(\chi_i\) gives
\(
\dot\chi_i
=
\sum_{j\in\mathcal N_i}
a_{ij}(\dot e_i-\dot e_j)
+
b_{i0}\dot e_i.
\)

This yields
\(\dot{\chi}_i
=
\sum_{j\in\mathcal N_i}
a_{ij}
\big(
F_i^e-F_j^e
+\mathbf g_i u_{ci}
-\mathbf g_j u_{cj}
\big)
+
b_{i0}
\big(
F_i^e+\mathbf g_i u_{ci}
\big)
+
\sum_{j\in\mathcal N_i}
a_{ij}
\big(
\mathbf g_i u_{ai}
-\mathbf g_j u_{aj}
+\mathbf d_i\omega_i
-\mathbf d_j\omega_j
\big)
+
b_{i0}
\big(
\mathbf g_i u_{ai}
+\mathbf d_i\omega_i
\big)\).

To separate the self secure input, the neighboring secure inputs, and
the adversarial channels, define
\begin{equation}
\kappa_i
=
b_{i0}
+
\sum_{j\in\mathcal N_i}a_{ij},
\qquad
\mathbf G_{ii}^{\chi}(x_i)
=
\kappa_i\mathbf g_i.
\label{eq:kappa_self_G}
\end{equation}
Let
\(\mathcal N_i=\{j_1,\ldots,j_{n_i^{\mathrm{nb}}}\}\), where
\(n_i^{\mathrm{nb}}:=|\mathcal N_i|\), and set
\(x_{\mathcal N_i}
=
\operatorname{col}(x_{j_1},\ldots,x_{j_{n_i^{\mathrm{nb}}}})\)
and
\(u_{c\mathcal N_i}
=
\operatorname{col}(u_{cj_1},\ldots,u_{cj_{n_i^{\mathrm{nb}}}})\).
The neighboring secure-control channel is
\begin{equation}
\mathbf G_{i\mathcal N}^{\chi}(x_{\mathcal N_i})
=
\big[
-a_{ij_1}\mathbf g_{j_1},
\ldots,
-a_{ij_{n_i^{\mathrm{nb}}}}
\mathbf g_{j_{n_i^{\mathrm{nb}}}}
\big],
\label{eq:neighbor_G}
\end{equation}
where
\(\mathbf g_{j_\ell}=\mathbf g_{j_\ell}(x_{j_\ell})\).
Hence,
\(\mathbf G_{i\mathcal N}^{\chi}(x_{\mathcal N_i})
u_{c\mathcal N_i}
=
-\sum_{j\in\mathcal N_i}
a_{ij}\mathbf g_j(x_j)u_{cj}\).

The noncontrol drift part is
\begin{equation}
F_i^{\chi}
=
\sum_{j\in\mathcal N_i}
a_{ij}(F_i^e-F_j^e)
+
b_{i0}F_i^e.
\label{eq:FX_def}
\end{equation}
Equivalently,
\(F_i^{\chi}
=
\kappa_i
[f_i(x_i)-f_0(x_0)-\dot h_i]
-
\sum_{j\in\mathcal N_i}
a_{ij}
[f_j(x_j)-f_0(x_0)-\dot h_j]\).
Thus, \(F_i^{\chi}\) contains only the nonlinear drift, leader dynamics,
and desired-offset dynamics.

The adversarial part is
\(\Xi_i^{\chi}
=
\kappa_i
(\mathbf g_i u_{ai}+\mathbf d_i\omega_i)
-
\sum_{j\in\mathcal N_i}
a_{ij}
(\mathbf g_j u_{aj}+\mathbf d_j\omega_j)\).
Introduce the lumped adversarial input
\begin{equation}
\nu_i
=
\operatorname{col}
\big(
\omega_i,
\omega_{j_1},
\ldots,
\omega_{j_{n_i^{\mathrm{nb}}}},
u_{ai},
u_{aj_1},
\ldots,
u_{aj_{n_i^{\mathrm{nb}}}}
\big).
\label{eq:nu_def}
\end{equation}
A compatible adversarial input matrix is
\begin{align}
\mathbf D_i^{\chi}(x_i,x_{\mathcal N_i})
=
\big[
&\kappa_i\mathbf d_i,\,
-a_{ij_1}\mathbf d_{j_1},
\ldots,
-a_{ij_{n_i^{\mathrm{nb}}}}
\mathbf d_{j_{n_i^{\mathrm{nb}}}},
\nonumber\\
&\kappa_i\mathbf g_i,\,
-a_{ij_1}\mathbf g_{j_1},
\ldots,
-a_{ij_{n_i^{\mathrm{nb}}}}
\mathbf g_{j_{n_i^{\mathrm{nb}}}}
\big].
\label{eq:DX_matrix}
\end{align}
Then
\begin{equation}
\Xi_i^{\chi}
=
\mathbf D_i^{\chi}(x_i,x_{\mathcal N_i})\nu_i.
\label{eq:Xi_Dnu}
\end{equation}

Therefore, the distributed coordination-error dynamics are
\begin{align}
    \dot\chi_i
    &=
    F_i^{\chi}(\xi_i)
    +
    \mathbf G_{ii}^{\chi}(x_i)u_{ci}
    +
    \mathbf G_{i\mathcal N}^{\chi}(x_{\mathcal N_i})
    u_{c\mathcal N_i}
    \nonumber\\
    &\quad
    +
    \mathbf D_i^{\chi}(x_i,x_{\mathcal N_i})\nu_i,
    \label{eq:X_dynamics_final}
\end{align}
where
\begin{equation}
    \xi_i
    =
    \operatorname{col}
    \big(
        x_i,
        x_{\mathcal N_i},
        x_0,
        \dot h_i,
        \dot h_{\mathcal N_i}
    \big),
    \qquad
    \dot h_{\mathcal N_i}
    =
    \operatorname{col}
    \big(
        \dot h_{j_1},
        \ldots,
        \dot h_{j_{n_i^{\mathrm{nb}}}}
    \big).
    \label{eq:zeta_def}
\end{equation}
The model \eqref{eq:X_dynamics_final} clearly separates the defender's
self input \(u_{ci}\), the neighboring secure-control coupling
\(u_{c\mathcal N_i}\), and the adversarial input \(\nu_i\). It is the
coordination-error model used for the subsequent zero-sum HJI
formulation.

The origin is retained as the nominal regulation target and is not
excluded from the operating region \(\Omega_i\). To prepare the
fixed-time comparison analysis, fix an arbitrary radius
\(r_{i,-}>0\) such that \(B_{r_{i,-}}(0)\subset\Omega_i\), selected
consistently with the practical terminal region established below.
The subsequent analysis is interpreted according to three mutually
exclusive cases. If \(\chi_i=0\), the nominal coordination objective
has already been achieved and, by admissibility,
\(u_{ci}(0)=0\) while the associated infinite-horizon value is finite;
hence no convergence estimate is required at the origin. If
\(0<\|\chi_i\|<r_{i,-}\), the coordination error already belongs to
the prescribed terminal neighborhood \(B_{r_{i,-}}(0)\), so the
fixed-time comparison inequalities used to establish entrance into
that neighborhood need not be invoked. The nontrivial convergence
case is therefore \(\|\chi_i\|\ge r_{i,-}\), for which define the
nonterminal comparison set
\(\Omega_i^{r}:=\{\chi_i\in\Omega_i:\|\chi_i\|\ge r_{i,-}\}\).
Since \(\Omega_i\) is compact and
\(\{\chi_i:\|\chi_i\|\ge r_{i,-}\}\) is closed,
\(\Omega_i^{r}\) is compact; moreover,
\(0\notin\Omega_i^{r}\) and
\(\inf_{\chi_i\in\Omega_i^{r}}\|\chi_i\|\ge r_{i,-}>0\).
Consequently, the ratios
\(V_i^{*}(\chi_i)/\|\chi_i\|^{2}\) and
\(\|\nabla_{\chi_i}V_i^{*}(\chi_i)\|/\|\chi_i\|\) are well defined
throughout \(\Omega_i^{r}\) (See proof of Lemma~\ref{lem:local_value_bounds}), which is precisely the region on which
the fixed-time decay argument is required. If a disturbance or an
FDI signal drives a trajectory away from the origin, no exclusion
assumption is imposed: while \(0<\|\chi_i\|<r_{i,-}\) the trajectory
remains inside the prescribed terminal neighborhood, whereas once
\(\|\chi_i\|\ge r_{i,-}\) the same fixed-time comparison argument
becomes applicable again. Thus, the exclusion of the origin from
\(\Omega_i^{r}\) is purely analytical and does not remove the origin
from either the HJI operating domain or the admissible-policy
formulation.

\section{Secure Saturation-Aware HJI Formulation}

\label{Sec3}
This section formulates the secure formation problem as a local zero-sum graphical game, where \(u_{ci}\) minimizes the cost and the disturbance together with the same-channel FDI signal act as maximizing inputs, following \cite{Gong2025SecureFormationFDI}. Symmetric input constraints are embedded directly in the HJI equation through a saturation-compatible utility as in \cite{AbuKhalaf2006HJIInputSaturation,AbuKhalaf2005Nonquadratic,Bai2019AdaptiveRLSaturation}, while the original game cost is kept independent of the unknown value function and the fixed-time shaping term is introduced only after deriving the HJI equation.

\subsection{Graphical Game and Fixed-Time Shaped HJI}

\begingroup
\setlength{\abovedisplayskip}{3pt}
\setlength{\belowdisplayskip}{3pt}
\setlength{\abovedisplayshortskip}{2pt}
\setlength{\belowdisplayshortskip}{2pt}
\setlength{\jot}{1pt}

Accordingly, the lumped term in \eqref{eq:X_dynamics_final} can be equivalently decomposed as
\begin{equation}
\mathbf D_i^\chi(x_i,x_{\mathcal N_i})\nu_i
=
\mathbf B_i^\chi(x_i)\varpi_i
+
\sum_{j\in\mathcal N_i}
\mathbf B_{ij}^\chi(x_j)\varpi_j.
\label{eq:adversarial_channel_decomposition}
\end{equation}
where
where \(\mathbf B_i^\chi(x_i)=\kappa_i\bigl[\mathbf d_i(x_i)\ \mathbf g_i(x_i)\bigr]\), \(\varpi_i=\operatorname{col}\{\omega_i,u_{ai}\}\), \(\varpi_j=\operatorname{col}\{\omega_j,u_{aj}\}\), and \(\mathbf B_{ij}^\chi(x_j)=\bigl[-a_{ij}\mathbf d_j(x_j)\ -a_{ij}\mathbf g_j(x_j)\bigr]\).
This decomposition distinguishes the locally acting adversarial signal from those entering through the neighboring agents and directly yields the local graphical-game representation used in the next section.

\begin{align}
\dot{\chi}_i
={}&
F_i^{\chi}(\xi_i)
+\mathbf G_{ii}^{\chi}(x_i)u_{ci}
+\mathbf G_{i\mathcal N}^{\chi}(x_{\mathcal N_i})u_{c\mathcal N_i}
\nonumber\\[-1mm]
&+
\mathbf B_i^{\chi}(x_i)\varpi_i
+\sum_{j\in\mathcal N_i}
\mathbf B_{ij}^{\chi}(x_j)\varpi_j.
\label{eq:hji_local_dynamics_game}
\end{align}

\endgroup

\begin{assumption}[Local state sufficiency]
\label{assk}
For each follower \(i\), on the operating region \(\Omega_i\), any two
admissible physical configurations associated with the same
\(\chi_i\), under identical control and adversarial inputs, are assumed
to induce the same right-hand side of \eqref{eq:hji_local_dynamics_game}.
\label{ass:local_state_sufficiency}
\end{assumption}

Thus, the last two terms in \eqref{eq:hji_local_dynamics_game} are exactly the decomposed form of the lumped adversarial channel
\(\mathbf D_i^{\chi}(x_i,x_{\mathcal N_i})\nu_i\) in
\eqref{eq:X_dynamics_final}. The neighboring secure inputs \(u_{c\mathcal N_i}\) are treated as fixed local coupling signals when the HJI equation of follower \(i\) is formed. For a fixed neighboring secure policy
\(u_{c,-i}:=\{u_{cj}\}_{j\in\mathcal N_i}\), define the local graphical-game cost
\begin{align}
    J_i
    &=
    \int_0^\infty
    \Big[
    Q_{ii}(\chi_i)
    +\mathcal U_i(u_{ci})
    +\sum_{j\in\mathcal N_i}\mathcal U_{ij}(u_{cj})
    \nonumber\\
    &\qquad
    -\gamma_i^2\varpi_i^\top\mathbf T_{ii}\varpi_i
    -\gamma_i^2
    \sum_{j\in\mathcal N_i}
    \varpi_j^\top\mathbf T_{ij}\varpi_j
    \Big]dt ,
    \label{eq:graphical_game_cost}
\end{align}
where \(Q_{ii}:\mathbb R^n\to\mathbb R_{\ge0}\) is a continuous positive-definite state penalty satisfying
\begin{equation}
    Q_{ii}(0)=0,
    \qquad
    Q_{ii}(\chi_i)>0,\quad \forall\,\chi_i\neq0,
    \label{eq:Qii_general_condition}
\end{equation}
\(\mathbf T_{ii}=\mathbf T_{ii}^\top>0\), \(\mathbf T_{ij}=\mathbf T_{ij}^\top>0\), and \(\gamma_i>0\). The functions \(\mathcal U_i\) and \(\mathcal U_{ij}\) penalize the secure control efforts of follower \(i\) and its neighbors, while the negative quadratic terms represent the maximizing role of disturbances and same-channel FDI attacks.

\begin{remark}
The penalty \(Q_{ii}\) is kept general at this stage. For fixed-time shaping, one may later choose a nonquadratic form containing mixed powers of \(\|\chi_i\|\), for example terms of orders below and above two. This choice is part of the HJI shaping design and is not imposed in the problem formulation.
\end{remark}

The corresponding local value is
\begin{equation}
    V_i^\ast(\chi_i(0))
    =
    \min_{u_{ci}}
    \max_{\varpi_i,\varpi_{-i}}
    J_i
    \big(
    \chi_i(0),u_{ci},u_{c,-i},
    \varpi_i,\varpi_{-i}
    \big),
    \label{eq:nominal_game_value}
\end{equation}
where
\begin{equation}
    \varpi_{-i}:=\{\varpi_j:j\in\mathcal N_i\}.
    \label{eq:delta_minus_i_def}
\end{equation}


\begin{lemma}[Nash saddle condition]
\label{lem:nash_saddle_condition}
A set of policies \(\{u_{ci}^\ast,\varpi_i^\ast\}_{i=1}^{N}\) is a Nash saddle equilibrium if, for every follower \(i\), \(J_i(u_{ci}^\ast,u_{c,-i}^\ast,\varpi_i,\varpi_{-i})\le J_i(u_{ci}^\ast,u_{c,-i}^\ast,\varpi_i^\ast,\varpi_{-i}^\ast)\le J_i(u_{ci},u_{c,-i}^\ast,\varpi_i^\ast,\varpi_{-i}^\ast)\) for all admissible \(u_{ci}\), \(\varpi_i\), and \(\varpi_{-i}\).
\end{lemma}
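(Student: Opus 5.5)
This "lemma" is essentially a definition: it states a sufficient condition ("is a Nash saddle equilibrium if") that coincides with the standard notion of a saddle point for each local zero-sum graphical game. The plan is therefore to unfold definitions rather than derive anything deep. First I fix the notion of Nash saddle equilibrium for the collection of local games defined by the costs $J_i$ in \eqref{eq:graphical_game_cost}. Such an equilibrium is a profile in which no follower's defender can lower $J_i$ by a unilateral deviation in $u_{ci}$ with neighbors held at $u_{c,-i}^\ast$. Likewise, no adversarial input $(\varpi_i,\varpi_{-i})$ can raise $J_i$ by a unilateral deviation. Admissibility is understood in the sense of the admissible secure policy definition, so that each $J_i$ is finite.

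Second, I read the right inequality $J_i(u_{ci}^\ast,u_{c,-i}^\ast,\varpi_i^\ast,\varpi_{-i}^\ast)\le J_i(u_{ci},u_{c,-i}^\ast,\varpi_i^\ast,\varpi_{-i}^\ast)$ for all admissible $u_{ci}$. This says exactly that $u_{ci}^\ast$ is a best response of the minimizer given the neighbors' secure policies and the worst-case adversarial inputs. The left inequality says exactly that $(\varpi_i^\ast,\varpi_{-i}^\ast)$ is a best response of the maximizer given $u_{ci}^\ast,u_{c,-i}^\ast$.

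Third, since these two inequalities hold for every $i$, no player in any local game has a profitable unilateral deviation. This is the defining property of a Nash saddle equilibrium, and the conclusion follows. I will also remark on the value: under the saddle condition, $\min\max J_i=\max\min J_i=J_i(\cdot^\ast)$. To see this, chain the two inequalities to get $\max_{\varpi}J_i(u^\ast,\varpi)\le J_i^\ast\le \min_u J_i(u,\varpi^\ast)$. Combining this with the general inequality $\max\min\le\min\max$ gives equality. Hence $V_i^\ast$ in \eqref{eq:nominal_game_value} equals $J_i$ evaluated at the saddle profile.

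The only delicate point is making sure every quantity is well defined. The costs must be finite under admissible policies and $L_2$ adversarial inputs, which holds because $\varpi_i\in L_2$ and the admissibility definition gives finite $J_i$. The deviation sets must also be the stated admissible classes. Beyond that, the argument is a direct verification of definitions.
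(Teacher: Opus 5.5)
Your proposal is correct and takes the same approach as the paper, which gives no derivation beyond calling this the standard saddle-point characterization \cite{BasarOlsder1999DynamicGames} and reading the two inequalities as you do: the defender cannot lower \(J_i\) by changing \(u_{ci}\) alone, and the attacker cannot raise it by changing \(\varpi_i\) alone. Your extra chain showing \(\min\max J_i=\max\min J_i\), equal to \(J_i\) at the saddle profile and hence to \(V_i^\ast\) when \(u_{c,-i}=u_{c,-i}^\ast\), is valid but not needed for the statement.
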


\begin{remark}
Lemma~\ref{lem:nash_saddle_condition} is the standard saddle-point characterization of a zero-sum dynamic game; see \cite{BasarOlsder1999DynamicGames}. In the present context, the defender cannot decrease the local cost by changing \(u_{ci}\) alone, while the attacker cannot increase it by changing \(\varpi_i\) alone once the saddle policies are reached.
\end{remark}

The Hamiltonian associated with the original cost \eqref{eq:graphical_game_cost} is


\(\bar{\mathcal H}_i=Q_{ii}(\chi_i)+\mathcal U_i(u_{ci})+\sum_{j\in\mathcal N_i}\mathcal U_{ij}(u_{cj})-\gamma_i^2\varpi_i^\top\mathbf T_{ii}\varpi_i-\gamma_i^2\sum_{j\in\mathcal N_i}\varpi_j^\top\mathbf T_{ij}\varpi_j+\nabla V_i^{\ast\top}\bigl[F_i^\chi(\xi_i)+\mathbf G_{ii}^\chi(x_i)u_{ci}+\mathbf G_{i\mathcal N}^\chi(x_{\mathcal N_i})u_{c\mathcal N_i}+\mathbf B_i^\chi(x_i)\varpi_i+\sum_{j\in\mathcal N_i}\mathbf B_{ij}^\chi(x_j)\varpi_j\bigr]\).

\begin{proposition}[Nominal local HJI equation]
\label{prop:nominal_hji}
If \(V_i^\ast\) is continuously differentiable and is the value function of \eqref{eq:nominal_game_value}, then \(V_i^\ast\) satisfies
\begin{equation}
    0=
    \min_{u_{ci}\in\mathbb U_i}
    \max_{\varpi_i,\varpi_{-i}}
    \bar{\mathcal H}_i(\chi_i,\nabla V_i^\ast,u_{ci},\varpi_i,\varpi_{-i}).
    \label{eq:nominal_hji_equation}
\end{equation}
\end{proposition}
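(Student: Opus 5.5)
The plan is the standard dynamic-programming argument for zero-sum differential games (Isaacs' principle of optimality), applied to the local game \eqref{eq:nominal_game_value} with the neighboring secure policy \(u_{c,-i}\) frozen. We first write the running integrand of \eqref{eq:graphical_game_cost} as
\(r_i(\chi_i,u_{ci},\varpi_i,\varpi_{-i})=Q_{ii}+\mathcal U_i(u_{ci})+\sum_{j}\mathcal U_{ij}(u_{cj})-\gamma_i^2\varpi_i^\top\mathbf T_{ii}\varpi_i-\gamma_i^2\sum_j\varpi_j^\top\mathbf T_{ij}\varpi_j\).
Because the horizon is infinite and the data are time-invariant, for any \(\Delta t>0\) we have the Bellman--Isaacs identity
\[
V_i^\ast(\chi_i(t))=\min_{u_{ci}}\max_{\varpi_i,\varpi_{-i}}\Big\{\int_t^{t+\Delta t} r_i\,d\tau+V_i^\ast(\chi_i(t+\Delta t))\Big\}.
\]
Splitting the cost at \(t+\Delta t\) gives this identity, and the tail is optimized by the saddle policies of Lemma~\ref{lem:nash_saddle_condition} restarted from \(\chi_i(t+\Delta t)\). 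Assumption~\ref{ass:local_state_sufficiency} is used here to justify that the state of the local game is \(\chi_i\) alone. This lets \(V_i^\ast\) be a function of \(\chi_i\) and makes the right-hand side of \eqref{eq:hji_local_dynamics_game} well defined as a function of \((\chi_i,u_{ci},\varpi_i,\varpi_{-i})\).

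Second, since \(V_i^\ast\in C^1\) and trajectories of \eqref{eq:hji_local_dynamics_game} are absolutely continuous (Assumption~\ref{ass:regularity}, bounded inputs), we use the chain rule:
\[
V_i^\ast(\chi_i(t+\Delta t))-V_i^\ast(\chi_i(t))=\int_t^{t+\Delta t}\nabla V_i^{\ast\top}\dot\chi_i\,d\tau .
\]
Substituting into the Bellman--Isaacs identity, the left-hand side cancels and we obtain
\[
0=\min\max\frac1{\Delta t}\int_t^{t+\Delta t}\bar{\mathcal H}_i\,d\tau .
\]
We then take \(\Delta t\to0^+\) using continuity of the integrand. For this we restrict to controls taking values in \(\mathbb U_i\) (admissibility) and to disturbance values held constant on the short window. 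This yields \(0=\min_{u_{ci}\in\mathbb U_i}\max_{\varpi_i,\varpi_{-i}}\bar{\mathcal H}_i\) at the point \(\chi_i(t)\), which is \eqref{eq:nominal_hji_equation}.

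To make the limit rigorous we would prove two inequalities separately, in the usual viscosity-style way.

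\emph{The "\(\le\)" direction.} Suppose \(\min\max\bar{\mathcal H}_i\ge\theta>0\) at some point. Then for every constant admissible \(u_{ci}\) there is a maximizing constant \(\varpi\) making \(\bar{\mathcal H}_i\ge\theta/2\) on a short window, by continuity. This forces \(V_i^\ast(\chi_i(t))>V_i^\ast(\chi_i(t))\), a contradiction.

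\emph{The "\(\ge\)" direction.} Suppose \(\min\max\bar{\mathcal H}_i\le-\theta<0\). Take the minimizer \(u_{ci}\), which exists by compactness of \(\overline{\mathbb U_i}\) if the utility is lower semicontinuous and coercive toward the boundary, as with the nonquadratic utility. Then every \(\varpi\) gives a strict decrease, again a contradiction.

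The maximization over \(\varpi\) is well posed because \(\bar{\mathcal H}_i\) is strictly concave quadratic in \((\varpi_i,\varpi_{-i})\), since \(\gamma_i^2\mathbf T_{ii}>0\) and \(\gamma_i^2\mathbf T_{ij}>0\). Its maximum is therefore attained at
\[
\varpi_i=\tfrac{1}{2\gamma_i^2}\mathbf T_{ii}^{-1}\mathbf B_i^{\chi\top}\nabla V_i^\ast,\qquad
\varpi_j=\tfrac{1}{2\gamma_i^2}\mathbf T_{ij}^{-1}\mathbf B_{ij}^{\chi\top}\nabla V_i^\ast .
\]
Because this maximum is attained, sup and max coincide in the statement.

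The main obstacle is the interchange of the \(\min\max\) with the limit \(\Delta t\to0\). The issue is the information pattern: nonanticipative strategies versus open-loop signals. The game's \(\min\max\) over time-varying signals must be reduced to a pointwise \(\min\max\) over constant values. We would first handle this by the contradiction arguments above, which only require testing with constant controls and constant adversarial values on short windows. The fallback is to note that \(\bar{\mathcal H}_i\) is separable: it is convex in \(u_{ci}\) (given a convex \(\mathcal U_i\)) and concave in \(\varpi\), with no cross terms. The pointwise Isaacs condition \(\min\max=\max\min\) then holds, and the order of play is immaterial.
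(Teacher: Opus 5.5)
The paper states Proposition~\ref{prop:nominal_hji} without proof and treats it as the standard dynamic-programming consequence of the local game. Your Bellman--Isaacs/chain-rule/limit route is therefore the expected one, but your rigorous version does not close at the step you yourself flag as the main obstacle.

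\textbf{The ``\(\le\)'' direction.} You test only constant controls. To contradict the Bellman--Isaacs identity you need an adversarial reply to \emph{every} measurable \(u_{ci}(\cdot)\) with values in \(\mathbb U_i\) on the window such that \(\int_t^{t+\Delta t}\bar{\mathcal H}_i\,d\tau\ge\theta\Delta t/2\). Your argument does not exclude a minimizer that varies its control inside the window. The repair is the separability you mention only as a fallback. Write \(\bar{\mathcal H}_i=A_i(\chi_i,u_{ci})+C_i(\chi_i,\varpi_i,\varpi_{-i})\), which has no cross terms. Let the adversary play the constant maximizer \(\bar\varpi\) of \(C_i(\chi_i(t),\cdot)\), i.e., your explicit formula frozen at \(\chi_i(t)\). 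Then for every \(u_{ci}(\cdot)\) and every \(\tau\) in the window, \(\bar{\mathcal H}_i(\chi_i(\tau),u_{ci}(\tau),\bar\varpi)\ge\min_{u\in\overline{\mathbb U}_i}A_i(\chi_i(\tau),u)+C_i(\chi_i(\tau),\bar\varpi)\). The right-hand side is continuous in \(\chi_i\) and equals \(\min\max\bar{\mathcal H}_i\ge\theta\) at \(\chi_i(t)\). Moreover, \(\chi_i(\tau)\) stays uniformly close to \(\chi_i(t)\) because both inputs are now bounded. Separability alone also yields the Isaacs condition; convexity of \(\mathcal U_i\) plays no role.

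\textbf{The ``\(\ge\)'' direction and the remaining steps.} Here the difficulty is reversed. The maximization in \eqref{eq:nominal_hji_equation} is unconstrained, so adversarial signals are not uniformly bounded, and ``by continuity'' does not keep \(\chi_i(\tau)\) near \(\chi_i(t)\) uniformly over \(\varpi(\cdot)\). You need to split the penalty, e.g.\ \(\nabla V_i^{\ast\top}\mathbf B_i^\chi\varpi_i-\gamma_i^2\varpi_i^\top\mathbf T_{ii}\varpi_i\le\frac{1}{2\gamma_i^2}\nabla V_i^{\ast\top}\mathbf B_i^\chi\mathbf T_{ii}^{-1}\mathbf B_i^{\chi\top}\nabla V_i^\ast-\frac{\gamma_i^2}{2}\varpi_i^\top\mathbf T_{ii}\varpi_i\), and likewise for the neighbor channels. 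With \(E:=\int_t^{t+\Delta t}\|\varpi\|^2d\tau\), replies with \(\Delta t\,E\ge\eta^2\) give \(\int_t^{t+\Delta t}\bar{\mathcal H}_i\,d\tau\le C\Delta t-c\eta^2/\Delta t<0\). Replies with \(\Delta t\,E<\eta^2\) move \(\chi_i\) by at most \(K\Delta t+K\eta\), so continuity then applies.

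Only an \(\epsilon\)-minimizer \(u^\circ\) is needed in this direction. That matters, because your existence argument is incorrect: \(\mathcal U_i\) is bounded on \(\mathbb U_i\) (channel \(\ell\) contributes at most \(2r_{i\ell}\bar u_{i\ell}^2\ln 2\)), so it is not coercive toward the boundary. The minimum over the open box is attained for a different reason: \(\partial\mathcal U_i/\partial u_{ci}=2\bar{\mathbf U}_i\mathbf R_i\tanh^{-1}(\bar{\mathbf U}_i^{-1}u_{ci})\) blows up at the boundary, which forces the interior minimizer \eqref{eq:optimal_secure_policy}.

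Finally, your starting Bellman--Isaacs identity does not follow from ``splitting the cost'' if \(\min\max\) ranges over open-loop signals. In that case the minimizer's tail could not react to \(\chi_i(t+\Delta t)\). You must fix the information structure before the dynamic-programming step, either nonanticipative strategies or the feedback policies implicit in the admissibility definition. The pointwise Isaacs condition does not settle this by itself.
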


\begin{theorem}[Fixed-time value-function condition~\cite{Polyakov2012FixedTime}]
\label{thm:value_fixed_time_condition}
Consider the local  error dynamics \eqref{eq:hji_local_dynamics_game}. Suppose that there exists a continuously differentiable positive definite function \(V_i:\Omega_i\to\mathbb R_{\ge0}\), with \(V_i(0)=0\), and constants \(a_i,b_i>0\), \(0<\rho_i<1\), \(\theta_i>1\), and \(\Delta_i\ge0\), such that along the closed-loop trajectory,
\begin{equation}
    \dot V_i(\chi_i)
    =
    \nabla V_i^\top(\chi_i)\dot\chi_i
    \le
    -a_iV_i^{\rho_i}(\chi_i)
    -b_iV_i^{\theta_i}(\chi_i)
    +
    \Delta_i .
    \label{eq:value_fixed_time_condition}
\end{equation}
Then \(V_i(\chi_i(t))\) is fixed-time practically stable with respect to
\begin{equation}
    \Omega_{V_i}
    =
    \{V_i\ge0:
    a_iV_i^{\rho_i}
    +
    b_iV_i^{\theta_i}
    \le
    \Delta_i\}.
    \label{eq:value_residual_set}
\end{equation}
Moreover, the settling time satisfies
\begin{equation}
    T_{V_i}
    \le
    \frac{1}{a_i(1-\rho_i)}
    +
    \frac{1}{b_i(\theta_i-1)} ,
    \label{eq:value_fixed_time_bound}
\end{equation}
which is independent of \(V_i(\chi_i(0))\). If \(\Delta_i=0\), then the convergence is exact fixed-time convergence.
\end{theorem}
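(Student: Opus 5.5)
The plan is to reduce \eqref{eq:value_fixed_time_condition} to a scalar comparison problem and bound the time needed to enter a neighborhood of the residual set. Along a closed-loop solution (assumed to exist and remain in \(\Omega_i\)), set \(v(t)=V_i(\chi_i(t))\). This function is absolutely continuous and satisfies
\[
\dot v\le -a_iv^{\rho_i}-b_iv^{\theta_i}+\Delta_i .
\]

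\emph{Exact case (\(\Delta_i=0\)).} By the comparison lemma, \(v(t)\le w(t)\), where \(w\) solves \(\dot w=-a_iw^{\rho_i}-b_iw^{\theta_i}\) with \(w(0)=v(0)\). The hitting time of \(w\) at zero is
\[
T=\int_0^{v(0)}\frac{ds}{a_is^{\rho_i}+b_is^{\theta_i}} .
\]
Split the integral at \(s=1\). On \([0,1]\) use \(a_is^{\rho_i}+b_is^{\theta_i}\ge a_is^{\rho_i}\), which gives a contribution of at most \(1/(a_i(1-\rho_i))\). On \([1,\infty)\) use \(\ge b_is^{\theta_i}\), which gives at most \(1/(b_i(\theta_i-1))\). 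This yields \eqref{eq:value_fixed_time_bound} uniformly in \(v(0)\), since the upper limit \(v(0)\) is replaced by \(\infty\). Once \(v\) reaches zero, \(\dot v\le0\) and \(v\ge0\) keep it at zero.

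\emph{Practical case (\(\Delta_i>0\)).} The map \(\phi(s)=a_is^{\rho_i}+b_is^{\theta_i}\) is strictly increasing, so \(\Omega_{V_i}=[0,s^\ast]\) with \(\phi(s^\ast)=\Delta_i\).

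The main obstacle is that the decay rate \(\phi(v)-\Delta_i\) vanishes at the boundary of \(\Omega_{V_i}\). Reaching the residual set exactly therefore has no uniform time bound. The bound \eqref{eq:value_fixed_time_bound} as stated, without any loss constant, cannot hold literally, and I will establish it for a slightly inflated set.

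Fix \(\vartheta\in(0,1)\). Outside \(\{\phi(v)\le\Delta_i/\vartheta\}\) we have \(\Delta_i<\vartheta\phi(v)\), so
\[
\dot v\le-(1-\vartheta)\bigl(a_iv^{\rho_i}+b_iv^{\theta_i}\bigr).
\]
The same split-integral argument then gives entry into \(\{\phi(v)\le\Delta_i/\vartheta\}\) within
\[
\frac{1}{(1-\vartheta)a_i(1-\rho_i)}+\frac{1}{(1-\vartheta)b_i(\theta_i-1)},
\]
a bound independent of \(v(0)\).

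Invariance of this set follows because \(\dot v<0\) on its boundary whenever \(\Delta_i/\vartheta>\Delta_i\). Letting \(\vartheta\to0\), the settling-time bound tends to \eqref{eq:value_fixed_time_bound}, while the residual set tends to a bounded neighborhood of \(\Omega_{V_i}\). This is the sense of practical fixed-time stability in \cite{Polyakov2012FixedTime}.

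I will state the result with this interpretation: convergence to an arbitrarily small inflation of \(\Omega_{V_i}\), with settling time arbitrarily close to \eqref{eq:value_fixed_time_bound}. For \(\Delta_i=0\) the result is the exact statement proved above.
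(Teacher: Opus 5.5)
Your core argument matches the paper's. The paper gives no separate proof of Theorem~\ref{thm:value_fixed_time_condition} beyond citing Polyakov. However, at the end of the proof of Theorem~\ref{thm:closed_loop_fixed_time} it runs essentially your argument:
\begin{itemize}
\item it absorbs the residual outside the inflated set $\Omega_{ci}^{\epsilon_{c_i}}$, giving \eqref{eq:Jci_outside_Omega_eps};
\item it integrates $dt\le -d\mathcal J_i/[\epsilon_{c_i}(c_{i1}\mathcal J_i^{\rho_i}+c_{i2}\mathcal J_i^{\theta_i})]$;
\item it splits the integral at $s=1$ to obtain \eqref{eq:Jci_settling_time_eps}.
\end{itemize}
Your $\vartheta$ is their $1-\epsilon_{c_i}$. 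You are right that \eqref{eq:value_fixed_time_bound} cannot hold literally for entry into $\Omega_{V_i}$ when $\Delta_i>0$. That is why the paper's own explicit settling-time estimate targets an inflated set and carries the factor $1/\epsilon_{c_i}$. Your treatment of the case $\Delta_i=0$, the monotonicity of $\phi$, and the forward invariance of the inflated set are all fine. The paper never spells out the invariance point.

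The step that fails is your closing limit. As $\vartheta\to0$ the threshold $\Delta_i/\vartheta\to\infty$, so the set $\{\phi(v)\le\Delta_i/\vartheta\}$ exhausts $[0,\infty)$. It does not tend to a bounded neighborhood of $\Omega_{V_i}$. The set shrinks to $\Omega_{V_i}$ only as $\vartheta\to1$, and there your bound $(1-\vartheta)^{-1}\bigl[1/(a_i(1-\rho_i))+1/(b_i(\theta_i-1))\bigr]$ blows up.

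The two limits cannot be combined, and this is not an artifact of your estimate. Take the worst case $\dot v=-\phi(v)+\Delta_i$. The time to descend from a large $v(0)$ to a level $L>s^\ast$ (your $s^\ast$, with $\phi(s^\ast)=\Delta_i$) is $\int_L^{v(0)}ds/(\phi(s)-\Delta_i)$. Near $s^\ast$ one has $\phi(s)-\Delta_i\approx\phi'(s^\ast)(s-s^\ast)$ with $0<\phi'(s^\ast)<\infty$, so this time diverges logarithmically as $L\downarrow s^\ast$. Hence your announced interpretation, ``an arbitrarily small inflation of $\Omega_{V_i}$ reached within a time arbitrarily close to \eqref{eq:value_fixed_time_bound},'' is false.

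What you have actually proved, and what you should state, is a one-parameter trade-off. For each fixed $\vartheta\in(0,1)$, $v$ enters and thereafter remains in $\{\phi(v)\le\Delta_i/\vartheta\}$. It does so within $(1-\vartheta)^{-1}\bigl[1/(a_i(1-\rho_i))+1/(b_i(\theta_i-1))\bigr]$, uniformly in $v(0)$. This is exactly the form \eqref{eq:Jci_settling_time_eps} that the paper uses.
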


\begin{lemma}[Local bounds of the ideal HJI value function]
\label{lem:local_value_bounds}
Suppose that \(V_i^*\) is positive definite on \(\Omega_i\), satisfies \(V_i^*(0)=0\), and \(V_i^*\in C^1(\Omega_i\setminus\{0\})\). Then there exist a function \(\underline{\alpha}_i\in\mathcal K_{\infty}\) and constants \(\bar c_i>0\) and \(c_{\nabla i}>0\) such that \(\underline{\alpha}_i(\|\chi_i\|)\le V_i^*(\chi_i)\le\bar c_i\|\chi_i\|^2\) and \(\|\nabla_{\chi_i}V_i^*(\chi_i)\|\le c_{\nabla i}\|\chi_i\|\) for all \(\chi_i\in\Omega_i^{r}\).
\end{lemma}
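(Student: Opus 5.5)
The plan is a compactness argument on the nonterminal comparison set \(\Omega_i^{r}\), which is compact, excludes the origin, and satisfies \(\|\chi_i\|\ge r_{i,-}>0\) there. The key point is that away from the origin every ratio of continuous functions with denominator a power of \(\|\chi_i\|\) is continuous. By Weierstrass's theorem, such ratios then attain finite maxima and positive minima on \(\Omega_i^{r}\).

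\emph{Upper bound on the value.} Define \(\phi_i(\chi_i):=V_i^*(\chi_i)/\|\chi_i\|^2\) on \(\Omega_i^{r}\). Since \(V_i^*\in C^1(\Omega_i\setminus\{0\})\), it is continuous on \(\Omega_i^{r}\), and \(\|\chi_i\|^2\ge r_{i,-}^2>0\). Hence \(\phi_i\) is continuous on a compact set, and I set \(\bar c_i:=\max\{\max_{\Omega_i^{r}}\phi_i,\,1\}>0\). This gives \(V_i^*\le \bar c_i\|\chi_i\|^2\).

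\emph{Gradient bound.} In the same way, \(\|\nabla V_i^*(\chi_i)\|/\|\chi_i\|\) is continuous on \(\Omega_i^{r}\). I take \(c_{\nabla i}\) to be the maximum of this ratio and \(1\). If \(\Omega_i^{r}=\emptyset\), both bounds hold vacuously with any positive constants.

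\emph{Lower \(\mathcal K_\infty\) bound.} I use the standard construction. Let \(R_i:=\max_{\chi\in\Omega_i}\|\chi\|\). For \(s\in[0,R_i]\), define \(\psi_i(s):=\min\{V_i^*(\chi):\chi\in\Omega_i^{r},\ \|\chi\|\ge s\}\), with \(\psi_i(s)=\min_{\Omega_i^{r}}V_i^*\) for \(s\le r_{i,-}\). By compactness and positive definiteness, \(\psi_i\) is positive on \([0,R_i]\) and nondecreasing. Set \(m_i:=\min_{\Omega_i^{r}}V_i^*>0\).

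I then need a strictly increasing, unbounded minorant that vanishes at zero and stays below \(V_i^*\) on \(\Omega_i^{r}\). The simplest choice avoids \(\psi_i\) entirely: take
\[
\underline{\alpha}_i(s):=\frac{m_i}{R_i^2}\,s^2 .
\]
For \(\chi_i\in\Omega_i^{r}\), we have \(\|\chi_i\|\le R_i\), so \(\underline{\alpha}_i(\|\chi_i\|)\le m_i\le V_i^*(\chi_i)\). This function is of class \(\mathcal K_\infty\), which settles the lower bound. Since the statement only requires the bounds on \(\Omega_i^{r}\), this crude choice suffices.

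\emph{Expected obstacle.} The main point requiring care is that continuity of \(V_i^*\) is not assumed at the origin. The argument must therefore never evaluate limits at \(0\), which is exactly why restricting to \(\Omega_i^{r}\) is essential. I also note that the quadratic upper bound is a consequence of the restriction alone, with constants depending on \(r_{i,-}\); it is not a genuine local property near the origin.
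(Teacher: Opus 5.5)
Your proposal is correct and follows essentially the same route as the paper: continuity of the ratios \(V_i^*/\|\chi_i\|^2\) and \(\|\nabla_{\chi_i}V_i^*\|/\|\chi_i\|\) on the compact, origin-free set \(\Omega_i^{r}\), combined with the Weierstrass theorem, yields \(\bar c_i\) and \(c_{\nabla i}\). The differences are cosmetic. Your lower constant \(m_i/R_i^2\) is cruder than the paper's \(\min_{\Omega_i^{r}}V_i^*/\|\chi_i\|^2\), though both give a quadratic \(\mathcal K_\infty\) minorant; and your \(\max\{\cdot,1\}\) safeguard explicitly secures the strict positivity of \(c_{\nabla i}\), which the paper leaves implicit.
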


\begin{proof}
Since \(\Omega_i\) is compact and \(\{\chi_i:\|\chi_i\|\ge r_{i,-}\}\) is closed, \(\Omega_i^{r}\) is compact and \(0\notin\Omega_i^{r}\). Hence, the functions \(V_i^*(\chi_i)/\|\chi_i\|^2\) and \(\|\nabla_{\chi_i}V_i^*(\chi_i)\|/\|\chi_i\|\) are continuous on \(\Omega_i^{r}\). By the Weierstrass theorem, the constants \(\bar c_i:=\max_{\chi_i\in\Omega_i^{r}}V_i^*(\chi_i)/\|\chi_i\|^2<\infty\) and \(c_{\nabla i}:=\max_{\chi_i\in\Omega_i^{r}}\|\nabla_{\chi_i}V_i^*(\chi_i)\|/\|\chi_i\|<\infty\) exist, which directly give \(V_i^*(\chi_i)\le\bar c_i\|\chi_i\|^2\) and \(\|\nabla_{\chi_i}V_i^*(\chi_i)\|\le c_{\nabla i}\|\chi_i\|\). Moreover, positive definiteness of \(V_i^*\) and compactness of \(\Omega_i^{r}\) imply \(\underline c_{V_i}:=\min_{\chi_i\in\Omega_i^{r}}V_i^*(\chi_i)/\|\chi_i\|^2>0\). Choosing \(\underline{\alpha}_i(s):=\underline c_{V_i}s^2\in\mathcal K_{\infty}\) yields \(\underline{\alpha}_i(\|\chi_i\|)\le V_i^*(\chi_i)\), completing the proof.
\end{proof}

\begin{remark}[From a comparison condition to a constructive design]
\label{rem:comparison_to_design}
Theorem~\ref{thm:value_fixed_time_condition} is a comparison result:
if the ideal HJI value function satisfies the stated two-power
dissipation inequality, then fixed-time practical convergence follows
from Theorem~\ref{thm:value_fixed_time_condition}. A similar condition is
assumed directly in \cite{Gong2025SecureFormationFDI}. Although such an
assumption is sufficient for stability analysis, it does not explain
how the required decay structure is generated when the value function
is unknown and must be learned. In this paper, the fixed-time inequality
is therefore not imposed on \(V_i^*\). Instead, the running cost is
designed so that the required two-power terms arise directly from the
HJI equation.
\end{remark}

\begin{remark}[Cost-induced fixed-time structure]
\label{rem:cost_induced_fixed_time}
Motivated by the predefined-time HJB construction in \cite{Kokotakis2025PredefinedTimeRL}, the convergence-rate terms are embedded in the state penalty of the local graphical game. Specifically,
\begin{equation}
Q_{ii}(\chi_i)=\chi_i^\top \mathbf Q_i \chi_i+\kappa_{i1}\|\chi_i\|^{2\alpha_i}+\kappa_{i2}\|\chi_i\|^{2\beta_i},
\label{eq:fixed_time_Qii_design}
\end{equation}
where \(\mathbf Q_i=\mathbf Q_i^\top>0\), \(\kappa_{i1},\kappa_{i2}>0\), and \(0<\alpha_i<1<\beta_i\). This construction depends only on the measurable coordination error \(\chi_i\) and does not introduce the unknown value function into the running cost. By Lemma~\ref{lem:local_value_bounds}, \(V_i^*(\chi_i)\le\bar c_i\|\chi_i\|^2\) for all \(\chi_i\in\Omega_i^{r}\). Hence,
\begin{equation}
\|\chi_i\|^{2\alpha_i}\ge\bar c_i^{-\alpha_i}\bigl(V_i^*(\chi_i)\bigr)^{\alpha_i},
\quad
\|\chi_i\|^{2\beta_i}\ge\bar c_i^{-\beta_i}\bigl(V_i^*(\chi_i)\bigr)^{\beta_i}.
\label{eq:cost_to_value_fixed_time_terms}
\end{equation}
Therefore, whenever the HJI equation contributes \(-Q_{ii}(\chi_i)\) to \(\dot V_i^*\), it also produces the dissipative terms \(-c_{i1}\bigl(V_i^*\bigr)^{\alpha_i}-c_{i2}\bigl(V_i^*\bigr)^{\beta_i}\), where \(c_{i1}=\kappa_{i1}\bar c_i^{-\alpha_i}\) and \(c_{i2}=\kappa_{i2}\bar c_i^{-\beta_i}\). Thus, the fixed-time structure is generated by the graphical-game cost rather than assumed as an external property of the unknown HJI value function. 
\end{remark}

\begin{remark}[Inverse-optimal interpretation of the fixed-time cost as an alternative to Lemma~\ref{lem:local_value_bounds}]
\label{rem:inverse_optimal_Q}
\begingroup
\sloppy
\setlength{\emergencystretch}{2em}
The predefined-time RL formulation in \cite{Kokotakis2025PredefinedTimeRL} assumes the existence of a value/Lyapunov function compatible with both the prescribed decay condition and the steady-state optimality equation, whereas the secure graphical-game formulation in \cite{Gong2025SecureFormationFDI} treats the fixed-time-compatible performance structure as a given design ingredient. Here, Lemma~\ref{lem:local_value_bounds} provides a convenient relation between powers of the local synchronization error \(\chi_i\) and powers of the unknown HJI value \(V_i^\ast\). This relation leads to a transparent closed-loop proof, although its quadratic-type value bounds may be restrictive for a general nonquadratic HJI solution. A more general construction follows the inverse-optimal predefined-time philosophy of \cite{JimenezRodriguez2016OptimalPredefinedTime}: the desired value-function decay is specified first, after which a compatible state penalty is recovered from the HJI equation. To illustrate this extension without introducing a new controller, retain the saddle policies already given by \eqref{eq:optimal_secure_policy}--\eqref{eq:optimal_neighbor_adversary} and impose \(0=\nabla V_i^{\ast\top}\bigl[F_i^\chi(\xi_i)+\mathbf G_{ii}^\chi(x_i)u_{ci}^\ast\bigr]+\nabla V_i^{\ast\top}\mathbf G_{i\mathcal N}^\chi(x_{\mathcal N_i})u_{c\mathcal N_i}+\nabla V_i^{\ast\top}\mathbf B_i^\chi(x_i)\varpi_i^\ast+\sum_{j\in\mathcal N_i}\nabla V_i^{\ast\top}\mathbf B_{ij}^\chi(x_j)\varpi_j^\ast+a_i(V_i^\ast)^{\rho_i}+b_i(V_i^\ast)^{\theta_i}\). Here and below, the arguments of \(V_i^\ast\) and \(\nabla V_i^\ast\) are omitted for compactness. This condition does not define additional policies; it requires the existing saddle trajectory to reproduce the mixed-power decay prescribed in Theorem~\ref{thm:value_fixed_time_condition}. Evaluating \eqref{eq:nominal_game_value} at the same saddle policies gives the inverse-recovered penalty \(Q_{ii}^{\mathrm{io}}(\chi_i)=a_i(V_i^\ast)^{\rho_i}+b_i(V_i^\ast)^{\theta_i}-\mathcal U_i(u_{ci}^\ast)-\sum_{j\in\mathcal N_i}\mathcal U_{ij}(u_{cj})+\gamma_i^2(\varpi_i^\ast)^\top\mathbf T_{ii}\varpi_i^\ast+\gamma_i^2\sum_{j\in\mathcal N_i}(\varpi_j^\ast)^\top\mathbf T_{ij}\varpi_j^\ast\). Using the maximizing policies in \eqref{eq:optimal_local_adversary} and \eqref{eq:optimal_neighbor_adversary}, the same penalty can be written explicitly as \(Q_{ii}^{\mathrm{io}}(\chi_i)=a_i(V_i^\ast)^{\rho_i}+b_i(V_i^\ast)^{\theta_i}-\mathcal U_i(u_{ci}^\ast)-\sum_{j\in\mathcal N_i}\mathcal U_{ij}(u_{cj})+\frac{1}{4\gamma_i^2}\nabla V_i^{\ast\top}\mathbf B_i^\chi(x_i)\mathbf T_{ii}^{-1}\mathbf B_i^{\chi\top}(x_i)\nabla V_i^\ast+\frac{1}{4\gamma_i^2}\sum_{j\in\mathcal N_i}\nabla V_i^{\ast\top}\mathbf B_{ij}^\chi(x_j)\mathbf T_{ij}^{-1}\mathbf B_{ij}^{\chi\top}(x_j)\nabla V_i^\ast\). Equivalently, isolating \(Q_{ii}\) directly from the minimized--maximized HJI equation yields \(Q_{ii}^{\mathrm{io}}(\chi_i)=-\nabla V_i^{\ast\top}\bigl[F_i^\chi(\xi_i)+\mathbf G_{ii}^\chi(x_i)u_{ci}^\ast\bigr]-\nabla V_i^{\ast\top}\mathbf G_{i\mathcal N}^\chi(x_{\mathcal N_i})u_{c\mathcal N_i}-\mathcal U_i(u_{ci}^\ast)-\sum_{j\in\mathcal N_i}\mathcal U_{ij}(u_{cj})-\frac{1}{4\gamma_i^2}\nabla V_i^{\ast\top}\mathbf B_i^\chi(x_i)\mathbf T_{ii}^{-1}\mathbf B_i^{\chi\top}(x_i)\nabla V_i^\ast-\frac{1}{4\gamma_i^2}\sum_{j\in\mathcal N_i}\nabla V_i^{\ast\top}\mathbf B_{ij}^\chi(x_j)\mathbf T_{ij}^{-1}\mathbf B_{ij}^{\chi\top}(x_j)\nabla V_i^\ast\). These two expressions coincide under the preceding matching condition. Nevertheless, inverse recovery does not by itself guarantee that the resulting function is an admissible state penalty. In addition to continuity and equilibrium compatibility, one must verify \(Q_{ii}^{\mathrm{io}}(0)=0\) and \(Q_{ii}^{\mathrm{io}}(\chi_i)>0\) for all \(\chi_i\neq0\). Using the inverse-recovered penalty, a sufficient pointwise condition for the second inequality is \(a_i(V_i^\ast)^{\rho_i}+b_i(V_i^\ast)^{\theta_i}+\gamma_i^2(\varpi_i^\ast)^\top\mathbf T_{ii}\varpi_i^\ast+\gamma_i^2\sum_{j\in\mathcal N_i}(\varpi_j^\ast)^\top\mathbf T_{ij}\varpi_j^\ast>\mathcal U_i(u_{ci}^\ast)+\sum_{j\in\mathcal N_i}\mathcal U_{ij}(u_{cj})\) for \(\chi_i\neq0\). When these positivity conditions hold, \(Q_{ii}^{\mathrm{io}}\) may replace \(Q_{ii}\) in \eqref{eq:graphical_game_cost}, and the matching condition gives \(\dot V_i^\ast=-a_i(V_i^\ast)^{\rho_i}-b_i(V_i^\ast)^{\theta_i}\). Thus, inverse optimality provides a systematic route for constructing a fixed-time-compatible cost rather than assuming such a cost in advance. However, for the present graph-coupled adversarial problem, this route additionally requires verification of the matching identity, positivity of the recovered penalty, admissibility of the saturated policy, and consistency of the neighboring saddle policies. Because inverse-cost synthesis is not the main contribution of this work, Lemma~\ref{lem:local_value_bounds} is retained as a transparent sufficient condition that produces a shorter and more readily verifiable stability proof, while the matching condition and the sufficient pointwise positivity condition clarify how this assumption may be relaxed in a future extension.
\endgroup
\end{remark}


For the secure input, use the saturation-compatible utility
\begin{equation}
    \mathcal U_i(u_{ci})
    =
    2\sum_{\ell=1}^{m}
    \bar u_{i\ell}r_{i\ell}
    \int_{0}^{u_{ci,\ell}}
    \tanh^{-1}
    \left(
    \frac{s}{\bar u_{i\ell}}
    \right)ds ,
    \label{eq:sat_utility_hji}
\end{equation}
where \(\mathbf R_i=\operatorname{diag}(r_{i1},\ldots,r_{im})>0\). Hence,
\begin{equation}
    \frac{\partial\mathcal U_i}{\partial u_{ci}}
    =
    2\bar{\mathbf U}_i\mathbf R_i
    \tanh^{-1}
    \left(
    \bar{\mathbf U}_i^{-1}u_{ci}
    \right).
    \label{eq:utility_gradient_hji}
\end{equation}
The neighbor utility \(\mathcal U_{ij}(u_{cj})\) is defined similarly with a positive matrix \(\mathbf R_{ij}\).

Since the fixed-time terms have been embedded into the state penalty \(Q_{ii}(\chi_i)\), no value-dependent shaping term is added to the Hamiltonian. Moreover, \(Q_{ii}(\chi_i)\) does not explicitly depend on \(u_{ci}\), \(\varpi_i\), or \(\varpi_j\). Hence, the saddle-point policies are obtained directly from the stationarity conditions of \(\bar{\mathcal H}_i\). From
\begin{equation}
    \frac{\partial\bar{\mathcal H}_i}{\partial u_{ci}}
    =
    \mathbf G_{ii}^{\chi\,\top}(x_i)\nabla V_i
    +
    \frac{\partial\mathcal U_i}{\partial u_{ci}}
    =
    0,
    \label{eq:stationarity_control}
\end{equation}
one obtains
\begin{equation}
    u_{ci}^\ast
    =
    -\bar{\mathbf U}_i
    \tanh
    \left(
    \frac{1}{2}
    \mathbf R_i^{-1}
    \bar{\mathbf U}_i^{-1}
    \mathbf G_{ii}^{\chi\,\top}(x_i)\nabla V_i^\ast
    \right).
    \label{eq:optimal_secure_policy}
\end{equation}
Thus, \(u_{ci}^\ast\in\mathbb U_i\) by construction. Similarly,
\begin{equation}
    \frac{\partial\bar{\mathcal H}_i}{\partial \varpi_i}
    =
    \mathbf B_i^{\chi\,\top}(x_i)\nabla V_i
    -
    2\gamma_i^2\mathbf T_{ii}\varpi_i
    =
    0
\end{equation}
gives
\begin{equation}
    \varpi_i^\ast
    =
    \frac{1}{2\gamma_i^2}
    \mathbf T_{ii}^{-1}
    \mathbf B_i^{\chi\,\top}(x_i)\nabla V_i^\ast .
    \label{eq:optimal_local_adversary}
\end{equation}
For each \(j\in\mathcal N_i\),
\begin{equation}
    \varpi_j^\ast
    =
    \frac{1}{2\gamma_i^2}
    \mathbf T_{ij}^{-1}
    \mathbf B_{ij}^{\chi\,\top}(x_j)\nabla V_i^\ast .
    \label{eq:optimal_neighbor_adversary}
\end{equation}
The worst-case disturbance and FDI signal are selected as
\begin{equation}
    \omega_i^\ast=\mathbf S_{\omega i}\varpi_i^\ast,
    \quad
    u_{ai}^\ast=\mathbf S_{ui}\varpi_i^\ast,
    \label{eq:worst_disturbance_fdi}
\end{equation}
where \(\mathbf S_{\omega i}\) and \(\mathbf S_{ui}\) are constant selection matrices.

Substituting \eqref{eq:optimal_secure_policy}, \eqref{eq:optimal_local_adversary}, and \eqref{eq:optimal_neighbor_adversary} into the local HJI equation gives
\begin{equation}
\begin{aligned}
    0
    &=
    Q_{ii}(\chi_i)
    +\mathcal U_i(u_{ci}^\ast)
    +\sum_{j\in\mathcal N_i}\mathcal U_{ij}(u_{cj})        \\
    &\quad
    +\nabla V_i^{\ast\top}
    \left(
    F_i^\chi(\xi_i)
    +\mathbf G_{ii}^\chi(x_i)u_{ci}^\ast
    +\mathbf G_{i\mathcal N}^\chi(x_{\mathcal N_i})u_{c\mathcal N_i}
    \right)                                                \\
    &\quad
    +
    \frac{1}{4\gamma_i^2}
    \nabla V_i^{\ast\top}
    \mathbf B_i^\chi(x_i)\mathbf T_{ii}^{-1}\mathbf B_i^{\chi\,\top}(x_i)
    \nabla V_i^\ast                                       \\
    &\quad
    +
    \frac{1}{4\gamma_i^2}
    \sum_{j\in\mathcal N_i}
    \nabla V_i^{\ast\top}
    \mathbf B_{ij}^\chi(x_j)\mathbf T_{ij}^{-1}\mathbf B_{ij}^{\chi\,\top}(x_j)
    \nabla V_i^\ast .
\end{aligned}
\label{eq:closed_cost_induced_hji}
\end{equation}
Equation \eqref{eq:closed_cost_induced_hji} is the cost-induced fixed-time HJI equation. It is important to note that the fixed-time structure is now contained in \(Q_{ii}(\chi_i)\), not in an additional value-dependent term.

The cost-induced HJI equation \eqref{eq:closed_cost_induced_hji} is a nonlinear first-order partial differential equation in \(V_i^\ast\). In general, solving this PDE(Partial Differential Equation) analytically is intractable, especially because the drift \(F_i^\chi(\xi_i)\) is unknown and the local state \(\chi_i\) contains graph-coupled information. Therefore, instead of seeking an exact closed-form solution, we approximate the value function by a critic neural network. Motivated by the Weierstrass approximation theorem and the universal approximation property of RBF neural networks, the following approximation setting is adopted.

\subsection{Integral Bellman--Isaacs Residual}


The drift \(F_i^\chi(\xi_i)\) is unknown, so \eqref{eq:closed_cost_induced_hji} cannot be solved directly. Along the saddle-point trajectories,
\begin{equation}
    \dot V_i^\ast
    =
    \nabla V_i^{\ast\top}\dot\chi_i .
\end{equation}
Using the HJI equality gives
\begin{equation}
\begin{aligned}
    \dot V_i^\ast
    &=
    -Q_{ii}(\chi_i)
    -\mathcal U_i(u_{ci}^\ast)
    -\sum_{j\in\mathcal N_i}\mathcal U_{ij}(u_{cj})        \\
    &\quad
    +\gamma_i^2(\varpi_i^\ast)^\top\mathbf T_{ii}\varpi_i^\ast
    +\gamma_i^2\sum_{j\in\mathcal N_i}
    (\varpi_j^\ast)^\top\mathbf T_{ij}\varpi_j^\ast.
\end{aligned}
\label{eq:value_derivative_hji}
\end{equation}
Integrating \eqref{eq:value_derivative_hji} over \([t-T_i,t]\) yields
\begin{equation}
\begin{aligned}
    0
    &=
    V_i^\ast(\chi_i(t))
    -
    V_i^\ast(\chi_i(t-T_i))                                      \\
    &\quad
    +
    \int_{t-T_i}^{t}
    \Big[
    Q_{ii}(\chi_i)
    +\mathcal U_i(u_{ci}^\ast)
    +\sum_{j\in\mathcal N_i}\mathcal U_{ij}(u_{cj})       \\
    &\quad
    -\gamma_i^2(\varpi_i^\ast)^\top\mathbf T_{ii}\varpi_i^\ast
    -\gamma_i^2\sum_{j\in\mathcal N_i}
    (\varpi_j^\ast)^\top\mathbf T_{ij}\varpi_j^\ast
    \Big]d\tau .
\end{aligned}
\label{eq:integral_hji_identity}
\end{equation}
The value function is still an infinite-horizon HJI value, whereas the interval \([t-T_i,t]\) is used only to generate an integral learning identity.


\begin{assumption}[RBF approximation of the HJI solution]
\label{ass:value_regular}
On the compact trajectory-relevant set
\(\Omega_i^r:=\{\chi_i\in\Omega_i:\|\chi_i\|\ge r_{i,-}\}\),
the HJI solution \(V_i^\ast\) is positive definite and continuously
differentiable. Moreover, there exist an ideal constant weight vector
\(W_i^\ast\in\mathbb R^{N_i}\), a continuously differentiable RBF basis
vector
\begin{equation}
    \phi_i(\chi_i)
    =
    \operatorname{col}
    \left(
    \phi_{i1}(\chi_i),\ldots,\phi_{iN_i}(\chi_i)
    \right),
    \label{eq:rbf_basis_vector}
\end{equation}
and an approximation error \(\varepsilon_i(\chi_i)\) such that
\begin{equation}
    V_i^\ast(\chi_i)
    =
    W_i^{\ast\top}\phi_i(\chi_i)
    +
    \varepsilon_i(\chi_i),
    \qquad \chi_i\in\Omega_i^r .
    \label{eq:rbf_value_approx_assumption}
\end{equation}
Each RBF basis can be chosen as
\begin{equation}
    \phi_{ik}(\chi_i)
    =
    \exp
    \left(
    -\frac{\|\chi_i-c_{ik}\|^2}{\sigma_{ik}^2}
    \right),
    \quad k=1,\ldots,N_i,
    \label{eq:rbf_basis_function}
\end{equation}
where \(c_{ik}\) and \(\sigma_{ik}>0\) are the center and width of the
\(k\)-th basis function. Furthermore, the approximation error and its
gradient are bounded on \(\Omega_i^r\) by the Weierstrass theorem,
namely
\begin{equation}
    |\varepsilon_i(\chi_i)|\le \bar\varepsilon_i,
    \quad
    \|\nabla\varepsilon_i(\chi_i)\|\le \bar\varepsilon_{di},
    \qquad \chi_i\in\Omega_i^r .
    \label{eq:rbf_approx_error_bound}
\end{equation}
\end{assumption}


\begin{remark}
Assumption~\ref{ass:value_regular} does not require the exact HJI solution to be known.
It only requires that, on the compact trajectory-relevant set
\(\Omega_i^r\), the unknown value function and its gradient admit an
RBF representation with bounded approximation residuals. This permits
the HJI solution to be replaced by a learnable critic representation
along the trajectories relevant to the subsequent analysis.
\end{remark}

By Assumption~\ref{ass:value_regular}, the critic approximation is chosen as
\begin{equation}
    \hat V_i(\chi_i)=\hat W_i^\top\phi_i(\chi_i),
    \quad
    \nabla\hat V_i(\chi_i)=\nabla\phi_i^\top(\chi_i)\hat W_i .
    \label{eq:critic_approximation}
\end{equation}


The approximate policies are obtained by replacing the unavailable gradient \(\nabla V_i^\ast\) with \(\nabla\hat V_i\). Therefore,
\begin{equation}
    \hat u_{ci}
    =
    -\bar{\mathbf U}_i
    \tanh
    \left(
    \frac{1}{2}
    \mathbf R_i^{-1}
    \bar{\mathbf U}_i^{-1}
    \mathbf G_{ii}^{\chi\,\top}(x_i)
    \nabla\phi_i^\top\hat W_i
    \right),
    \label{eq:approx_secure_policy}
\end{equation}
and
\begin{equation}
\begin{aligned}
\hat\varpi_i
&=
\frac{1}{2\gamma_i^2}
\mathbf T_{ii}^{-1}
\mathbf B_i^{\chi\,\top}(x_i)
\nabla\phi_i^\top\hat W_i,
\\
\hat\varpi_j
&=
\frac{1}{2\gamma_i^2}
\mathbf T_{ij}^{-1}
\mathbf B_{ij}^{\chi\,\top}(x_j)
\nabla\phi_i^\top\hat W_i .
\end{aligned}
\label{eq:approx_adversary_policy}
\end{equation}

The policies \(\hat u_{ci}\), \(\hat\varpi_i\), and \(\hat\varpi_j\) are useful only if the critic weights are learned from data. Directly enforcing the pointwise HJI equation is not attractive because it contains the unknown drift \(F_i^\chi(\xi_i)\). Following the integral reinforcement learning idea for continuous-time partially unknown systems \cite{Vrabie2009DirectAdaptiveOptimalControl,Vamvoudakis2010OnlineActorCritic,Modares2014OptimalTrackingIRL}, we enforce the HJI equation through its integral Bellman--Isaacs identity over a finite data window.

Let
\begin{equation}
    \Delta\phi_i(t)
    =
    \phi_i(\chi_i(t))-\phi_i(\chi_i(t-T_i)),
    \quad T_i>0 .
    \label{eq:delta_phi_hji}
\end{equation}
For compact notation, define the approximate integral running term

\begin{equation}
\begin{aligned}
    \hat r_i
    &:=
    Q_{ii}(\chi_i)
    +\mathcal U_i(\hat u_{ci})
    +\sum_{j\in\mathcal N_i}\mathcal U_{ij}(\hat u_{cj})       \\
    &\quad
    -\gamma_i^2\hat\varpi_i^\top\mathbf T_{ii}\hat\varpi_i
    -\gamma_i^2\sum_{j\in\mathcal N_i}
    \hat\varpi_j^\top\mathbf T_{ij}\hat\varpi_j.
\end{aligned}
\label{eq:approx_running_term}
\end{equation}

From Eqs.\eqref{eq:integral_hji_identity} and \eqref{eq:approx_running_term}, the online Bellman--Isaacs residual is defined as
\begin{equation}
    \delta_i(t)
    =
    \hat W_i^\top\Delta\phi_i(t)
    +
    \int_{t-T_i}^{t}\hat r_i(\tau)d\tau .
    \label{eq:online_bellman_isaacs_residual}
\end{equation}
Equation \eqref{eq:online_bellman_isaacs_residual} is the central data equation of the proposed IRL design. If \(\hat W_i=W_i^\ast\), the approximation error is zero, and the policies are exactly the saddle-point policies, then \(\delta_i(t)=0\) follows from the integral HJI identity. Hence, driving \(\delta_i(t)\) to zero is equivalent to enforcing the cost-induced HJI equation along the measured trajectory without requiring explicit knowledge of \(F_i^\chi(\xi_i)\).

To improve learning beyond the current data window, a finite experience stack is stored. Let
\begin{equation}
    \mathcal D_i
    =
    \{(\Delta\phi_{ik},\rho_{ik})\}_{k=1}^{M_i},
    \quad
    \Delta\phi_{ik}
    =
    \phi_i(\chi_i(t_k))-\phi_i(\chi_i(t_k-T_i)),
    \label{eq:history_stack_short}
\end{equation}
where
\begin{equation}
    \rho_{ik}
    =
    \int_{t_k-T_i}^{t_k}\hat r_i(\tau)d\tau .
    \label{eq:history_integral_short}
\end{equation}
The corresponding recorded Bellman--Isaacs residual is
\begin{equation}
    \delta_{ik}
    =
    \hat W_i^\top\Delta\phi_{ik}+\rho_{ik}.
    \label{eq:recorded_bellman_isaacs_residual}
\end{equation}
The use of recorded data follows the experience-replay and concurrent-learning philosophy in adaptive optimal control \cite{Bhasin2013ActorCriticIdentifier,Kamalapurkar2016ConcurrentLearningADP}. It replaces the need for persistent excitation by requiring only that the stored data contain enough independent information.

For numerical conditioning and to avoid excessively large update increments, define the normalized regressors
\begin{equation}
    \psi_i
    =
    \frac{\Delta\phi_i}{1+\Delta\phi_i^\top\Delta\phi_i},
    \quad
    \psi_{ik}
    =
    \frac{\Delta\phi_{ik}}{1+\Delta\phi_{ik}^\top\Delta\phi_{ik}} .
    \label{eq:normalized_regressors_hji}
\end{equation}
The normalization does not change the zero of the Bellman--Isaacs residual; it only regularizes the learning direction used in the critic update.

\subsection{Fixed-Time Critic Update}

The critic is updated from the Bellman--Isaacs residuals rather than from a pointwise model-based HJI error. This is crucial because the pointwise HJI equation contains the unknown drift \(F_i^\chi(\xi_i)\), whereas the integral residuals \(\delta_i\) and \(\delta_{ik}\) are computable from finite-window data and recorded samples. To expose the learning structure, define
\begin{equation}
    \mu_i=1+\Delta\phi_i^\top\Delta\phi_i,\quad
    \mu_{ik}=1+\Delta\phi_{ik}^\top\Delta\phi_{ik}.
    \label{eq:irl_mu_def}
\end{equation}
Consider the residual objective
\begin{equation}
\begin{aligned}
    \mathcal E_i(\hat W_i)
    &=
    \frac{|\delta_i|^{p_i+1}}{(p_i+1)\mu_i}
    +
    \frac{|\delta_i|^{q_i+1}}{(q_i+1)\mu_i}                 \\
    &\quad
    +
    \sum_{k=1}^{M_i}
    \left(
    \frac{|\delta_{ik}|^{p_i+1}}{(p_i+1)\mu_{ik}}
    +
    \frac{|\delta_{ik}|^{q_i+1}}{(q_i+1)\mu_{ik}}
    \right),
\end{aligned}
\label{eq:irl_residual_objective_new}
\end{equation}
where
\begin{equation}
    0<p_i<1,\quad q_i>1 .
    \label{eq:irl_pq_new}
\end{equation}
The exponent \(p_i\) increases the learning action near the residual origin, while \(q_i\) prevents slow learning when the residual is large. Thus, the residual objective is shaped in the same two-power spirit as a fixed-time Lyapunov construction.

Along the approximate saddle-point trajectory induced by
\(\hat u_{ci}\), \(\hat\varpi_i\), and
\(\hat\varpi_j\), the local closed-loop
dynamics are
\(\dot\chi_i=
F_i^\chi(\xi_i)
+\mathbf G_{ii}^\chi(x_i)\hat u_{ci}
+\mathbf G_{i\mathcal N_i}^\chi(x_{\mathcal N_i})u_{c\mathcal N_i}
+\mathbf B_i^\chi(x_i)\hat\varpi_i
+\sum_{j\in\mathcal N_i}\mathbf B_{ij}^\chi(x_j)\hat\varpi_j\).
Define the corresponding approximate Hamiltonian as
\(\widehat{\mathcal H}_i=
\hat r_i+
\hat W_i^\top\nabla\phi_i(\chi_i)
\big[
F_i^\chi(\xi_i)
+\mathbf G_{ii}^\chi(x_i)\hat u_{ci}
+\mathbf G_{i\mathcal N_i}^\chi(x_{\mathcal N_i})u_{c\mathcal N_i}
+\mathbf B_i^\chi(x_i)\hat\varpi_i
+\sum_{j\in\mathcal N_i}\mathbf B_{ij}^\chi(x_j)\hat\varpi_j
\big]\).
By construction of the approximate saddle policies in
\eqref{eq:approx_secure_policy}--\eqref{eq:approx_adversary_policy},
\(\hat u_{ci}
=
-\bar{\mathbf U}_i
\tanh\!\big[
\frac{1}{2}
\mathbf R_i^{-1}
\bar{\mathbf U}_i^{-1}
\mathbf G_{ii}^{\chi\top}(x_i)
\nabla\phi_i^\top(\chi_i)\hat W_i
\big]\).
Hence,
\(\tanh^{-1}\!\big(
\bar{\mathbf U}_i^{-1}\hat u_{ci}
\big)
=
-\frac{1}{2}
\mathbf R_i^{-1}
\bar{\mathbf U}_i^{-1}
\mathbf G_{ii}^{\chi\top}(x_i)
\nabla\phi_i^\top(\chi_i)\hat W_i\).
Since
\(\partial\mathcal U_i(u_{ci})/\partial u_{ci}
=
2\bar{\mathbf U}_i\mathbf R_i
\tanh^{-1}\!\big(
\bar{\mathbf U}_i^{-1}u_{ci}
\big)\),
it follows that
\(\partial\mathcal U_i(\hat u_{ci})/\partial\hat u_{ci}
=
2\bar{\mathbf U}_i\mathbf R_i
\tanh^{-1}\!\big(
\bar{\mathbf U}_i^{-1}\hat u_{ci}
\big)
=
-\bar{\mathbf U}_i\mathbf R_i
\mathbf R_i^{-1}\bar{\mathbf U}_i^{-1}
\mathbf G_{ii}^{\chi\top}(x_i)
\nabla\phi_i^\top(\chi_i)\hat W_i
=
-\mathbf G_{ii}^{\chi\top}(x_i)
\nabla\phi_i^\top(\chi_i)\hat W_i\),
where the last equality follows since
\(\bar{\mathbf U}_i\) and \(\mathbf R_i\) are diagonal.
Therefore,
\(\mathbf G_{ii}^{\chi\top}(x_i)
\nabla\phi_i^\top(\chi_i)\hat W_i
+\partial\mathcal U_i(\hat u_{ci})/\partial\hat u_{ci}
=0\).
Likewise, from
\(\hat\varpi_i=
\frac{1}{2\gamma_i^2}
\mathbf T_{ii}^{-1}
\mathbf B_i^{\chi\top}(x_i)
\nabla\phi_i^\top(\chi_i)\hat W_i\),
one has
\(2\gamma_i^2\mathbf T_{ii}\hat\varpi_i
=
\mathbf B_i^{\chi\top}(x_i)
\nabla\phi_i^\top(\chi_i)\hat W_i\),
and hence
\(\mathbf B_i^{\chi\top}(x_i)
\nabla\phi_i^\top(\chi_i)\hat W_i
-2\gamma_i^2\mathbf T_{ii}\hat\varpi_i=0\).
Similarly, for every \(j\in\mathcal N_i\),
\(\hat\varpi_j=
\frac{1}{2\gamma_i^2}
\mathbf T_{ij}^{-1}
\mathbf B_{ij}^{\chi\top}(x_j)
\nabla\phi_i^\top(\chi_i)\hat W_i\)
implies
\(2\gamma_i^2\mathbf T_{ij}\hat\varpi_j
=
\mathbf B_{ij}^{\chi\top}(x_j)
\nabla\phi_i^\top(\chi_i)\hat W_i\),
so that
\(\mathbf B_{ij}^{\chi\top}(x_j)
\nabla\phi_i^\top(\chi_i)\hat W_i
-2\gamma_i^2\mathbf T_{ij}\hat\varpi_j=0\).

Hence, differentiating \(\widehat{\mathcal H}_i\) with respect to
\(\hat W_i\), while the neighboring secure-control coupling
\(u_{c\mathcal N_i}\) is fixed with respect to \(\hat W_i\), gives
\(\partial\widehat{\mathcal H}_i/\partial\hat W_i
=
\nabla\phi_i(\chi_i)
\big[
F_i^\chi(\xi_i)
+\mathbf G_{ii}^\chi(x_i)\hat u_{ci}
+\mathbf G_{i\mathcal N_i}^\chi(x_{\mathcal N_i})u_{c\mathcal N_i}
+\mathbf B_i^\chi(x_i)\hat\varpi_i
+\sum_{j\in\mathcal N_i}\mathbf B_{ij}^\chi(x_j)\hat\varpi_j
\big]
+
(\partial\hat u_{ci}/\partial\hat W_i)^\top
\big[
\mathbf G_{ii}^{\chi\top}(x_i)
\nabla\phi_i^\top(\chi_i)\hat W_i
+\partial\mathcal U_i(\hat u_{ci})/\partial\hat u_{ci}
\big]
+
(\partial\hat\varpi_i/\partial\hat W_i)^\top
\big[
\mathbf B_i^{\chi\top}(x_i)
\nabla\phi_i^\top(\chi_i)\hat W_i
-2\gamma_i^2\mathbf T_{ii}\hat\varpi_i
\big]
+
\sum_{j\in\mathcal N_i}
(\partial\hat\varpi_j/\partial\hat W_i)^\top
\big[
\mathbf B_{ij}^{\chi\top}(x_j)
\nabla\phi_i^\top(\chi_i)\hat W_i
-2\gamma_i^2\mathbf T_{ij}\hat\varpi_j
\big]\).
The three bracketed terms are therefore identically zero, and thus
\(\partial\widehat{\mathcal H}_i/\partial\hat W_i
=
\nabla\phi_i(\chi_i)
\big[
F_i^\chi(\xi_i)
+\mathbf G_{ii}^\chi(x_i)\hat u_{ci}
+\mathbf G_{i\mathcal N_i}^\chi(x_{\mathcal N_i})u_{c\mathcal N_i}
+\mathbf B_i^\chi(x_i)\hat\varpi_i
+\sum_{j\in\mathcal N_i}\mathbf B_{ij}^\chi(x_j)\hat\varpi_j
\big]
=
\nabla\phi_i(\chi_i)\dot\chi_i\).
Consequently,
\(\partial\delta_i/\partial\hat W_i
=
\int_{t-T_i}^{t}
\nabla\phi_i(\chi_i(\tau))\dot\chi_i(\tau)\,d\tau
=
\phi_i(\chi_i(t))
-\phi_i(\chi_i(t-T_i))
=
\Delta\phi_i\).
Moreover, since
\((\Delta\phi_{ik},\rho_{ik})\) are fixed recorded quantities in
\(\mathcal D_i\) during replay,
\(\partial\delta_{ik}/\partial\hat W_i
=
\Delta\phi_{ik}\).
Thus,
\begin{equation}
    \frac{\partial\delta_i}{\partial\hat W_i}
    =
    \Delta\phi_i,\quad
    \frac{\partial\delta_{ik}}{\partial\hat W_i}
    =
    \Delta\phi_{ik}.
    \label{eq:irl_semigradient_new}
\end{equation}

Using
\[
    \frac{d}{ds}|s|^{r+1}
    =
    (r+1)|s|^r\operatorname{sgn}(s),
    \quad r>0,
\]
one obtains
\begin{equation}
\begin{aligned}
    \frac{\partial\mathcal E_i}{\partial\hat W_i}
    &=
    \psi_i
    \Big[
    |\delta_i|^{p_i}\operatorname{sgn}(\delta_i)
    +
    |\delta_i|^{q_i}\operatorname{sgn}(\delta_i)
    \Big]                                                   \\
    &\quad
    +
    \sum_{k=1}^{M_i}\psi_{ik}
    \Big[
    |\delta_{ik}|^{p_i}\operatorname{sgn}(\delta_{ik})
    +
    |\delta_{ik}|^{q_i}\operatorname{sgn}(\delta_{ik})
    \Big],
\end{aligned}
\label{eq:irl_objective_gradient_basic}
\end{equation}
where \(\psi_i=\Delta\phi_i/\mu_i\) and \(\psi_{ik}=\Delta\phi_{ik}/\mu_{ik}\). The learning gains are then inserted to tune the contribution of the current and recorded residuals. For compact notation, define
\begin{equation}
\begin{aligned}
    \mathfrak s_i
    &=
    k_{i1}|\delta_i|^{p_i}\operatorname{sgn}(\delta_i)
    +
    k_{i2}|\delta_i|^{q_i}\operatorname{sgn}(\delta_i),       \\
    \mathfrak s_{ik}
    &=
    k_{i1}^r|\delta_{ik}|^{p_i}\operatorname{sgn}(\delta_{ik})
    +
    k_{i2}^r|\delta_{ik}|^{q_i}\operatorname{sgn}(\delta_{ik}).
\end{aligned}
\label{eq:irl_residual_power_terms}
\end{equation}
The fixed-time critic update with experience replay and leakage is selected as
\begin{equation}
    \dot{\hat W}_i
    =
    -\mathbf\Gamma_i
    \left(
    \psi_i\mathfrak s_i
    +
    \sum_{k=1}^{M_i}\psi_{ik}\mathfrak s_{ik}
    \right)
    -
    \sigma_i\mathbf\Gamma_i\hat W_i .
    \label{eq:irl_update_new}
\end{equation}
Here
\begin{equation}
    \mathbf\Gamma_i=\mathbf\Gamma_i^\top>0,\quad
    \sigma_i>0,\quad
    k_{i1},k_{i2},k_{i1}^r,k_{i2}^r>0 .
    \label{eq:irl_gain_new}
\end{equation}

\begin{remark}
For each follower \(i\), fixed-time convergence of the critic weights
is not required as an independent physical objective. It is imposed here because the learned local control policy \(\hat u_i\) depends directly on the critic through
\(\nabla_{\boldsymbol{\chi}_i}\hat V_i(\boldsymbol{\chi}_i)\).
Hence, the critic error \(\tilde{\mathbf W}_i\) enters the closed-loop stability analysis through the local policy-approximation error. Driving \(\tilde{\mathbf W}_i\) into a bounded residual set within an initial-condition-independent time prevents an arbitrarily long learning transient from propagating into the closed-loop convergence bound and thereby supports the subsequent practical fixed-time stability result for the multi-agent system.
\end{remark}

\begin{lemma}[Persistent excitation]
\label{lem:pe_condition}
The normalized regressor \(\psi_i(t)\) is persistently exciting if there exist constants \(T_{Pi}>0\) and \(\alpha_{Pi}>0\) such that
\begin{equation}
    \int_{t}^{t+T_{Pi}}
    \psi_i(\tau)\psi_i^\top(\tau)d\tau
    \ge
    \alpha_{Pi}\mathbf I_{N_i},
    \quad \forall t\ge0 .
    \label{eq:pe_condition}
\end{equation}
This condition requires the closed-loop trajectory to keep generating informative data over every future time window.
\end{lemma}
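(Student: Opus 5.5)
Since the statement is essentially a definition of persistent excitation for $\psi_i$, the argument has two parts. First, I will check that the condition \eqref{eq:pe_condition} is well posed along closed-loop trajectories. Second, I will rewrite it in a form that shows what ``informative data over every future window'' means.

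\emph{Well-posedness.} By Assumption~\ref{ass:value_regular}, each Gaussian basis \eqref{eq:rbf_basis_function} is smooth. The trajectory $\chi_i(\cdot)$ is locally absolutely continuous, since it solves \eqref{eq:hji_local_dynamics_game} with locally Lipschitz data (Assumption~\ref{ass:regularity}) and bounded inputs. Hence $\Delta\phi_i(t)$ in \eqref{eq:delta_phi_hji} is continuous in $t$, and so is $\psi_i(t)=\Delta\phi_i/(1+\Delta\phi_i^\top\Delta\phi_i)$. The elementary inequality $s/(1+s^2)\le 1/2$ for $s\ge0$ then gives $\|\psi_i(t)\|\le 1/2$ for all $t$. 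Consequently $\psi_i\psi_i^\top$ is a bounded, continuous, positive semidefinite matrix function, and the integral in \eqref{eq:pe_condition} exists on every window. It also automatically satisfies the upper bound $\int_t^{t+T_{Pi}}\psi_i\psi_i^\top d\tau\le (T_{Pi}/4)\mathbf I_{N_i}$. Only the lower bound is a genuine requirement, which is why the statement imposes only that inequality.

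\emph{Scalar characterization.} I will show that \eqref{eq:pe_condition} is equivalent to
\[
\inf_{t\ge0}\ \min_{\|v\|=1}\int_t^{t+T_{Pi}}\bigl(v^\top\psi_i(\tau)\bigr)^2d\tau\ \ge\ \alpha_{Pi},
\]
which follows directly from the Rayleigh-quotient characterization of $\lambda_{\min}$. In words, no fixed direction of weight space $\mathbb R^{N_i}$ may be orthogonal to the regressor over an entire window. Since $\psi_i$ is driven by the closed loop, this is a property of the trajectory itself and must hold uniformly for all $t\ge0$. This justifies the closing sentence of the statement.

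The only subtle point, and the one I would stress, concerns what happens as the closed loop converges. As $\chi_i$ approaches the terminal ball, $\Delta\phi_i\to0$, so the integral in \eqref{eq:pe_condition} tends to zero and no $\alpha_{Pi}>0$ can persist. I would make this precise with the continuity and compactness estimate $\|\Delta\phi_i\|\le \sup\|\nabla\phi_i\|\cdot\|\chi_i(t)-\chi_i(t-T_i)\|$. This shows that the condition is incompatible with convergence, and therefore motivates replacing it by the rank condition on the recorded stack $\mathcal D_i$.
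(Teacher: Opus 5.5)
The statement is really a definition: the paper labels it a lemma but gives no proof. Its only role is to be set aside in the subsequent remark in favour of the initial-excitation condition on $\mathcal D_i$, which is what the proof of Theorem~\ref{thm:critic_fixed_time} actually uses. Your first two parts are correct and go beyond anything the paper does. The bound $\|\psi_i\|\le 1/2$ follows from $s/(1+s^2)\le 1/2$. The resulting automatic upper bound $(T_{Pi}/4)\mathbf I_{N_i}$ explains why only the lower bound is imposed. The Rayleigh-quotient reformulation is exactly the right reading of ``informative data over every window''. One small point: $\Delta\phi_i(t)$ requires $\chi_i(t-T_i)$, so the condition is meaningful only for $t\ge T_i$ (or with a prescribed pre-history), not literally for all $t\ge0$.

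Your final paragraph, however, claims more than your estimate delivers. The bound $\|\Delta\phi_i(t)\|\le\sup\|\nabla\phi_i\|\,\|\chi_i(t)-\chi_i(t-T_i)\|$ forces $\Delta\phi_i\to0$ only if $\chi_i(t)-\chi_i(t-T_i)\to0$, for instance under exact asymptotic convergence. Merely approaching or entering a terminal ball does not give this. The paper proves only practical fixed-time convergence to a residual set. The size of that set is fixed by the disturbance/attack bound, the leakage term $\tfrac{\sigma_i}{2}\|W_i^\ast\|^2$ and the approximation residuals. With persistent disturbances and FDI signals (as in the simulation), $\chi_i$ can keep moving inside that set.

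What your estimate does give is a cap. Suppose $\|\chi_i(\tau)\|\le r$ on $[t-T_i,t+T_{Pi}]$. Since $\mu_i\ge1$, you get $\|\psi_i\|\le\|\Delta\phi_i\|\le 2rL_\phi$, where $L_\phi:=\sup_{\chi}\|\nabla\phi_i(\chi)\|<\infty$ for Gaussian bases. Hence every admissible constant must satisfy $\alpha_{Pi}\le 4T_{Pi}r^2L_\phi^2$. So PE degrades near the formation and fails outright under exact convergence; this is the sense in which the paper calls it restrictive and impractical. It is not, however, incompatible with the practical convergence the paper actually proves. Since the statement is a definition, this does not affect it, but you should weaken that remark accordingly.
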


\begin{lemma}[Initial excitation]
\label{lem:ie_condition}
The recorded stack \(\mathcal D_i\) is initially exciting if, after a finite learning interval, its normalized regressors satisfy
\begin{equation}
    \mathbf\Psi_i
    :=
    \sum_{k=1}^{M_i}
    \psi_{ik}\psi_{ik}^\top
    \ge
    \lambda_i\mathbf I_{N_i},
    \quad \lambda_i>0 .
    \label{eq:initial_excitation_stack}
\end{equation}
Equivalently, for any \(z\in\mathbb R^{N_i}\),
\begin{equation}
    \sum_{k=1}^{M_i}
    |\psi_{ik}^\top z|^2
    =
    z^\top\mathbf\Psi_i z
    \ge
    \lambda_i\|z\|^2 .
    \label{eq:initial_excitation_coercive}
\end{equation}
Thus, the stored data are sufficiently rich to identify the critic direction, even if the future trajectory is no longer exciting.
\end{lemma}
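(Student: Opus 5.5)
The statement is Lemma~\ref{lem:ie_condition}, which is really a definition together with an equivalence. The only content to establish is that the matrix inequality \eqref{eq:initial_excitation_stack} is equivalent to the coercivity bound \eqref{eq:initial_excitation_coercive}. I will argue this directly from the definition of the Loewner order, using the rank-one structure of \(\mathbf\Psi_i\).

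The first step is a pure algebraic identity. Fix any \(z\in\mathbb R^{N_i}\). Since each \(\psi_{ik}\) is a column vector, \(z^\top\psi_{ik}\psi_{ik}^\top z=(\psi_{ik}^\top z)^2=|\psi_{ik}^\top z|^2\). Summing over \(k=1,\ldots,M_i\) and using linearity gives
\[
z^\top\mathbf\Psi_i z=\sum_{k=1}^{M_i}|\psi_{ik}^\top z|^2 .
\]
This holds for every \(z\), with no excitation hypothesis. It also shows that \(\mathbf\Psi_i\) is symmetric positive semidefinite.

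The second step is the equivalence itself. By definition, \(\mathbf\Psi_i\ge\lambda_i\mathbf I_{N_i}\) means that \(\mathbf\Psi_i-\lambda_i\mathbf I_{N_i}\ge0\), that is, \(z^\top(\mathbf\Psi_i-\lambda_i\mathbf I_{N_i})z\ge0\) for all \(z\). This is exactly \(z^\top\mathbf\Psi_i z\ge\lambda_i\|z\|^2\), and combining it with the identity from the first step yields \eqref{eq:initial_excitation_coercive}. The converse follows by reading the same chain backwards. As a cross-check, since \(\mathbf\Psi_i\) is symmetric, the Rayleigh quotient shows the condition is also equivalent to \(\lambda_{\min}(\mathbf\Psi_i)\ge\lambda_i\).

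I would add one remark on the interpretive sentence, namely that the stored data are rich enough to identify the critic direction. Whenever \(z\neq0\), the coercivity bound forces \(\psi_{ik}^\top z\neq0\) for at least one \(k\). Hence no nonzero weight direction is invisible to every recorded regressor, and the recorded residuals \(\delta_{ik}\) are sensitive to every component of \(\hat W_i\). A necessary condition for this is \(M_i\ge N_i\), because \(\mathbf\Psi_i\) is a sum of \(M_i\) rank-one matrices and so has rank at most \(M_i\).

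There is no substantive obstacle. The only care needed is to note that the recorded pairs \((\Delta\phi_{ik},\rho_{ik})\) are frozen after the finite learning interval. This makes \(\mathbf\Psi_i\) a constant matrix, so the bound holds uniformly for all later times, which is the property the subsequent critic-convergence analysis will use.
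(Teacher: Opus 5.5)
Your proposal is correct and matches the paper. The paper gives no separate proof of Lemma~\ref{lem:ie_condition} and treats it as a definition; the only computation it relies on is the identity \(\sum_{k=1}^{M_i}|\psi_{ik}^\top z|^2=z^\top\mathbf\Psi_i z\ge\lambda_i\|z\|^2\), applied with \(z=\tilde W_i\) in the proof of Theorem~\ref{thm:critic_fixed_time}, which is exactly your rank-one expansion combined with the definition of the Loewner order (and your extra remarks on \(\lambda_{\min}(\mathbf\Psi_i)\), the necessity of \(M_i\ge N_i\), and the constancy of \(\mathbf\Psi_i\) once the stack is frozen are all correct).
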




\begin{remark}
Since persistent excitation in Lemma~\ref{lem:pe_condition} is restrictive and generally impractical for formation control, experience replay is employed so that the initial-excitation condition in Lemma~\ref{lem:ie_condition} is sufficient. Once \eqref{eq:initial_excitation_stack} holds, the replay term in \eqref{eq:irl_update_new} preserves the required excitation for critic learning.
\end{remark}

\section{Stability Analysis}
\label{sec:stability_analysis}

This section proves the stability properties of the proposed learning-based secure formation controller.

\begin{assumption}[Learning-gain matrix]
\label{ass:critic_gain}
For each follower \(i\), the learning-gain matrix
\(\mathbf\Gamma_i\) is constant, symmetric, and positive definite.
Define
\begin{equation}
    \underline\lambda_{\Gamma_i}
    :=
    \lambda_{\min}(\mathbf\Gamma_i),
    \qquad
    \overline\lambda_{\Gamma_i}
    :=
    \lambda_{\max}(\mathbf\Gamma_i).
    \label{eq:critic_gain_eigenvalues}
\end{equation}
Then,
\begin{equation}
    0
    <
    \underline\lambda_{\Gamma_i}\mathbf I
    \preceq
    \mathbf\Gamma_i
    \preceq
    \overline\lambda_{\Gamma_i}\mathbf I.
    \label{eq:critic_gain_bounds}
\end{equation}
\end{assumption}

\begin{theorem}[Fixed-time practical convergence of the critic]
\label{thm:critic_fixed_time}
Consider follower \(i\), and suppose that
Assumptions~\ref{assk}--\ref{ass:critic_gain} hold.
Let the critic weights be updated according to
\eqref{eq:irl_update_new}, and define the critic weight-estimation
error as
\(
\tilde W_i
:=
\hat W_i-W_i^\ast.
\)
Then, \(\tilde W_i\) is fixed-time practically stable.

More precisely, consider the Lyapunov function
\(
V_{Wi}
=
\frac{1}{2}
\tilde W_i^{\top}
\mathbf\Gamma_i^{-1}
\tilde W_i.
\)
Under Assumption~\ref{ass:critic_gain}, it satisfies
\begin{equation}
    \frac{1}{2\overline\lambda_{\Gamma_i}}
    \left\|\tilde W_i\right\|^2
    \leq
    V_{Wi}
    \leq
    \frac{1}{2\underline\lambda_{\Gamma_i}}
    \left\|\tilde W_i\right\|^2.
    \label{eq:critic_lyapunov_bounds}
\end{equation}
Furthermore, there exist constants
\(h_{i1}>0\), \(h_{i2}>0\), and \(\Delta_{Wi}\geq0\) such that
\begin{equation}
    \dot V_{Wi}
    \leq
    -h_{i1}V_{Wi}^{\rho_{Wi}}
    -h_{i2}V_{Wi}^{\theta_{Wi}}
    +\Delta_{Wi},
    \label{eq:critic_fixed_time_ineq}
\end{equation}
where
\begin{equation}
    \rho_{Wi}
    :=
    \frac{p_i+1}{2}
    \in(0,1),
    \qquad
    \theta_{Wi}
    :=
    \frac{q_i+1}{2}
    >1.
    \label{eq:critic_rho_theta}
\end{equation}
By Theorem \ref{thm:value_fixed_time_condition}, consequently, the critic weight-estimation error reaches the compact
residual set
\(
\Omega_{Wi}
:=
\left\{
\tilde W_i:
h_{i1}V_{Wi}^{\rho_{Wi}}
+
h_{i2}V_{Wi}^{\theta_{Wi}}
\leq
\Delta_{Wi}
\right\}
\)
within the fixed-time bound
\begin{equation}
    T_{Wi}
    \leq
    \frac{1}{h_{i1}(1-\rho_{Wi})}
    +
    \frac{1}{h_{i2}(\theta_{Wi}-1)},
    \label{eq:critic_settling_time}
\end{equation}
which is independent of
\(\tilde W_i(0)\).
\end{theorem}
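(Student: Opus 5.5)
The plan is to turn the critic update \eqref{eq:irl_update_new} into an error equation for \(\tilde W_i\), and to use the initial-excitation stack of Lemma~\ref{lem:ie_condition} to get the two power terms in \eqref{eq:critic_fixed_time_ineq}. All remaining contributions go into \(\Delta_{Wi}\). The bounds \eqref{eq:critic_lyapunov_bounds} follow at once from Assumption~\ref{ass:critic_gain} and the Rayleigh quotient for \(\mathbf\Gamma_i^{-1}\).

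\textbf{Step 1: write the residuals in terms of \(\tilde W_i\).} Subtract the integral identity \eqref{eq:integral_hji_identity} with \(V_i^\ast=W_i^{\ast\top}\phi_i+\varepsilon_i\) from \eqref{eq:online_bellman_isaacs_residual}. This gives
\[
\delta_i=\tilde W_i^\top\Delta\phi_i+e_i,\qquad \delta_{ik}=\tilde W_i^\top\Delta\phi_{ik}+e_{ik}.
\]
Here \(e_i,e_{ik}\) collect three things: the increment \(-[\varepsilon_i(\chi_i(t))-\varepsilon_i(\chi_i(t-T_i))]\), and the integrated mismatch between \(\hat r_i\) and the optimal running cost. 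By \eqref{eq:rbf_approx_error_bound}, Gaussian RBFs with \(\|\phi_i\|\le\sqrt{N_i}\), the bound \(\tanh\le1\) on \(\mathcal U_i\), and compactness of \(\Omega_i^r\), I will show \(|e_i|,|e_{ik}|\le\bar e_i\).

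\textbf{Step 2: differentiate the Lyapunov function.} Then \(\dot V_{Wi}=\tilde W_i^\top\mathbf\Gamma_i^{-1}\dot{\hat W}_i\), which equals
\[
-\tilde W_i^\top\psi_i\mathfrak s_i-\sum_k\tilde W_i^\top\psi_{ik}\mathfrak s_{ik}-\sigma_i\tilde W_i^\top\hat W_i.
\]
Set \(a_{k}:=\tilde W_i^\top\Delta\phi_{ik}\), so that \(\tilde W_i^\top\psi_{ik}=a_k/\mu_{ik}\). The basic scalar tool is the elementary inequality, valid for \(r>0\),
\[
a\,|a+e|^{r}\operatorname{sgn}(a+e)\ge c_r|a|^{r+1}-C_r|e|^{r+1}.
\]
I would prove it by splitting into the cases \(|e|\le|a|/2\) and \(|e|>|a|/2\). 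Apply it with \(r=p_i\) and \(r=q_i\), and use \(\mu_{ik}\ge1\) together with the upper bound \(\mu_{ik}\le1+4N_i\). The replay sum is then bounded above by
\[
-c\sum_k\bigl(|\psi_{ik}^\top\tilde W_i|^{p_i+1}+|\psi_{ik}^\top\tilde W_i|^{q_i+1}\bigr)+C\bar e_i^{p_i+1}+C\bar e_i^{q_i+1}.
\]
The current-data term is treated the same way, except that its negative part is simply dropped. For the leakage term, Young's inequality gives
\[
-\sigma_i\tilde W_i^\top\hat W_i\le-\tfrac{\sigma_i}{2}\|\tilde W_i\|^2+\tfrac{\sigma_i}{2}\|W_i^\ast\|^2,
\]
and the negative quadratic part is discarded.

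\textbf{Step 3: recover the powers of \(V_{Wi}\).} Write \(b_k=|\psi_{ik}^\top\tilde W_i|\).
\begin{itemize}
\item Since \((p_i+1)/2<1\), subadditivity gives \(\sum_k b_k^{p_i+1}\ge(\sum_k b_k^2)^{(p_i+1)/2}\).
\item By the power-mean inequality, \(\sum_k b_k^{q_i+1}\ge M_i^{(1-q_i)/2}(\sum_k b_k^2)^{(q_i+1)/2}\).
\item By \eqref{eq:initial_excitation_coercive}, \(\sum_k b_k^2\ge\lambda_i\|\tilde W_i\|^2\), and by \eqref{eq:critic_lyapunov_bounds}, \(\|\tilde W_i\|^2\ge2\underline\lambda_{\Gamma_i}V_{Wi}\).
\end{itemize}
Together these give \eqref{eq:critic_fixed_time_ineq} with \(\rho_{Wi},\theta_{Wi}\) as in \eqref{eq:critic_rho_theta}, where \(h_{i1},h_{i2}\) are explicit in \(c,\lambda_i,M_i,\underline\lambda_{\Gamma_i}\) and \(\Delta_{Wi}\) collects the constants. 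Theorem~\ref{thm:value_fixed_time_condition} then yields the residual set \(\Omega_{Wi}\) and the settling-time bound \eqref{eq:critic_settling_time}.

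\textbf{Main obstacle.} The hard part is the uniform bound \(\bar e_i\) in Step 1. The mismatch inside \(\hat r_i\) depends on \(\hat W_i\), quadratically through the \(\gamma_i^2\hat\varpi^\top\mathbf T\hat\varpi\) terms. So \(e_i\) is bounded only if \(\hat W_i\) is already known to be bounded. I would first attempt an a priori boundedness argument from the leakage term, restricting to a compact sublevel set of \(V_{Wi}\) that is forward invariant. Failing that, I would bound the quadratic mismatch by \(c\|\tilde W_i\|^2+C\), absorb the \(\|\tilde W_i\|^2\) part into the \(-\tfrac{\sigma_i}{2}\|\tilde W_i\|^2\) leakage term by choosing \(\sigma_i\) large enough, and keep only the constant in \(\Delta_{Wi}\).
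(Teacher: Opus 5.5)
Your Steps~2--3 follow the paper's proof almost line by line: the Young bound on the leakage term, the scalar perturbation inequality of Lemma~\ref{lem:residual_perturbation}, dropping the online negative part, and the replay coercivity via \(\ell_p\)-monotonicity, the power-mean inequality, \eqref{eq:initial_excitation_coercive} and \(\|\tilde W_i\|^2\ge2\underline\lambda_{\Gamma_i}V_{Wi}\). Your case split on \(|e|\le|a|/2\) is a valid alternative to the paper's Young-type proof of that inequality.

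Two details need fixing. First, only \(\mu_{ik}\ge1\) is needed. It already gives \(\mu_{ik}^{r}|\psi_{ik}^\top\tilde W_i|^{r+1}\ge|\psi_{ik}^\top\tilde W_i|^{r+1}\) and \(|e_{ik}|^{r+1}/\mu_{ik}\le|e_{ik}|^{r+1}\), so the upper bound on \(\mu_{ik}\) plays no role. Second, in Step~1 you should not subtract \eqref{eq:integral_hji_identity}. That identity was derived along trajectories driven by the saddle inputs \(u_{ci}^\ast,\varpi_i^\ast,\varpi_j^\ast\). The data, however, come from \(\hat u_{ci}\) and the actual disturbance/FDI signals, so your description of \(e_i\) misses terms such as \(\int_{t-T_i}^{t}\nabla V_i^{\ast\top}\mathbf G_{ii}^\chi\tilde u_i\,d\tau\). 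The paper instead defines \(\varepsilon_{B_i}:=W_i^{\ast\top}\Delta\phi_i+\mathcal I_i\) (likewise \(\varepsilon_{B_{ik}}\)) and bounds it by the triangle inequality, \(|\varepsilon_{B_i}|\le2\|W_i^\ast\|\bar\phi_i+T_i\bar r_i\) (Lemma~\ref{lm:lemma7}). Only boundedness is used, so you lose nothing by doing the same. Your remark \(\|\phi_i\|\le\sqrt{N_i}\) even makes the \(\bar\phi_i\) bound global.

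The obstacle you flag is genuine, and it is exactly where the paper's own argument is thinnest. The paper takes \(\bar r_i\) from Lemma~\ref{lem:window_running_bound}, which uses only continuity of \(\hat W_i\) on each window \([t-T_i,t]\). Since \(\hat r_i\) contains \(-\gamma_i^2\hat\varpi_i^\top\mathbf T_{ii}\hat\varpi_i\) and its neighbor analogues, which are quadratic in \(\hat W_i\), that \(\bar r_i\) scales with \(\max_{\tau\in[t-T_i,t]}\|\hat W_i(\tau)\|^2\). Calling \(\Delta_{Wi}\) a constant independent of \(\tilde W_i(0)\) therefore tacitly presupposes a uniform bound on \(\hat W_i\). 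If you grant that, your proof coincides with the paper's.

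Your fallback cannot supply this bound. The residual does not enter \(\dot V_{Wi}\) linearly; it enters through \(c_{q_i}|e_i|^{q_i+1}\), or, if you estimate the adverse-sign online term directly, through \(k_{i2}|\varsigma_i|(2|e_i|)^{q_i}\). With \(|e_i|\le c\|\tilde W_i\|^2+C\), these terms can be of order \(\|\tilde W_i\|^{2q_i+1}\) or higher. Neither \(-\tfrac{\sigma_i}{2}\|\tilde W_i\|^2\) (for any \(\sigma_i\)) nor the replay term \(-\|\tilde W_i\|^{q_i+1}\) can dominate that.

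The same growth defeats the sublevel-set idea at large levels. On \(\{V_{Wi}\le c\}\) the residual grows like \(c^{q_i+1}\), while the decay on the boundary grows only like \(c^{\theta_{Wi}}=c^{(q_i+1)/2}\). You would obtain at most a regional result for \(V_{Wi}(0)\) below a threshold, not one uniform in \(\tilde W_i(0)\).

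The replay residuals are not the problem: \(\rho_{ik}\) is frozen once recorded, so \(\varepsilon_{B_{ik}}\) is a true constant. The difficulty lies entirely in the online residual \(\delta_i\). To make \(\Delta_{Wi}\) genuinely constant you need an extra ingredient. One option is a \(\mathbf\Gamma_i\)-weighted projection in \eqref{eq:irl_update_new} onto a known ball containing \(W_i^\ast\), which only adds a nonpositive term to \(\dot V_{Wi}\). The other is an explicit hypothesis that \(\hat W_i\) stays bounded, which is what the paper uses implicitly.
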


\begin{proof}

Since \(W_i^\ast\) is constant, one has
\(\dot{\tilde W}_i=\dot{\hat W}_i\). differentiate the Lyapunov expression in Theorem~\ref{thm:critic_fixed_time} and combine it with Eq.~\eqref{eq:irl_update_new}
\begin{equation}
\begin{aligned}
    \dot V_{Wi}
    &=
    \tilde W_i^\top\mathbf\Gamma_i^{-1}\dot{\hat W}_i  \\
    &=
    -\tilde W_i^\top\psi_i\mathfrak s_i
    -
    \sum_{k=1}^{M_i}
    \tilde W_i^\top\psi_{ik}\mathfrak s_{ik}
    -
    \sigma_i\tilde W_i^\top\hat W_i .
\end{aligned}
\label{eq:critic_vdot_start}
\end{equation}

Recall that \(\mathfrak{s}_i=k_{i1}|\delta_i|^{p_i}\operatorname{sgn}(\delta_i)+k_{i2}|\delta_i|^{q_i}\operatorname{sgn}(\delta_i)\) and \(\mathfrak{s}_{ik}=k_{i1}^{r}|\delta_{ik}|^{p_i}\operatorname{sgn}(\delta_{ik})+k_{i2}^{r}|\delta_{ik}|^{q_i}\operatorname{sgn}(\delta_{ik})\). Substituting these expressions into \eqref{eq:critic_vdot_start} and distributing the products yield \(\dot V_{Wi}=-\tilde W_i^\top\psi_i\big[k_{i1}|\delta_i|^{p_i}\operatorname{sgn}(\delta_i)+k_{i2}|\delta_i|^{q_i}\operatorname{sgn}(\delta_i)\big]-\sum_{k=1}^{M_i}\tilde W_i^\top\psi_{ik}\big[k_{i1}^{r}|\delta_{ik}|^{p_i}\operatorname{sgn}(\delta_{ik})+k_{i2}^{r}|\delta_{ik}|^{q_i}\operatorname{sgn}(\delta_{ik})\big]-\sigma_i\tilde W_i^\top\hat W_i=-k_{i1}\tilde W_i^\top\psi_i|\delta_i|^{p_i}\operatorname{sgn}(\delta_i)-k_{i2}\tilde W_i^\top\psi_i|\delta_i|^{q_i}\operatorname{sgn}(\delta_i)-k_{i1}^{r}\sum_{k=1}^{M_i}\tilde W_i^\top\psi_{ik}|\delta_{ik}|^{p_i}\operatorname{sgn}(\delta_{ik})-k_{i2}^{r}\sum_{k=1}^{M_i}\tilde W_i^\top\psi_{ik}|\delta_{ik}|^{q_i}\operatorname{sgn}(\delta_{ik})-\sigma_i\tilde W_i^\top\hat W_i\). For notational clarity, denote these scalar products by \(\tilde W_i^\top\psi_i=\psi_i^\top\tilde W_i=\varsigma_i\) and \(\tilde W_i^\top\psi_{ik}=\psi_{ik}^\top\tilde W_i=\varsigma_{ik}\). Consequently, \(\dot V_{Wi}=-k_{i1}\varsigma_i|\delta_i|^{p_i}\operatorname{sgn}(\delta_i)-k_{i2}\varsigma_i|\delta_i|^{q_i}\operatorname{sgn}(\delta_i)-k_{i1}^{r}\sum_{k=1}^{M_i}\varsigma_{ik}|\delta_{ik}|^{p_i}\operatorname{sgn}(\delta_{ik})-k_{i2}^{r}\sum_{k=1}^{M_i}\varsigma_{ik}|\delta_{ik}|^{q_i}\operatorname{sgn}(\delta_{ik})-\sigma_i\tilde W_i^\top\hat W_i\).

For subsequent reference, collect these definitions as \(\varsigma_i=\psi_i^\top\tilde W_i\) and \(\varsigma_{ik}=\psi_{ik}^\top\tilde W_i\). The quantities \(\varsigma_i\) and \(\varsigma_{ik}\) are therefore the normalized projections of the critic-weight error along the current and recorded regressors, respectively. Using \(\hat W_i=\tilde W_i+W_i^\ast\), the leakage contribution can be expanded as \(-\sigma_i\tilde W_i^\top\hat W_i=-\sigma_i\tilde W_i^\top\left(\tilde W_i+W_i^\ast\right)=-\sigma_i\|\tilde W_i\|^2-\sigma_i\tilde W_i^\top W_i^\ast\). By Young's inequality, \(-\tilde W_i^\top W_i^\ast\leq\left|\tilde W_i^\top W_i^\ast\right|\leq\frac{1}{2}\|\tilde W_i\|^2+\frac{1}{2}\|W_i^\ast\|^2\). Therefore, \(-\sigma_i\tilde W_i^\top\hat W_i\leq-\frac{\sigma_i}{2}\|\tilde W_i\|^2+\frac{\sigma_i}{2}\|W_i^\ast\|^2\). The next step makes explicit the part of each Bellman--Isaacs residual that is generated by the critic-weight error. To separate the contribution of the critic-weight error from the Bellman--Isaacs approximation residual, define \(y_i:=\Delta\phi_i^\top\tilde W_i\) and \(y_{ik}:=\Delta\phi_{ik}^\top\tilde W_i\). For consistency with the previously introduced online and recorded integral residual definitions, denote the online and recorded integral terms by \(\mathcal I_i(t):=\int_{t-T_i}^{t}\hat r_i(\tau)\,d\tau\) and \(\mathcal I_{ik}:=\rho_{ik}\).

\begin{lemma}[Window-wise bounded approximate running term]
\label{lem:window_running_bound}
With 
the critic weight \(\hat W_i\) evolves according to the update law in
Eq.~\eqref{eq:irl_update_new}. Then, for every fixed \(t\ge T_i\), there exists a finite
constant \(\bar r_i>0\) such that
\(|\hat r_i(\tau)|\le\bar r_i\) for all
\(\tau\in[t-T_i,t]\). The constant \(\bar r_i\) can be enlarged, if
necessary, to bound the approximate running term over the finite
recorded windows in \(\mathcal D_i\) as well.
\end{lemma}

\begin{proof}
Denote the right-hand side of Eq.~\eqref{eq:irl_update_new} by
\(\mathcal G_{W_i}:=
-\Gamma_i\big(\psi_i\zeta_i+
\sum_{k=1}^{M_i}\psi_{ik}\zeta_{ik}\big)
-\sigma_i\Gamma_i\hat W_i\),
so that
\(\dot{\hat W}_i=\mathcal G_{W_i}\).
Under the stated regularity of the basis functions, normalized
regressors, and signed-power residual mappings,
\(\mathcal G_{W_i}\) is locally integrable along the maximal
Carath\'eodory solution; hence
\(\hat W_i(t)=\hat W_i(t_0)+
\int_{t_0}^{t}\mathcal G_{W_i}(s)\,ds\),
which implies
\(\hat W_i\in AC_{\mathrm{loc}}\subset C^0\).
Therefore, for every fixed \(t\ge T_i\), continuity on the compact
interval \([t-T_i,t]\) and the Weierstrass theorem give
\(\max_{\tau\in[t-T_i,t]}\|\hat W_i(\tau)\|<\infty\). From \eqref{eq:approx_secure_policy}, the saturation mapping gives
\(\|\hat u_{ci}(\tau)\|\le\bar u_i\) on
\([t-T_i,t]\). Moreover,
\eqref{eq:approx_adversary_policy}, compactness of
\(\Omega_i^{r}\), and continuity of
\(\bm B_i^{\chi}\) and \(\nabla\phi_i\) imply finite bounds for
\(\hat\varpi_i\) and the neighboring adversarial terms on the same
interval. Since \(Q_{ii}\), \(\mathcal U_i\), and
\(\mathcal U_{ij}\) are continuous on their corresponding compact
domains, every term in \(\hat r_i\) defined by
\eqref{eq:approx_running_term} is continuous on
\([t-T_i,t]\). Hence,
\(\hat r_i\in C([t-T_i,t])\), and the Weierstrass theorem yields
\(\bar r_i:=
\max_{\tau\in[t-T_i,t]}|\hat r_i(\tau)|<\infty\).
Thus,
\(|\hat r_i(\tau)|\le\bar r_i\) for all
\(\tau\in[t-T_i,t]\).
Since \(\mathcal D_i\) contains only finitely many recorded windows,
\(\bar r_i\) may be enlarged, if necessary, to dominate the
corresponding finite window-wise maxima over all stored data.
\end{proof}

Using \(\hat W_i=W_i^\ast+\tilde W_i\), the online Bellman--Isaacs
residual can be decomposed as
\(\delta_i=\hat W_i^\top\Delta\phi_i+\mathcal I_i
=\left(W_i^\ast+\tilde W_i\right)^\top\Delta\phi_i+\mathcal I_i
=\Delta\phi_i^\top\tilde W_i+
\left(W_i^{\ast\top}\Delta\phi_i+\mathcal I_i\right)\).
Define
\(\varepsilon_{B_i}:=
W_i^{\ast\top}\Delta\phi_i+\mathcal I_i\).
It follows that
\(\delta_i=y_i+\varepsilon_{B_i}\).

Similarly, for the \(k\)-th recorded datum,
\(\delta_{ik}=\hat W_i^\top\Delta\phi_{ik}+\mathcal I_{ik}
=\left(W_i^\ast+\tilde W_i\right)^\top\Delta\phi_{ik}+\mathcal I_{ik}
=\Delta\phi_{ik}^\top\tilde W_i+
\left(W_i^{\ast\top}\Delta\phi_{ik}+\mathcal I_{ik}\right)\).
Define
\(\varepsilon_{B_{ik}}:=
W_i^{\ast\top}\Delta\phi_{ik}+\mathcal I_{ik}\).
Consequently,
\(\delta_i=y_i+\varepsilon_{B_i},\qquad
\delta_{ik}=y_{ik}+\varepsilon_{B_{ik}}\).
\label{eq:critic_delta_y}

Moreover, since
\(\psi_i=\Delta\phi_i/\mu_i\) and
\(\psi_{ik}=\Delta\phi_{ik}/\mu_{ik}\), one has
\(y_i=\mu_i\varsigma_i,\qquad
y_{ik}=\mu_{ik}\varsigma_{ik},\qquad
\mu_i\geq1,\qquad
\mu_{ik}\geq1\).
\label{eq:107}

\begin{lemma}[Residual perturbation inequality]
\label{lem:residual_perturbation}
For any \(r>0\), there exist constants
\(\vartheta_r\in(0,1)\) and \(c_r>0\) such that, for all
\(s,e\in\mathbb R\),
\begin{equation}
    s|s+e|^r\operatorname{sgn}(s+e)
    \ge
    \vartheta_r|s|^{r+1}
    -
    c_r|e|^{r+1}.
    \label{eq:residual_perturbation_ineq}
\end{equation}
\end{lemma}

\begin{proof} Let $z:=s+e$. Then $s=z-e$, and hence $s|s+e|^r\operatorname{sgn}(s+e)=|z|^{r+1}-e|z|^r\operatorname{sgn}(z)\geq |z|^{r+1}-|e||z|^r$. By Young's inequality with the conjugate exponents $(r+1)/r$ and $r+1$, one has $|e||z|^r\leq \frac{1}{2}|z|^{r+1}+c_r^{0}|e|^{r+1}$, where $c_r^{0}:=\frac{1}{r+1}\left(\frac{2r}{r+1}\right)^r>0$. Therefore, $s|s+e|^r\operatorname{sgn}(s+e)\geq \frac{1}{2}|z|^{r+1}-c_r^{0}|e|^{r+1}$. Moreover, since $r+1>1$, the convex power-sum inequality gives $|s|^{r+1}=|z-e|^{r+1}\leq 2^r\left(|z|^{r+1}+|e|^{r+1}\right)$, and thus $|z|^{r+1}\geq 2^{-r}|s|^{r+1}-|e|^{r+1}$. Combining these inequalities yields $s|s+e|^r\operatorname{sgn}(s+e)\geq 2^{-(r+1)}|s|^{r+1}-\left(\frac{1}{2}+c_r^{0}\right)|e|^{r+1}$. Hence, the claimed inequality holds with the admissible constants $\vartheta_r:=2^{-(r+1)}\in(0,1)$ and $c_r:=\frac{1}{2}+\frac{1}{r+1}\left(\frac{2r}{r+1}\right)^r>0$. \end{proof}


Applying Lemma~\ref{lem:residual_perturbation} with
\(s=y_i\) and \(e=\varepsilon_{B_i}\), and using
\(\delta_i=y_i+\varepsilon_{B_i}\) together with
\(y_i=\mu_i\varsigma_i\), one obtains
\begin{align}
\varsigma_i|\delta_i|^r\operatorname{sgn}(\delta_i)
&=
\frac{1}{\mu_i}
y_i|y_i+\varepsilon_{B_i}|^r
\operatorname{sgn}(y_i+\varepsilon_{B_i})
\nonumber\\
&\geq
\frac{\vartheta_r}{\mu_i}|y_i|^{r+1}
-
\frac{c_r}{\mu_i}|\varepsilon_{B_i}|^{r+1}
\nonumber\\
&=
\vartheta_r\mu_i^r|\varsigma_i|^{r+1}
-
\frac{c_r}{\mu_i}|\varepsilon_{B_i}|^{r+1}
\nonumber\\
&\geq
\vartheta_r|\varsigma_i|^{r+1}
-
c_r|\varepsilon_{B_i}|^{r+1},
\label{eq:new}
\end{align}
where the last inequality follows from
\(\mu_i\geq1\) and \(r>0\).

Similarly,
\begin{equation}
    \varsigma_{ik}|\delta_{ik}|^{r}\operatorname{sgn}(\delta_{ik})
    \ge
    \vartheta_r|\varsigma_{ik}|^{r+1}
    -
    c_r|\varepsilon_{B_{ik}}|^{r+1}.
    \label{eq:critic_recorded_scalar_bound}
\end{equation}
Substituting \eqref{eq:new} and
\eqref{eq:critic_recorded_scalar_bound} into
\eqref{eq:critic_vdot_start}, with \(r=p_i\) and \(r=q_i\), yields
\begin{equation}
\begin{aligned}
    \dot V_{Wi}
    &\le
    -\vartheta_{p_i}k_{i1}|\varsigma_i|^{p_i+1}
    -\vartheta_{q_i}k_{i2}|\varsigma_i|^{q_i+1}                 \\
    &\quad
    -\vartheta_{p_i}k_{i1}^r
    \sum_{k=1}^{M_i}|\varsigma_{ik}|^{p_i+1}
    -\vartheta_{q_i}k_{i2}^r
    \sum_{k=1}^{M_i}|\varsigma_{ik}|^{q_i+1}                    \\
    &\quad
    -\frac{\sigma_i}{2}\|\tilde W_i\|^2
    +
    \Delta_{B_i},
\end{aligned}
\label{eq:critic_vdot_with_xi}
\end{equation}
where
\begin{equation}
\begin{aligned}
    \Delta_{B_i}
    &=
    c_{p_i}
    \left(k_{i1}+M_i k_{i1}^r\right)
    \left(\bar\varepsilon_{B_i}\right)^{p_i+1}              \\
    &\quad
    +
    c_{q_i}
    \left(k_{i2}+M_i k_{i2}^r\right)
    \left(\bar\varepsilon_{B_i}\right)^{q_i+1}
    +
    \frac{\sigma_i}{2}\|W_i^\ast\|^2,
\end{aligned}
\label{eq:critic_delta_ei}
\end{equation}

\[
\bar\varepsilon_{B_i}
=
2\|W_i^\ast\|\bar\phi_i+T_i\bar r_i.
\]

Thus, the current and recorded residual terms admit the same perturbation structure, which permits their contributions to be collected in a single Lyapunov estimate.

\noindent\textit{Interpretation of \eqref{eq:critic_delta_ei}.}
By Lemma~\ref{lem:residual_perturbation}, one has
\begin{equation}
\small
\begin{aligned}
    -k_{i1}\varsigma_i|\delta_i|^{p_i}\operatorname{sgn}(\delta_i)
    &\le
    -\vartheta_{p_i}k_{i1}|\varsigma_i|^{p_i+1}
    +
    c_{p_i}k_{i1}|\varepsilon_{B_i}|^{p_i+1},                    \\
    -k_{i2}\varsigma_i|\delta_i|^{q_i}\operatorname{sgn}(\delta_i)
    &\le
    -\vartheta_{q_i}k_{i2}|\varsigma_i|^{q_i+1}
    +
    c_{q_i}k_{i2}|\varepsilon_{B_i}|^{q_i+1},                    \\
    -k_{i1}^{r}\varsigma_{ik}|\delta_{ik}|^{p_i}
    \operatorname{sgn}(\delta_{ik})
    &\le
    -\vartheta_{p_i}k_{i1}^{r}|\varsigma_{ik}|^{p_i+1}
    +
    c_{p_i}k_{i1}^{r}|\varepsilon_{B_{ik}}|^{p_i+1},             \\
    -k_{i2}^{r}\varsigma_{ik}|\delta_{ik}|^{q_i}
    \operatorname{sgn}(\delta_{ik})
    &\le
    -\vartheta_{q_i}k_{i2}^{r}|\varsigma_{ik}|^{q_i+1}
    +
    c_{q_i}k_{i2}^{r}|\varepsilon_{B_{ik}}|^{q_i+1}.
\end{aligned}
\label{eq:critic_interpret_each_term}
\end{equation}

Taking the summation over the recorded data gives
\(
-\sum_{k=1}^{M_i} k_{i1}^{r}\varsigma_{ik}|\delta_{ik}|^{p_i}\operatorname{sgn}(\delta_{ik})
\le
-\vartheta_{p_i}k_{i1}^{r}\sum_{k=1}^{M_i}|\varsigma_{ik}|^{p_i+1}
+
c_{p_i}k_{i1}^{r}\sum_{k=1}^{M_i}|\varepsilon_{B_{ik}}|^{p_i+1},
\quad
-\sum_{k=1}^{M_i} k_{i2}^{r}\varsigma_{ik}|\delta_{ik}|^{q_i}\operatorname{sgn}(\delta_{ik})
\le
-\vartheta_{q_i}k_{i2}^{r}\sum_{k=1}^{M_i}|\varsigma_{ik}|^{q_i+1}
+
c_{q_i}k_{i2}^{r}\sum_{k=1}^{M_i}|\varepsilon_{B_{ik}}|^{q_i+1}.
\)

To close the perturbation estimate, the approximation residuals must be uniformly bounded on the compact learning region.

\begin{lemma}[Bounded Bellman--Isaacs approximation residual]
Suppose that the learning trajectory remains in the compact set
\(\Omega_i\), the ideal weight \(W_i^\ast\) is finite, and the basis
functions are continuous on \(\Omega_i^{r}\). Then there exists a
constant \(\bar\varepsilon_{B_i}>0\) such that
\(|\varepsilon_{B_i}(t)|\leq\bar\varepsilon_{B_i}\) and
\(|\varepsilon_{B_{ik}}|\leq\bar\varepsilon_{B_i}\),
\(k=1,\ldots,M_i\).
\label{lm:lemma7}
\end{lemma}

\begin{proof}
Since \(\Omega_i^{r}\) is compact and \(\phi_i\) is continuous, there
exists \(\bar\phi_i>0\) such that
\(\|\phi_i(\chi_i)\|\leq\bar\phi_i\) for all
\(\chi_i\in\Omega_i^{r}\). Hence,
\(\|\Delta\phi_i\|\leq2\bar\phi_i\). Moreover, by
Lemma~\ref{lem:window_running_bound},
\(|\hat r_i(\tau)|\leq\bar r_i\) on the considered finite window, and
therefore
\(|\mathcal I_i|
\leq
\int_{t-T_i}^{t}|\hat r_i(\tau)|\,d\tau
\leq
T_i\bar r_i\).
Therefore,
\(|\varepsilon_{B_i}|
\leq
\|W_i^\ast\|\,\|\Delta\phi_i\|+|\mathcal I_i|
\leq
2\|W_i^\ast\|\bar\phi_i+T_i\bar r_i\).
The same argument applies to each recorded datum. Thus, one may take
\(\bar\varepsilon_{B_i}
=
2\|W_i^\ast\|\bar\phi_i+T_i\bar r_i\).
\end{proof}




By Lemma~\ref{lm:lemma7},
\(
|\varepsilon_{B_i}|^{p_i+1}
\le
(\bar\varepsilon_{B_i})^{p_i+1},
\qquad
|\varepsilon_{B_i}|^{q_i+1}
\le
(\bar\varepsilon_{B_i})^{q_i+1},
\qquad
\sum_{k=1}^{M_i}|\varepsilon_{B_{ik}}|^{p_i+1}
\le
M_i(\bar\varepsilon_{B_i})^{p_i+1},
\qquad
\sum_{k=1}^{M_i}|\varepsilon_{B_{ik}}|^{q_i+1}
\le
M_i(\bar\varepsilon_{B_i})^{q_i+1}.
\)

Therefore,
\(
\Delta_{B_i}
=
c_{p_i}
\left(
k_{i1}+M_i k_{i1}^{r}
\right)
(\bar\varepsilon_{B_i})^{p_i+1}
+
c_{q_i}
\left(
k_{i2}+M_i k_{i2}^{r}
\right)
(\bar\varepsilon_{B_i})^{q_i+1}
+
\frac{\sigma_i}{2}\|W_i^\ast\|^2.
\)

The constant \(\Delta_{B_i}\) therefore collects only bounded approximation, integration, and leakage contributions and is independent of the initial critic-weight error.

After collecting all bounded residual contributions, the initial-excitation condition supplies the coercivity needed to recover powers of the full critic-weight error. The online terms in \eqref{eq:critic_vdot_with_xi} are nonpositive and may be dropped. Since
\(\mathbf\Psi_i=\sum_{k=1}^{M_i}\psi_{ik}\psi_{ik}^\top
\ge\lambda_i\mathbf I\) by Lemma~\ref{lem:ie_condition}, one has
\(
\sum_{k=1}^{M_i}|\varsigma_{ik}|^2
=
\sum_{k=1}^{M_i}|\psi_{ik}^\top\tilde W_i|^2
=
\tilde W_i^\top
\left(
\sum_{k=1}^{M_i}\psi_{ik}\psi_{ik}^\top
\right)
\tilde W_i
=
\tilde W_i^\top\mathbf\Psi_i\tilde W_i
\ge
\lambda_i\|\tilde W_i\|^2.
\)

\begin{lemma}[Finite-dimensional \(\ell_p\)-norm relation
\cite{HardyLittlewoodPolya1952Inequalities,HornJohnson2013MatrixAnalysis}]
\label{lem:lp_norm_relation}
For any \(z\in\mathbb R^M\) and \(1\le a\le b\), one has
\begin{equation}
    \|z\|_a\ge \|z\|_b .
    \label{eq:lp_norm_monotonicity}
\end{equation}
\end{lemma}

Since \(p_i+1\in(1,2)\), Lemma~\ref{lem:lp_norm_relation} gives
\begin{equation}
    \sum_{k=1}^{M_i}|\varsigma_{ik}|^{p_i+1}
    \ge
    \left(
    \sum_{k=1}^{M_i}|\varsigma_{ik}|^2
    \right)^{\frac{p_i+1}{2}}
    \ge
    \lambda_i^{\frac{p_i+1}{2}}
    \|\tilde W_i\|^{p_i+1}.
    \label{eq:critic_low_power_ie}
\end{equation}
Since \(q_i+1>2\),
\begin{equation}
    \sum_{k=1}^{M_i}|\varsigma_{ik}|^{q_i+1}
    \ge
    M_i^{\frac{1-q_i}{2}}
    \left(
    \sum_{k=1}^{M_i}|\varsigma_{ik}|^2
    \right)^{\frac{q_i+1}{2}},
\end{equation}
and hence
\begin{equation}
    \sum_{k=1}^{M_i}|\varsigma_{ik}|^{q_i+1}
    \ge
    M_i^{\frac{1-q_i}{2}}
    \lambda_i^{\frac{q_i+1}{2}}
    \|\tilde W_i\|^{q_i+1}.
    \label{eq:critic_high_power_ie}
\end{equation}

Using \eqref{eq:critic_low_power_ie} and
\eqref{eq:critic_high_power_ie} in
\eqref{eq:critic_vdot_with_xi} gives
\(\dot V_{Wi}
\le
-\bar h_{i1}\|\tilde W_i\|^{p_i+1}
-\bar h_{i2}\|\tilde W_i\|^{q_i+1}
+\Delta_{B_i}\),
where
\(\bar h_{i1}
=
\vartheta_{p_i}k_{i1}^r
\lambda_i^{\frac{p_i+1}{2}}\)
and
\(\bar h_{i2}
=
\vartheta_{q_i}k_{i2}^r
M_i^{\frac{1-q_i}{2}}
\lambda_i^{\frac{q_i+1}{2}}\).
Let
\(\underline\lambda_{\Gamma_i}
=
\lambda_{\min}(\mathbf\Gamma_i)\)
and
\(\overline\lambda_{\Gamma_i}
=
\lambda_{\max}(\mathbf\Gamma_i)\).
we have
\(\frac{1}{2\overline\lambda_{\Gamma_i}}\|\tilde W_i\|^2
\le
V_{Wi}
\le
\frac{1}{2\underline\lambda_{\Gamma_i}}\|\tilde W_i\|^2\),
which implies
\(\|\tilde W_i\|^{p_i+1}
\ge
(2\underline\lambda_{\Gamma_i})^{\rho_{Wi}}
V_{Wi}^{\rho_{Wi}}\)
and
\(\|\tilde W_i\|^{q_i+1}
\ge
(2\underline\lambda_{\Gamma_i})^{\theta_{Wi}}
V_{Wi}^{\theta_{Wi}}\).
Therefore,
\(\dot V_{Wi}
\le
-h_{i1}V_{Wi}^{\rho_{Wi}}
-h_{i2}V_{Wi}^{\theta_{Wi}}
+\Delta_{Wi}\),
where
\(h_{i1}
=
\bar h_{i1}(2\underline\lambda_{\Gamma_i})^{\rho_{Wi}}\),
\(h_{i2}
=
\bar h_{i2}(2\underline\lambda_{\Gamma_i})^{\theta_{Wi}}\),
and
\(\Delta_{Wi}=\Delta_{B_i}\).

This proves \eqref{eq:critic_fixed_time_ineq}. By
Theorem~\ref{thm:value_fixed_time_condition}, \(V_{Wi}\) reaches
\(\Omega_{Wi}\) in a time satisfying
\eqref{eq:critic_settling_time}. Since the bound is independent of
\(V_{Wi}(0)\), the critic weight error is fixed-time practically stable.
\end{proof}

The preceding result concerns the learning subsystem alone. The next theorem combines it with the cost-induced HJI decay to establish the behavior of the learned formation dynamics.

\begin{theorem}[Fixed-time practical stability of the local learned formation dynamics]
\label{thm:closed_loop_fixed_time}
Consider follower \(i\) under the local dynamics \eqref{eq:hji_local_dynamics_game}, the saturation-compatible learned policy \eqref{eq:approx_secure_policy}, and the critic update \eqref{eq:irl_update_new}. Suppose that Assumptions~\ref{assk}--\ref{ass:critic_gain} hold, all closed-loop signals remain in the compact operating set \(\Omega_i\), the disturbance and FDI channels are bounded, and the ideal HJI value function satisfies Lemma~\ref{lem:local_value_bounds}. Let the critic powers be selected consistently with the cost-induced fixed-time powers as \(p_i=\alpha_i\), \(q_i=2\beta_i-1\), \(0<\alpha_i<1\), and \(\beta_i>1\), and define \(\rho_i=(\alpha_i+1)/2\in(0,1)\) and \(\theta_i=\beta_i>1\). Under this matching, \eqref{eq:critic_rho_theta} gives \(\rho_{Wi}=\rho_i\) and \(\theta_{Wi}=\theta_i\), so the plant and critic contributions share the same pair of fixed-time powers. Then, for any \(\lambda_{Wi}>0\), the mixed integral--pointwise Lyapunov functional
\begin{equation}
\mathcal J_i(t)=\int_{t-T_i}^{t}V_i^\ast(\chi_i(\tau))\,d\tau+\lambda_{Wi}V_{Wi}(t)
\label{eq:thm3_local_lyapunov}
\end{equation}
satisfies
\begin{equation}
\dot{\mathcal J}_i\le -c_{i1}\mathcal J_i^{\rho_i}-c_{i2}\mathcal J_i^{\theta_i}+\Delta_{ci},
\label{eq:thm3_main_ineq}
\end{equation}
for some constants \(c_{i1}>0\), \(c_{i2}>0\), and \(\Delta_{ci}\ge0\). Consequently, \(\mathcal J_i(t)\) reaches the compact residual set \(\Omega_{ci}=\{\mathcal J_i\ge0:c_{i1}\mathcal J_i^{\rho_i}+c_{i2}\mathcal J_i^{\theta_i}\le\Delta_{ci}\}\) within the fixed time \(T_{ci}\le[c_{i1}(1-\rho_i)]^{-1}+[c_{i2}(\theta_i-1)]^{-1}\). Moreover, for all \(t\ge T_{\rm cl}:=\max_{1\le i\le N}T_{ci}\), the leader-rooted formation error is practically bounded.
\end{theorem}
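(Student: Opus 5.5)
The plan is to differentiate the two pieces of \(\mathcal J_i\) separately and then merge them. By the Leibniz rule,
\[
\frac{d}{dt}\int_{t-T_i}^{t}V_i^\ast(\chi_i(\tau))\,d\tau=V_i^\ast(\chi_i(t))-V_i^\ast(\chi_i(t-T_i)).
\]
The closed-loop dynamics under \(\hat u_{ci}\) are then inserted into \(V_i^\ast\) rather than into its pointwise derivative. Write
\[
\hat u_{ci}=u_{ci}^\ast+(\hat u_{ci}-u_{ci}^\ast),
\]
and treat the actual disturbance/FDI signals \(\varpi_i,\varpi_j\) as deviations from the worst-case ones. The HJI identity \eqref{eq:value_derivative_hji} then gives, along the learned trajectory,
\[
\dot V_i^\ast\le -Q_{ii}(\chi_i)+\nabla V_i^{\ast\top}\mathbf G_{ii}^\chi(\hat u_{ci}-u_{ci}^\ast)+\gamma_i^2\text{-terms}.
\]
The adversarial contributions are bounded by completing squares. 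This works because \(\varpi^\ast\) maximizes \(\nabla V_i^{\ast\top}\mathbf B\varpi-\gamma_i^2\varpi^\top\mathbf T\varpi\), so the true bounded signals contribute at most \(\gamma_i^2\|\mathbf T\|\bar\varpi^2\) plus a constant.

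For the policy mismatch, \(\tanh\) is \(1\)-Lipschitz. Combining this with Assumption~\ref{ass:value_regular}, \(\nabla V_i^\ast=\nabla\phi_i^\top W_i^\ast+\nabla\varepsilon_i\), gives
\[
\|\hat u_{ci}-u_{ci}^\ast\|\le \tfrac12\|\mathbf R_i^{-1}\|\,\|\mathbf G_{ii}^\chi\|\bigl(\|\nabla\phi_i\|\,\|\tilde W_i\|+\bar\varepsilon_{di}\bigr).
\]
Lemma~\ref{lem:local_value_bounds} gives \(\|\nabla V_i^\ast\|\le c_{\nabla i}\|\chi_i\|\), so the cross term is bounded by \(c\|\chi_i\|\|\tilde W_i\|+c'\|\chi_i\|\). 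On the compact set \(\Omega_i\) this becomes a Young-type split: a small multiple of \(\|\tilde W_i\|^{p_i+1}\) and \(\|\tilde W_i\|^{q_i+1}\), plus a constant. Integrating the resulting differential inequality over \([t-T_i,t]\) shows
\[
V_i^\ast(\chi_i(t))-V_i^\ast(\chi_i(t-T_i))\le-\int_{t-T_i}^{t}Q_{ii}\,d\tau+\int_{t-T_i}^{t}\bigl(\text{cross}\bigr)\,d\tau+T_i\bar\Delta_i .
\]
Remark~\ref{rem:cost_induced_fixed_time} then lower-bounds \(\int Q_{ii}\) by \(\int \bigl(c_{i1}(V_i^\ast)^{\alpha_i}+c_{i2}(V_i^\ast)^{\beta_i}\bigr)\). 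Note that the case \(\|\chi_i\|<r_{i,-}\) only contributes a bounded term \(\le \bar c_i r_{i,-}^2\)-type constant absorbed into \(\Delta_{ci}\).

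Next, I convert window integrals of powers into powers of the window integral. Because \(\beta_i>1\), Jensen's inequality gives
\[
\int_{t-T_i}^{t}(V_i^\ast)^{\beta_i}\ge T_i^{1-\beta_i}\Bigl(\int V_i^\ast\Bigr)^{\beta_i}.
\]
For the low power \(\alpha_i<1\), Jensen goes the wrong way. There I would use boundedness of \(V_i^\ast\le\bar V_i\) on \(\Omega_i\), via \((V_i^\ast)^{\alpha_i}\ge \bar V_i^{\alpha_i-\rho_i}(V_i^\ast)^{\rho_i}\) with \(\rho_i=(\alpha_i+1)/2>\alpha_i\) (this is exactly why \(\rho_i\) is chosen this way), followed by \(\int (V_i^\ast)^{\rho_i}\ge T_i^{1-\rho_i}(\int V_i^\ast)^{\rho_i}\) when \(\rho_i\le 1\)... but that direction also fails for concave powers. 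So the honest fallback is \(V^{\rho}\ge V/\bar V^{1-\rho}\), giving \(\int (V_i^\ast)^{\rho_i}\ge \bar V_i^{\rho_i-1}\int V_i^\ast\ge \bar V_i^{\rho_i-1}(T_i\bar V_i)^{1-\rho_i}(\int V_i^\ast)^{\rho_i}\), using \(\int V_i^\ast\le T_i\bar V_i\).

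The critic part comes from Theorem~\ref{thm:critic_fixed_time}. Multiplied by \(\lambda_{Wi}\), it supplies \(-\lambda_{Wi}(h_{i1}V_{Wi}^{\rho_i}+h_{i2}V_{Wi}^{\theta_i})\). With \(p_i=\alpha_i\) and \(q_i=2\beta_i-1\), this is in the same powers as \(\rho_{Wi}=\rho_i\) and \(\theta_{Wi}=\theta_i\). Part of this decay absorbs the Young-split cross terms, noting that \(\|\tilde W_i\|^{p_i+1}\sim V_{Wi}^{\rho_i}\) via \eqref{eq:critic_lyapunov_bounds}. Finally, for \(a,b\ge0\), the inequalities
\[
a^{\rho}+b^{\rho}\ge(a+b)^{\rho},\qquad a^{\theta}+b^{\theta}\ge2^{1-\theta}(a+b)^{\theta}
\]
combine the two parts into \eqref{eq:thm3_main_ineq}, with \(c_{i1},c_{i2}\) the minima of the respective coefficients.

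Theorem~\ref{thm:value_fixed_time_condition} then yields the settling time \(T_{ci}\). Since \(\int_{t-T_i}^tV_i^\ast\) is bounded on \(\Omega_{ci}\), we get a bound on \(\chi_i\) in an averaged sense. Invoking \(\|e\|\le\|(\mathbf H^{-1}\otimes\mathbf I_n)\|\|\chi\|\) and taking \(T_{\rm cl}=\max_iT_{ci}\) gives practical boundedness of the formation error.

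The main obstacle is the cross term \(\|\chi_i\|\|\tilde W_i\|\) in the mixed powers. Absorbing it requires the \(\chi_i\)-part to appear in the integrated form while \(\tilde W_i\) appears pointwise. I would handle this by bounding \(\|\tilde W_i(\tau)\|\) over the window by its current value plus a bounded drift, which is finite by Lemma~\ref{lem:window_running_bound}, and pushing the remainder into \(\Delta_{ci}\).
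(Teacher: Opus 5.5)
Most of your outline tracks the paper's proof: the Leibniz rule on the window, the saddle inequality with \(\hat u_{ci}=u_{ci}^\ast+\tilde u_i\), the \(\tanh\)-Lipschitz mismatch bound, a window-drift bound for the critic term, Jensen for the \(\beta_i\)-power, and subadditivity plus the convex power-sum inequality to merge the two parts. Absorbing \(\|\chi_i\|\,\|\tilde W_i\|\) into the critic decay via compactness is a fine substitute for the paper's Young split against \(\epsilon_{Qi}\chi_i^\top\mathbf Q_i\chi_i\).

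\textbf{The low-power step fails.} Write \(\zeta_i(\tau)=V_i^\ast(\chi_i(\tau))\) and \(X_i(t)=\int_{t-T_i}^{t}\zeta_i\,d\tau\), and let \(V_{i,\max}\) denote your upper bound on \(V_i^\ast\) (your \(\bar V_i\)). Your last step claims \(X_i\ge (T_iV_{i,\max})^{1-\rho_i}X_i^{\rho_i}\). That is equivalent to \(X_i\ge T_iV_{i,\max}\). Since you also use \(X_i\le T_iV_{i,\max}\), the inequality is reversed: what actually follows is an upper bound.

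No argument using only boundedness of \(\zeta_i\) can succeed. Take \(\zeta_i\) to be a tent of height \(V_{i,\max}\) and base \(\epsilon\), so its slope is of order \(V_{i,\max}/\epsilon\). Then \(\int\zeta_i^{\alpha_i}\) and \(X_i\) are both of order \(\epsilon\), so \(\int\zeta_i^{\alpha_i}/X_i^{\rho}\to0\) as \(\epsilon\to0\) for every \(\rho<1\).

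The missing ingredient is a bound on the time derivative of \(\zeta_i\). The paper uses \(|\dot\zeta_i|\le L_{V_i}\) as follows.
\begin{itemize}
\item Around the maximizer of \(\zeta_i\) on the window, \(\zeta_i\ge\mathcal M_{V_i}/2\) on a subinterval of length \(\min\{\mathcal M_{V_i}/(2L_{V_i}),T_i/2\}\).
\item This yields \(\mathcal M_{V_i}\le C_{V_i}X_i^{1/2}\).
\item Hence \(\int\zeta_i^{\alpha_i}\ge\mathcal M_{V_i}^{\alpha_i-1}X_i\ge C_{V_i}^{\alpha_i-1}X_i^{(\alpha_i+1)/2}\).
\end{itemize}
This square-root bound on the window maximum, not merely \(\rho_i>\alpha_i\), is what fixes \(\rho_i=(\alpha_i+1)/2\).

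A cruder patch is also available. Keep the correct linear bound \(\int\zeta_i^{\alpha_i}\ge V_{i,\max}^{\alpha_i-1}X_i\) and use \(X_i\ge X_i^{\rho_i}-1\). This gives the stated form, but it inflates \(\Delta_{ci}\) by a constant that does not vanish with the perturbations.

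\textbf{The final step has the same gap.} A bound on \(X_i(t)\) controls \(\chi_i\) only on average over the window. It does not by itself give a pointwise residual bound on \(\chi_i(t)\), or on \(e(t)\), for \(t\ge T_{\rm cl}\). The paper integrates \(V_i^\ast(\chi_i(\tau))\ge V_i^\ast(\chi_i(t))-L_{V_i}(t-\tau)\) over the window to get \(V_i^\ast(\chi_i(t))\le X_i(t)/T_i+\tfrac12L_{V_i}T_i\). It then uses the lower comparison bound of Lemma~\ref{lem:local_value_bounds} to bound \(\|\chi_i(t)\|\). Only after that does it apply \(\|e\|\le\|(\mathbf H^{-1}\otimes\mathbf I_n)\|\,\|\chi\|\).
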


\begin{proof}
The Lyapunov functional for follower \(i\) as
\begin{equation}
    \mathcal J_i(t)
    =
    \bar V_i(t)
    +
    \lambda_{Wi}V_{Wi}(t),
    \label{eq:proof3_Ji_select}
\end{equation}
where
\begin{equation}
    \bar V_i(t)
    =
    \int_{t-T_i}^{t}
    V_i^\ast(\chi_i(\tau))d\tau.
    \label{eq:proof3_barV_VW}
\end{equation}
Here \(\bar V_i\) is an integral value on the IRL window, while
\(V_{Wi}\) is a pointwise critic Lyapunov function. Differentiating
\eqref{eq:proof3_Ji_select} gives
\begin{equation}
    \dot{\mathcal J}_i
    =
    \dot{\bar V}_i
    +
    \lambda_{Wi}\dot V_{Wi}.
    \label{eq:proof3_Jidot_split}
\end{equation}
For the first term,
\begin{equation}
\begin{aligned}
    \dot{\bar V}_i
    &=
    V_i^\ast(\chi_i(t))
    -
    V_i^\ast(\chi_i(t-T_i))                                  \\
    &=
    \int_{t-T_i}^{t}
    \dot V_i^\ast(\chi_i(\tau))d\tau .
\end{aligned}
\label{eq:proof3_barV_dot}
\end{equation}

For compactness in the remainder of the proof, define
\begin{equation}
    \mathcal F_i^{\chi}
    :=
    F_i^{\chi}(\xi_i)
    +
    \mathbf G_{i\mathcal N}^{\chi}(x_{\mathcal N_i})
    u_{c\mathcal N_i}.
    \label{eq:proof3_compact_drift}
\end{equation}

Thus, \(\mathcal F_i^{\chi}\) contains the nonlinear drift and the
fixed neighboring secure-control coupling appearing in
\eqref{eq:hji_local_dynamics_game}. Along the learned closed-loop trajectory,
\begingroup
\setlength{\jot}{1.5pt}
\begin{equation}
\small
\begin{aligned}
\dot V_i^\ast
={}&
\nabla V_i^{\ast\top}
\Bigl[
\mathcal F_i^\chi
+\mathbf G_{ii}^\chi(x_i)\hat u_{ci}
\Bigr]
+
\nabla V_i^{\ast\top}\mathbf B_i^\chi(x_i)\varpi_i
\\[-0.2mm]
&+
\sum_{j\in\mathcal N_i}
\nabla V_i^{\ast\top}
\mathbf B_{ij}^\chi(x_j)\varpi_j
\\
={}&
\nabla V_i^{\ast\top}
\Bigl[
\mathcal F_i^\chi
+\mathbf G_{ii}^\chi(x_i)u_{ci}^\ast
\Bigr]
+
\nabla V_i^{\ast\top}\mathbf B_i^\chi(x_i)\varpi_i
\\[-0.2mm]
&+
\sum_{j\in\mathcal N_i}
\nabla V_i^{\ast\top}
\mathbf B_{ij}^\chi(x_j)\varpi_j
+
\nabla V_i^{\ast\top}
\mathbf G_{ii}^\chi(x_i)\tilde u_i .
\end{aligned}
\label{eq:proof3_vdot_split}
\end{equation}
\endgroup


Here, \(\tilde u_i=\hat u_{ci}-u_{ci}^\ast\). From the local HJI saddle inequality,
\(
\nabla V_i^{\ast\top}
\Bigl[
\mathcal F_i^\chi
+\mathbf G_{ii}^\chi(x_i)u_{ci}^\ast
\Bigr]
+
\nabla V_i^{\ast\top}\mathbf B_i^\chi(x_i)\varpi_i
+
\sum_{j\in\mathcal N_i}
\nabla V_i^{\ast\top}
\mathbf B_{ij}^\chi(x_j)\varpi_j
\le
-Q_{ii}(\chi_i)
-\mathcal U_i(u_{ci}^\ast)
-\sum_{j\in\mathcal N_i}\mathcal U_{ij}(u_{cj})
+
\gamma_i^2\varpi_i^\top\mathbf T_{ii}\varpi_i
+
\gamma_i^2
\sum_{j\in\mathcal N_i}
\varpi_j^\top\mathbf T_{ij}\varpi_j.
\)

Since \(\mathcal U_i(\cdot)\ge0\) and
\(\mathcal U_{ij}(\cdot)\ge0\), and since the disturbance--FDI channels
are bounded, there exists \(\bar d_i>0\) such that
\begin{equation}
    \gamma_i^2\varpi_i^\top\mathbf T_{ii}\varpi_i
    +
    \gamma_i^2
    \sum_{j\in\mathcal N_i}
    \varpi_j^\top\mathbf T_{ij}\varpi_j
    \le
    \bar d_i .
    \label{eq:proof3_attack_bound}
\end{equation}

Hence, \eqref{eq:proof3_vdot_split} becomes

\begin{equation}
    \dot V_i^\ast
\le
-Q_{ii}(\chi_i)
+\nabla V_i^{\ast\top}
\mathbf G_{ii}^{\chi}(x_i)\tilde u_i
+\bar d_i
\label{eq:proof3_vdot_pre_uerr}
\end{equation}

The policy mismatch is now estimated. Define
\(z_i^\ast
:=
\frac{1}{2}
\mathbf R_i^{-1}
\bar{\mathbf U}_i^{-1}
\mathbf G_{ii}^{\chi,\top}(x_i)
\nabla V_i^\ast\)
and
\(\hat z_i
:=
\frac{1}{2}
\mathbf R_i^{-1}
\bar{\mathbf U}_i^{-1}
\mathbf G_{ii}^{\chi,\top}(x_i)
\nabla\hat V_i\).
Then
\(u_{ci}^{\ast}
=
-\bar{\mathbf U}_i\tanh(z_i^\ast)\)
and
\(\hat u_{ci}
=
-\bar{\mathbf U}_i\tanh(\hat z_i)\).
Using
\(\|\tanh(a)-\tanh(b)\|\le\|a-b\|\), one obtains
\begin{equation}
\begin{aligned}
\|\tilde u_i\|
&\le
\|\bar{\mathbf U}_i\|\|\hat z_i-z_i^\ast\|\\
&\le
\frac{1}{2}
\|\bar{\mathbf U}_i\|
\|\mathbf R_i^{-1}\bar{\mathbf U}_i^{-1}\|
\|\mathbf G_{ii}^{\chi,\top}(x_i)\|
\|\nabla\hat V_i-\nabla V_i^\ast\|.
\end{aligned}
\label{eq:proof3_utilde_lip}
\end{equation}
Moreover,
\begin{equation}
\nabla\hat V_i-\nabla V_i^\ast
=
\nabla\phi_i^\top\hat W_i
-\nabla\phi_i^\top W_i^\ast
-\nabla\varepsilon_i
=
\nabla\phi_i^\top\tilde W_i-\nabla\varepsilon_i.
\label{eq:proof3_gradient_error}
\end{equation}

Since all signals remain in the compact set \(\Omega_i^{r}\), there
exist constants \(\bar g_i>0\), \(\bar\phi_{di}>0\), and
\(\bar\varepsilon_{di}>0\) such that
\begin{equation}
\|\mathbf G_{ii}^{\chi}(x_i)\|\le\bar g_i,\qquad
\|\nabla\phi_i\|\le\bar\phi_{di},\qquad
\|\nabla\varepsilon_i\|\le\bar\varepsilon_{di}.
\label{eq:proof3_compact_1}
\end{equation}
Moreover, by the gradient bound in
Lemma~\ref{lem:local_value_bounds},
\(\|\nabla V_i^\ast(\chi_i)\|
\le c_{\nabla i}\|\chi_i\|\).
Therefore,
\begin{equation}
\|\mathbf G_{ii}^{\chi,\top}(x_i)\nabla V_i^\ast\|
\le
\|\mathbf G_{ii}^{\chi}(x_i)\|\,\|\nabla V_i^\ast\|
\le
\bar g_i c_{\nabla i}\|\chi_i\|
=:\bar v_i\|\chi_i\|.
\label{eq:proof3_compact_2}
\end{equation}

Combining \eqref{eq:proof3_utilde_lip} and
\eqref{eq:proof3_gradient_error}, and using the triangle inequality,
yields
\(\|\tilde u_i\|
\le
\frac{1}{2}\|\bar{\mathbf U}_i\|
\|\mathbf R_i^{-1}\bar{\mathbf U}_i^{-1}\|
\|\mathbf G_{ii}^{\chi}(x_i)\|
\bigl(
\|\nabla\phi_i\|\,\|\tilde W_i\|
+\|\nabla\varepsilon_i\|
\bigr)\).
Therefore, using the bounds in \eqref{eq:proof3_compact_1}, one obtains
\begin{equation}
\|\tilde u_i\|
\le
m_{ui}\|\tilde W_i\|+\epsilon_{ui},
\label{eq:proof3_utilde_bound}
\end{equation}
where
\(m_{ui}
:=
\frac{1}{2}\|\bar{\mathbf U}_i\|
\|\mathbf R_i^{-1}\bar{\mathbf U}_i^{-1}\|
\bar g_i\bar\phi_{di}\)
and
\(\epsilon_{ui}
:=
\frac{1}{2}\|\bar{\mathbf U}_i\|
\|\mathbf R_i^{-1}\bar{\mathbf U}_i^{-1}\|
\bar g_i\bar\varepsilon_{di}\).
Using \eqref{eq:proof3_compact_2} and
\eqref{eq:proof3_utilde_bound},
\begin{equation}
\nabla V_i^{\ast\top}
\mathbf G_{ii}^{\chi}(x_i)\tilde u_i
\le
\bar v_i\|\chi_i\|
\bigl(
m_{ui}\|\tilde W_i\|
+\epsilon_{ui}
\bigr).
\label{eq:proof3_policy_young_1}
\end{equation}

Let
\(\underline q_i:=\lambda_{\min}(\mathbf Q_i)>0\).
Since
\(\underline q_i\|\chi_i\|^2
\le
\chi_i^\top\mathbf Q_i\chi_i\),
applying Young's inequality
\(ab\le
\frac{\epsilon_{Qi}}{2}a^2
+\frac{1}{2\epsilon_{Qi}}b^2\),
\(0<\epsilon_{Qi}<1\),
with
\(a=\sqrt{\underline q_i}\|\chi_i\|\)
and
\(b=
\frac{\bar v_i m_{ui}}{\sqrt{\underline q_i}}
\|\tilde W_i\|\),

gives
\begin{equation}
\bar v_i m_{ui}\|\chi_i\|\|\tilde W_i\|
\le
\frac{\epsilon_{Qi}}{2}
\chi_i^\top\mathbf Q_i\chi_i
+
\frac{\bar v_i^2m_{ui}^2}
{2\epsilon_{Qi}\underline q_i}
\tilde W_i^\top\tilde W_i.
\label{eq:proof3_policy_young_1}
\end{equation}
Similarly, choosing \(a=\sqrt{\underline q_i}\|\chi_i\|\) and \(b=\frac{\bar v_i\epsilon_{ui}}{\sqrt{\underline q_i}}\) yields \(\bar v_i\epsilon_{ui}\|\chi_i\|\le\frac{\epsilon_{Qi}}{2}\chi_i^\top\mathbf Q_i\chi_i+\frac{\bar v_i^2\epsilon_{ui}^2}{2\epsilon_{Qi}\underline q_i}\). Combining the above inequalities gives
\begin{equation}
\nabla V_i^{\ast\top}
\mathbf G_{ii}^{\chi}(x_i)\tilde u_i
\le
\epsilon_{Qi}\chi_i^\top\mathbf Q_i\chi_i
+
L_{W_1}\tilde W_i^\top\tilde W_i
+
\Delta_{ui},
\label{eq:proof3_policy_young_2}
\end{equation}
where \(L_{W_1}:=\frac{\bar v_i^2m_{ui}^2}{2\epsilon_{Qi}\underline q_i}>0\) and \(\Delta_{ui}:=\frac{\bar v_i^2\epsilon_{ui}^2}{2\epsilon_{Qi}\underline q_i}\ge0\).

The policy-mismatch estimate is now combined with the fixed-time structure embedded in the state penalty. By the cost design, \(Q_{ii}(\chi_i)=\chi_i^\top\mathbf Q_i\chi_i+\kappa_{i1}\|\chi_i\|^{2\alpha_i}+\kappa_{i2}\|\chi_i\|^{2\beta_i}\). Using the upper bound in Lemma~\ref{lem:local_value_bounds}, \(\|\chi_i\|^{2\alpha_i}\ge\overline c_i^{-\alpha_i}(V_i^\ast)^{\alpha_i}\) and \(\|\chi_i\|^{2\beta_i}\ge\overline c_i^{-\beta_i}(V_i^\ast)^{\beta_i}\). Let \(a_i=\kappa_{i1}\overline c_i^{-\alpha_i}\) and \(b_i=\kappa_{i2}\overline c_i^{-\beta_i}\). Then
\begin{equation}
    Q_{ii}(\chi_i)
    \ge
    \chi_i^\top\mathbf Q_i\chi_i
    +
    a_i(V_i^\ast)^{\alpha_i}
    +
    b_i(V_i^\ast)^{\beta_i}.
    \label{eq:proof3_Qii_lower}
\end{equation}

Substituting \eqref{eq:proof3_policy_young_2} and \eqref{eq:proof3_Qii_lower} into \eqref{eq:proof3_vdot_pre_uerr} gives
\begin{align}
\dot V_i^\ast
\le{}&
-(1-\epsilon_{Qi})
\chi_i^\top \mathbf Q_i\chi_i
-a_i\bigl(V_i^\ast\bigr)^{\alpha_i}
-b_i\bigl(V_i^\ast\bigr)^{\beta_i}
\nonumber\\
&+\ell_{Wi}V_{Wi}
+\Delta_{Vi},
\label{eq:proof3_vdot_shaped}
\end{align}
where \(\Delta_{Vi}:=\bar d_i+\Delta_{ui}\) and \(\ell_{W_i}:=2\lambda_{\max}(\Gamma_i)L_{W_i}>0\).

By \eqref{eq:proof3_barV_dot}, one has \(\dot{\bar V}_i(t)=\int_{t-T_i}^{t}\dot V_i^\ast(\tau)\,d\tau\). Hence, integrating \eqref{eq:proof3_vdot_shaped} over \([t-T_i,t]\) yields
\begin{align}
\dot{\bar V}_i
\le{}&
-(1-\epsilon_{Qi})
\int_{t-T_i}^{t}
\chi_i^\top(\tau)\mathbf Q_i\chi_i(\tau)\,d\tau
\nonumber\\
&-a_i
\int_{t-T_i}^{t}
\bigl(V_i^\ast(\tau)\bigr)^{\alpha_i}\,d\tau
-b_i
\int_{t-T_i}^{t}
\bigl(V_i^\ast(\tau)\bigr)^{\beta_i}\,d\tau
\nonumber\\
&+\ell_{Wi}
\int_{t-T_i}^{t}
V_{Wi}(\tau)\,d\tau
+T_i\Delta_{Vi}.
\label{eq:proof3_barV_ineq_with_quadratic}
\end{align}

Since \(0<\epsilon_{Q_i}<1\) and \(\mathbf Q_i>0\), the quadratic term \(-(1-\epsilon_{Qi})\int_{t-T_i}^{t}\chi_i^\top(\tau)\mathbf Q_i\chi_i(\tau)\,d\tau\) on the right-hand side of \eqref{eq:proof3_barV_ineq_with_quadratic} is nonpositive. Therefore, omitting this term preserves a valid upper bound, yielding
\begin{align}
\dot{\bar V}_i
\le{}&
-a_i
\int_{t-T_i}^{t}
\bigl(V_i^\ast(\tau)\bigr)^{\alpha_i}\,d\tau
-b_i
\int_{t-T_i}^{t}
\bigl(V_i^\ast(\tau)\bigr)^{\beta_i}\,d\tau
\nonumber\\
&+\ell_{Wi}
\int_{t-T_i}^{t}
V_{Wi}(\tau)\,d\tau
+T_i\Delta_{Vi}.
\label{eq:proof3_barV_ineq_1}
\end{align}

The two windowed value-power integrals are treated separately because the low-order power is concave whereas the high-order power is convex. Let \(\zeta_i(\tau):=V_i^\ast(\chi_i(\tau))\) and \(\bar V_i(t):=\int_{t-T_i}^{t}\zeta_i(\tau)\,d\tau\). Since \(V_i^\ast\) is continuous on the compact set \(\Omega_i\), there exists a constant \(V_{i,\max}>0\) such that \(0\le\zeta_i(\tau)\le V_{i,\max}\). Moreover, the preceding derivative bound gives
\begin{equation}
    |\dot{\zeta}_i(\tau)|\le L_{V_i}.
    \label{eq:proof3_Vdot_bound}
\end{equation}

\begin{remark}
For \(0<\alpha_i<1\), the mapping
\(s\mapsto s^{\alpha_i}\) is concave on
\(\mathbb R_{\geq0}\); 
Therefore, Jensen's inequality gives
\[
\frac{1}{T_i}
\int_{t-T_i}^{t}
\zeta_i^{\alpha_i}(\tau)\,d\tau
\le
\left(
\frac{1}{T_i}
\int_{t-T_i}^{t}
\zeta_i(\tau)\,d\tau
\right)^{\alpha_i}.
\]
This is an upper bound on the integral of
\(\zeta_i^{\alpha_i}\). However, in the Lyapunov derivative, this
integral appears with the negative coefficient \(-a_i\).
Multiplication by \(-a_i<0\) reverses the inequality and therefore
provides a lower bound, rather than the upper bound required for
estimating the Lyapunov derivative. Hence, Jensen's inequality alone
cannot produce the desired decay estimate for \(0<\alpha_i<1\),
which motivates the derivative-based finite-window argument below.
\end{remark}

Define \(\mathcal M_{V_i}(t):=\max_{\tau\in[t-T_i,t]}\zeta_i(\tau)\). If \(\mathcal M_{V_i}(t)=0\), the desired estimate holds trivially. Otherwise, let \(\tau_i^m\in[t-T_i,t]\) be such that \(\zeta_i(\tau_i^m)=\mathcal M_{V_i}(t)\). At least one of the two intervals adjacent to \(\tau_i^m\) within \([t-T_i,t]\) has length no smaller than \(T_i/2\). Therefore, over an interval of length \(\tau_i^{\mathrm{loc}}(t):=\min\left\{\frac{\mathcal M_{V_i}(t)}{2L_{V_i}},\frac{T_i}{2}\right\}\), where \(\tau_i^{\mathrm{loc}}(t)=|\tau-\tau_i^m|\) denotes the time distance from the maximizer \(\tau_i^m\), the derivative bound implies \(\zeta_i(\tau)\ge\mathcal M_{V_i}(t)-L_{V_i}|\tau-\tau_i^m|\ge\frac{\mathcal M_{V_i}(t)}{2}\). Consequently, \(\bar V_i(t)\ge\frac{\mathcal M_{V_i}(t)}{2}\tau_i^{\mathrm{loc}}(t)\). If \(\mathcal M_{V_i}(t)\le L_{V_i}T_i\), then \(\tau_i^{\mathrm{loc}}(t)=\mathcal M_{V_i}(t)/(2L_{V_i})\), and the preceding bound gives \(\mathcal M_{V_i}(t)\le2\sqrt{L_{V_i}\bar V_i(t)}\). If \(\mathcal M_{V_i}(t)>L_{V_i}T_i\), then \(\tau_i^{\mathrm{loc}}(t)=T_i/2\), and hence \(\mathcal M_{V_i}(t)\le\frac{4}{T_i}\bar V_i(t)\). Since \(\bar V_i(t)=\int_{t-T_i}^{t}\zeta_i(\tau)\,d\tau\le T_iV_{i,\max}\), it follows that \(\frac{4}{T_i}\bar V_i(t)\le4\sqrt{\frac{V_{i,\max}}{T_i}}\,\bar V_i^{1/2}(t)\). Thus, in both cases, \(\mathcal M_{V_i}(t)\le C_{V_i}\bar V_i^{1/2}(t)\), where \(C_{V_i}:=\max\left\{2\sqrt{L_{V_i}},4\sqrt{\frac{V_{i,\max}}{T_i}}\right\}>0\).

Since \(0<\alpha_i<1\) and \(0\le\zeta_i(\tau)\le\mathcal M_{V_i}(t)\), one has \(\zeta_i^{\alpha_i}(\tau)=\zeta_i(\tau)\zeta_i^{\alpha_i-1}(\tau)\ge\mathcal M_{V_i}^{\alpha_i-1}(t)\zeta_i(\tau)\). Integrating both sides over \([t-T_i,t]\) gives \(\int_{t-T_i}^{t}\zeta_i^{\alpha_i}(\tau)\,d\tau\ge\mathcal M_{V_i}^{\alpha_i-1}(t)\bar V_i(t)\ge C_{V_i}^{-(1-\alpha_i)}\bar V_i^{(\alpha_i+1)/2}(t)\), where \(\mathcal M_{V_i}(t)\le C_{V_i}\bar V_i^{1/2}(t)\) and \(\alpha_i-1<0\) have been used. Therefore, \(\int_{t-T_i}^{t}\bigl(V_i^\ast(\chi_i(\tau))\bigr)^{\alpha_i}\,d\tau\ge\omega_{\alpha i}\bar V_i^{\rho_i}(t)\), where \(\rho_i:=\frac{\alpha_i+1}{2}\in(0,1)\) and \(\omega_{\alpha i}:=C_{V_i}^{-(1-\alpha_i)}>0\).

\begin{remark}[Interpretation of the local time bound]
The quantity \(\mathcal M_{V_i}(t)/(2L_{V_i})\) follows directly from the derivative bound. Indeed, since \(|\dot{\zeta}_i(\tau)|\le L_{V_i}\) and \(\zeta_i(\tau_i^m)=\mathcal M_{V_i}(t)\), absolute continuity implies \(\zeta_i(\tau)\ge \mathcal M_{V_i}(t)-L_{V_i}|\tau-\tau_i^m|\). Hence, whenever \(|\tau-\tau_i^m|\le \mathcal M_{V_i}(t)/(2L_{V_i})\), one has \(\zeta_i(\tau)\ge \mathcal M_{V_i}(t)/2\). Thus, \(\mathcal M_{V_i}(t)/(2L_{V_i})\) is the minimum time required, under the rate bound \(L_{V_i}\), for \(\zeta_i\) to decrease from \(\mathcal M_{V_i}(t)\) to \(\mathcal M_{V_i}(t)/2\). The threshold \(\mathcal M_{V_i}(t)/2\) is introduced to convert the pointwise maximum \(\mathcal M_{V_i}(t)\) into a lower bound that holds over a subinterval of positive length. More generally, choosing a threshold \(\vartheta\mathcal M_{V_i}(t)\), with \(0<\vartheta<1\), yields the guaranteed local length \((1-\vartheta)\mathcal M_{V_i}(t)/L_{V_i}\). In the derivative-limited case, the resulting integral lower bound is proportional to \(\vartheta(1-\vartheta)\mathcal M_{V_i}^{2}(t)\), which is maximized at \(\vartheta=1/2\). The half-maximum threshold therefore provides a simple and sharp choice for the subsequent integral estimate. Moreover, since the maximizing instant \(\tau_i^m\) may be located near an endpoint of \([t-T_i,t]\), only one adjacent subinterval of length at least \(T_i/2\) can always be guaranteed. This motivates the choice \(\tau_i^{\mathrm{loc}}(t):=\min\left\{\mathcal M_{V_i}(t)/(2L_{V_i}),T_i/2\right\}\). Consequently, there exists a subinterval of length \(\tau_i^{\mathrm{loc}}(t)\) on which \(\zeta_i(\tau)\ge \mathcal M_{V_i}(t)/2\), and therefore \(\bar V_i(t)=\int_{t-T_i}^{t}\zeta_i(\tau)\,d\tau\ge \bigl(\mathcal M_{V_i}(t)/2\bigr)\tau_i^{\mathrm{loc}}(t)\).
\end{remark}

\begin{lemma}[Jensen inequality \cite{boyd2004convex}]
\label{lem:power-convexity}
Let \(\beta>1\). Then the function \(f(s):=s^\beta\), \(s\in\mathbb R_{\geq0}\), is strictly convex. Consequently, for every \(T>0\) and every nonnegative integrable function \(z:[t-T,t]\to\mathbb R_{\geq0}\), \(\frac{1}{T}\int_{t-T}^{t}z^\beta(\tau)\,d\tau\ge\left(\frac{1}{T}\int_{t-T}^{t}z(\tau)\,d\tau\right)^\beta\). Equivalently, \(\int_{t-T}^{t}z^\beta(\tau)\,d\tau\ge T^{1-\beta}\left(\int_{t-T}^{t}z(\tau)\,d\tau\right)^\beta\).
\end{lemma}

From Lemma~\ref{lem:power-convexity}, since \(s\mapsto s^{\beta_i}\) is convex for \(\beta_i>1\), Jensen's inequality gives \(\frac{1}{T_i}\int_{t-T_i}^{t}\zeta_i^{\beta_i}(\tau)\,d\tau\ge\left(\frac{1}{T_i}\int_{t-T_i}^{t}\zeta_i(\tau)\,d\tau\right)^{\beta_i}=\left(\frac{\bar V_i(t)}{T_i}\right)^{\beta_i}\). Equivalently, \(\int_{t-T_i}^{t}\bigl(V_i^\ast(\chi_i(\tau))\bigr)^{\beta_i}\,d\tau\ge T_i^{1-\beta_i}\bar V_i^{\beta_i}(t)\).

These two estimates convert the integral plant-value contribution into the same mixed-power form used for the critic subsystem. Set \(A_i=a_i\omega_{\alpha i}\) and \(B_i=b_iT_i^{1-\beta_i}\). Then \eqref{eq:proof3_barV_ineq_1} gives


\begin{equation}
    \dot{\bar V}_i
    \le
    -A_i\bar V_i^{\rho_i}
    -
    B_i\bar V_i^{\theta_i}
    +
    \ell_{Wi}
    \int_{t-T_i}^{t}
    V_{Wi}(\tau)d\tau
    +
    T_i\Delta_{Vi}.
    \label{eq:proof3_barV_ineq_2}
\end{equation}

For the pointwise critic component,
Theorem~\ref{thm:critic_fixed_time} and
\eqref{eq:critic_fixed_time_ineq} imply
\begin{equation}
    \dot V_{Wi}
    \le
    -h_{i1}V_{Wi}^{\rho_i}
    -
    h_{i2}V_{Wi}^{\theta_i}
    +
    \Delta_{Wi}.
    \label{eq:proof3_VW_ineq}
\end{equation}
Moreover, since the weight dynamics are bounded on \(\Omega_i\),
there exists \(M_{Wi}>0\) such that
\begin{equation}
    |\dot V_{Wi}(t)|\le M_{Wi}.
    \label{eq:proof3_VWdot_bound}
\end{equation}

Hence, for any $\tau\in[t-T_i,t]$, the fundamental theorem of
calculus gives
\begin{align}
V_{W_i}(\tau)
&=
V_{W_i}(t)
-
\int_{\tau}^{t}\dot V_{W_i}(s)\,ds
\nonumber\\
&\leq
V_{W_i}(t)
+
\int_{\tau}^{t}
\left|\dot V_{W_i}(s)\right|\,ds
\nonumber\\
&\leq
V_{W_i}(t)+M_{W_i}(t-\tau).
\label{eq:critic_window_pointwise_bound}
\end{align}
Integrating \eqref{eq:critic_window_pointwise_bound} over
$\tau\in[t-T_i,t]$ yields
\begin{equation}
\int_{t-T_i}^{t}V_{W_i}(\tau)\,d\tau
\leq
T_iV_{W_i}(t)
+\frac{1}{2}M_{W_i}T_i^2.
\label{eq:critic_window_integral_bound}
\end{equation}

Substituting \eqref{eq:critic_window_integral_bound} into
\eqref{eq:proof3_barV_ineq_2},
\begin{equation}
    \dot{\bar V}_i
    \le
    -A_i\bar V_i^{\rho_i}
    -
    B_i\bar V_i^{\theta_i}
    +
    \ell_{Wi}T_iV_{Wi}
    +
    \Delta_{\bar V_i},
    \label{eq:proof3_barV_ineq_3}
\end{equation}
where
\begin{equation}
    \Delta_{\bar V_i}
    =
    T_i\Delta_{Vi}
    +
    \frac{1}{2}\ell_{Wi}M_{Wi}T_i^2 .
    \label{eq:proof3_delta_barV}
\end{equation}
Combining \eqref{eq:proof3_barV_ineq_3} and
\eqref{eq:proof3_VW_ineq},
\begin{equation}
\begin{aligned}
    \dot{\mathcal J}_i
    &=
    \dot{\bar V}_i
    +
    \lambda_{Wi}\dot V_{Wi}                                  \\
    &\le
    -A_i\bar V_i^{\rho_i}
    -
    B_i\bar V_i^{\theta_i}
    -
    \lambda_{Wi}h_{i1}V_{Wi}^{\rho_i}
    -
    \lambda_{Wi}h_{i2}V_{Wi}^{\theta_i}                      \\
    &\quad
    +
    \ell_{Wi}T_iV_{Wi}
    +
    \Delta_{\bar V_i}
    +
    \lambda_{Wi}\Delta_{Wi}.
\end{aligned}
\label{eq:proof3_J_dot_before_absorb}
\end{equation}

The remaining linear coupling in \(V_{Wi}\) is absorbed by the
following standard mixed-power estimate.

\begin{lemma}[Mixed-power domination]
\label{lem:mixed-power-domination}
Let \(a\geq 0\), \(0<\rho<1<\theta\), and let
\(\varepsilon_1,\varepsilon_2>0\) be arbitrary constants.
Then, there exists a finite constant
\(\bar{\varepsilon}\geq 0\) such that
\begin{equation}
    as
    \leq
    \varepsilon_1 s^{\rho}
    +
    \varepsilon_2 s^{\theta}
    +
    \bar{\varepsilon},
    \qquad \forall s\geq 0.
    \label{eq:mixed-power-domination}
\end{equation}
In particular, an admissible choice is
\begin{equation}
    \bar{\varepsilon}
    =
    \frac{\theta-1}{\theta}\,
    a
    \left(
        \frac{a}{\theta\varepsilon_2}
    \right)^{\frac{1}{\theta-1}}
    \label{eq:mixed-power-remainder}
\end{equation}
for \(a>0\), whereas
\(\bar{\varepsilon}=0\) may be selected when \(a=0\).
\end{lemma}

\begin{proof} If $a=0$, then \eqref{eq:mixed-power-domination} holds trivially with $\bar{\varepsilon}=0$, since $\varepsilon_1s^\rho+\varepsilon_2s^\theta\geq0$ for every $s\geq0$. Consider now $a>0$ and define $f:[0,\infty)\rightarrow\mathbb{R}$ by $f(s):=as-\varepsilon_2s^\theta$. Since $\theta>1$, one has $f(0)=0$, $f(s)\rightarrow-\infty$ as $s\rightarrow\infty$, and $f'(s)=a-\theta\varepsilon_2s^{\theta-1}$. Hence, $f$ admits the unique maximizer $s^\star=\left(a/(\theta\varepsilon_2)\right)^{1/(\theta-1)}$, because $f'(s)>0$ for $0\leq s<s^\star$ and $f'(s)<0$ for $s>s^\star$. Moreover, the stationarity condition $a=\theta\varepsilon_2(s^\star)^{\theta-1}$ implies $\varepsilon_2(s^\star)^\theta=(a/\theta)s^\star$, and therefore $\max_{s\geq0}f(s)=f(s^\star)=as^\star-\varepsilon_2(s^\star)^\theta=\frac{\theta-1}{\theta}as^\star=\frac{\theta-1}{\theta}a\left(\frac{a}{\theta\varepsilon_2}\right)^{\frac{1}{\theta-1}}=:\bar{\varepsilon}$. Consequently, $as-\varepsilon_2s^\theta\leq\bar{\varepsilon}$ for every $s\geq0$, or equivalently, $as\leq\varepsilon_2s^\theta+\bar{\varepsilon}$. Finally, since $\varepsilon_1s^\rho\geq0$ for every $s\geq0$, it follows that $as\leq\varepsilon_1s^\rho+\varepsilon_2s^\theta+\bar{\varepsilon}$ for all $s\geq0$, which proves the result. \end{proof}

By Lemma~\ref{lem:mixed-power-domination}, for any
\(\varrho_{i1}>0\) and \(\varrho_{i2}>0\), there exists a finite
constant \(\bar{\varrho}_i\geq0\) such that
\begin{equation}
    \ell_{Wi}T_iV_{Wi}
    \leq
    \varrho_{i1}V_{Wi}^{\rho_i}
    +
    \varrho_{i2}V_{Wi}^{\theta_i}
    +
    \bar{\varrho}_i.
    \label{eq:coupling-absorption}
\end{equation}
Substituting \eqref{eq:coupling-absorption} into the preceding
estimate of \(\dot{\mathcal J}_i\) gives
\begin{align}
    \dot{\mathcal J}_i
    \leq{}&
    -A_i\bar V_i^{\rho_i}
    -B_i\bar V_i^{\theta_i}
    -\lambda_{Wi}h_{i1}V_{Wi}^{\rho_i}
    -\lambda_{Wi}h_{i2}V_{Wi}^{\theta_i}
    \nonumber\\
    &+
    \varrho_{i1}V_{Wi}^{\rho_i}
    +
    \varrho_{i2}V_{Wi}^{\theta_i}
    +
    \Delta_{\bar V_i}
    +
    \lambda_{Wi}\Delta_{Wi}
    +
    \bar{\varrho}_i
    \nonumber\\
    ={}&
    -A_i\bar V_i^{\rho_i}
    -B_i\bar V_i^{\theta_i}
    \nonumber\\
    &-
    \bigl(\lambda_{Wi}h_{i1}-\varrho_{i1}\bigr)
    V_{Wi}^{\rho_i}
    -
    \bigl(\lambda_{Wi}h_{i2}-\varrho_{i2}\bigr)
    V_{Wi}^{\theta_i}
    \nonumber\\
    &+
    \Delta_{\bar V_i}
    +
    \lambda_{Wi}\Delta_{Wi}
    +
    \bar{\varrho}_i.
    \label{eq:J-before-lambda-selection}
\end{align}

To ensure that the coefficients of
\(V_{Wi}^{\rho_i}\) and \(V_{Wi}^{\theta_i}\) in
\eqref{eq:J-before-lambda-selection} remain strictly positive, choose
\(\lambda_{Wi}>0\) such that
\(\lambda_{Wi}>
\max\{\varrho_{i1}/h_{i1},\varrho_{i2}/h_{i2}\}\), or equivalently,
\(\lambda_{Wi}h_{i1}>\varrho_{i1}\) and
\(\lambda_{Wi}h_{i2}>\varrho_{i2}\).
Such a selection always exists since \(h_{i1}>0\) and \(h_{i2}>0\).
Define
\(C_i:=\lambda_{Wi}h_{i1}-\varrho_{i1}>0\),
\(D_i:=\lambda_{Wi}h_{i2}-\varrho_{i2}>0\), and
\(\Delta_{ci}:=\Delta_{\bar V_i}
+\lambda_{Wi}\Delta_{Wi}+\bar{\varrho}_i\).
Then,
\(\dot{\mathcal J}_i
\leq
-A_i\bar V_i^{\rho_i}
-B_i\bar V_i^{\theta_i}
-C_iV_{Wi}^{\rho_i}
-D_iV_{Wi}^{\theta_i}
+\Delta_{ci}\).

Rewrite the critic powers in terms of
\(\lambda_{Wi}V_{Wi}\) to obtain
\(\dot{\mathcal J}_i
\le
-A_i\bar V_i^{\rho_i}
-C_i\lambda_{Wi}^{-\rho_i}
(\lambda_{Wi}V_{Wi})^{\rho_i}
-B_i\bar V_i^{\theta_i}
-D_i\lambda_{Wi}^{-\theta_i}
(\lambda_{Wi}V_{Wi})^{\theta_i}
+\Delta_{ci}\).
Let
\(\bar A_i
:=\min\{A_i,C_i\lambda_{Wi}^{-\rho_i}\}\) and
\(\bar B_i
:=\min\{B_i,D_i\lambda_{Wi}^{-\theta_i}\}\).
Then,
\(\dot{\mathcal J}_i
\le
-\bar A_i[
\bar V_i^{\rho_i}
+(\lambda_{Wi}V_{Wi})^{\rho_i}]
-\bar B_i[
\bar V_i^{\theta_i}
+(\lambda_{Wi}V_{Wi})^{\theta_i}]
+\Delta_{ci}\).
It remains to combine the plant and critic powers into powers of the
composite functional \(\mathcal J_i\).

\begin{lemma}[Subadditivity of fractional powers~\cite{HardyLittlewoodPolya1952Inequalities}]
\label{lem:fractional-power-subadditivity}
Let \(x,y\geq0\) and \(0<\rho<1\). Then
\((x+y)^\rho\leq x^\rho+y^\rho\).
\end{lemma}

By Lemma~\ref{lem:fractional-power-subadditivity}, since
\(0<\rho_i<1\),
\(\bar V_i^{\rho_i}
+(\lambda_{Wi}V_{Wi})^{\rho_i}
\ge
(\bar V_i+\lambda_{Wi}V_{Wi})^{\rho_i}
=\mathcal J_i^{\rho_i}\).

\begin{lemma}[Convex power-sum inequality]
\label{lem:convex-power-sum}
Let \(x,y\geq0\) and \(\theta>1\). Then
\(x^\theta+y^\theta
\geq2^{1-\theta}(x+y)^\theta\).
Moreover, equality holds if and only if \(x=y\).
\end{lemma}

\begin{proof} Since $\theta>1$, the function $\varphi(z)=z^\theta$ is strictly convex on $[0,\infty)$. Hence, Jensen's inequality with equal weights gives $\varphi\!\left((x+y)/2\right)\leq\frac{1}{2}\varphi(x)+\frac{1}{2}\varphi(y)$, that is, $\left((x+y)/2\right)^\theta\leq\frac{1}{2}(x^\theta+y^\theta)$. Multiplying both sides by $2$ yields $x^\theta+y^\theta\geq2^{1-\theta}(x+y)^\theta$. Since $\varphi$ is strictly convex, equality in Jensen's inequality holds if and only if $x=y$, which completes the proof. \end{proof}

By Lemma~\ref{lem:convex-power-sum}, since \(\theta_i>1\),
\(\bar V_i^{\theta_i}+(\lambda_{Wi}V_{Wi})^{\theta_i}
\ge2^{1-\theta_i}\left(\bar V_i+\lambda_{Wi}V_{Wi}\right)^{\theta_i}
=2^{1-\theta_i}\mathcal J_i^{\theta_i}\).
Therefore,
\begin{equation}
    \dot{\mathcal J}_i
    \le
    -c_{i1}\mathcal J_i^{\rho_i}
    -
    c_{i2}\mathcal J_i^{\theta_i}
    +
    \Delta_{ci},
    \label{eq:proof3_final_J_ineq}
\end{equation}
where \(c_{i1}=\bar A_i\) and \(c_{i2}=2^{1-\theta_i}\bar B_i\).
By Theorem~\ref{thm:value_fixed_time_condition},
\(\mathcal J_i(t)\) reaches \(\Omega_{ci}\) within the fixed-time.

Finally, the composite residual bound is converted into a pointwise
bound on the local coordination error and then into a bound on the
leader-rooted formation error. It remains to relate \(\mathcal J_i\)
to the formation error. Let \(r_{ci}>0\) satisfy
\(c_{i1}r_{ci}^{\rho_i}+c_{i2}r_{ci}^{\theta_i}\ge\Delta_{ci}\).
For \(t\ge T_{ci}\), one has
\(\bar V_i(t)\le\mathcal J_i(t)\le r_{ci}\).
Using \eqref{eq:proof3_Vdot_bound}, for any
\(\tau\in[t-T_i,t]\),
\begin{equation}
    V_i^\ast(\chi_i(\tau))
    \ge
    V_i^\ast(\chi_i(t))
    -
    L_{V_i}(t-\tau).
    \label{eq:proof3_backward_value}
\end{equation}
Integrating \eqref{eq:proof3_backward_value} over
\([t-T_i,t]\) yields
\(\bar V_i(t)\ge T_iV_i^\ast(\chi_i(t))-\frac{1}{2}L_{V_i}T_i^2\).
Hence,
\(V_i^\ast(\chi_i(t))
\le\frac{\bar V_i(t)}{T_i}+\frac{1}{2}L_{V_i}T_i
\le\frac{r_{ci}}{T_i}+\frac{1}{2}L_{V_i}T_i\).
From the lower bound in Lemma~\ref{lem:local_value_bounds},
\(\|\chi_i(t)\|^2
\le\frac{1}{\underline c_i}
\left(\frac{r_{ci}}{T_i}+\frac{1}{2}L_{V_i}T_i\right)\),
\(t\ge T_{ci}\).

With \(\chi=\operatorname{col}(\chi_1,\ldots,\chi_N)\), the global
coordination-error norm satisfies
\(\|\chi(t)\|^2=\sum_{i=1}^{N}\|\chi_i(t)\|^2\).
Taking \(T_{\rm cl}=\max_{1\le i\le N}T_{ci}\), for all
\(t\ge T_{\rm cl}\),
\begin{equation}
    \|\chi(t)\|^2
    \le
    \sum_{i=1}^{N}
    \frac{1}{\underline c_i}
    \left(
    \frac{r_{ci}}{T_i}
    +
    \frac{1}{2}L_{V_i}T_i
    \right).
    \label{eq:proof3_eta_global}
\end{equation}
Finally, \(\chi=(\mathbf H\otimes\mathbf I_n)e\).
Since the leader is globally reachable, \(\mathbf H\) is nonsingular.
Hence,
\begin{equation}
    \|e(t)\|
    \le
    \|(\mathbf H^{-1}\otimes\mathbf I_n)\|
    \|\chi(t)\|.
    \label{eq:proof3_e_bound}
\end{equation}
Equations \eqref{eq:proof3_eta_global} and
\eqref{eq:proof3_e_bound} prove fixed-time practical boundedness of the
leader-rooted formation error. Moreover, since
\(\mathcal J_i(t)\ge\lambda_{Wi}V_{Wi}(t)\), the critic error is
fixed-time practically bounded as well.

For completeness, the uniform practical settling-time estimate can
be written explicitly. Fix any \(\epsilon_{c_i}\in(0,1)\) and define
\(\Omega_{ci}^{\epsilon_{c_i}}
:=
\left\{
\mathcal J_i\ge0:
c_{i1}\mathcal J_i^{\rho_i}
+
c_{i2}\mathcal J_i^{\theta_i}
\le
\frac{\Delta_{ci}}{1-\epsilon_{c_i}}
\right\}\).
Outside this set, the residual term can be absorbed by the two
negative powers. From \eqref{eq:proof3_final_J_ineq}, one has, for
every \(\mathcal J_i\notin\Omega_{ci}^{\epsilon_{c_i}}\),
\begin{equation}
\begin{aligned}
    \dot{\mathcal J}_i
    &\le
    -c_{i1}\mathcal J_i^{\rho_i}
    -
    c_{i2}\mathcal J_i^{\theta_i}
    +
    \Delta_{ci}                                                \\
    &\le
    -\epsilon_{c_i} c_{i1}\mathcal J_i^{\rho_i}
    -
    \epsilon_{c_i} c_{i2}\mathcal J_i^{\theta_i}.
\end{aligned}
\label{eq:Jci_outside_Omega_eps}
\end{equation}

Let \(T_{ci}\) be the first time such that
\(\mathcal J_i(T_{ci})\in\Omega_{ci}^{\epsilon_{c_i}}\).
For all \(t<T_{ci}\), \eqref{eq:Jci_outside_Omega_eps} gives
\(dt\leq-\dfrac{d\mathcal J_i}
{\epsilon_{c_i}\left(
c_{i1}\mathcal J_i^{\rho_i}
+
c_{i2}\mathcal J_i^{\theta_i}
\right)}\).
Therefore,
\(T_{ci}\leq
\dfrac{1}{\epsilon_{c_i}}
\int_{\mathcal J_i(T_{ci})}^{\mathcal J_i(0)}
\dfrac{ds}{c_{i1}s^{\rho_i}+c_{i2}s^{\theta_i}}
\leq
\dfrac{1}{\epsilon_{c_i}}
\int_{0}^{\infty}
\dfrac{ds}{c_{i1}s^{\rho_i}+c_{i2}s^{\theta_i}}\).
Splitting the last integral over \((0,1]\) and \([1,\infty)\), and
using \(0<\rho_i<1\), \(\theta_i>1\), yields
\(\int_{0}^{\infty}
\dfrac{ds}{c_{i1}s^{\rho_i}+c_{i2}s^{\theta_i}}
\leq
\int_{0}^{1}\dfrac{ds}{c_{i1}s^{\rho_i}}
+
\int_{1}^{\infty}\dfrac{ds}{c_{i2}s^{\theta_i}}
=
\dfrac{1}{c_{i1}(1-\rho_i)}
+
\dfrac{1}{c_{i2}(\theta_i-1)}\).

Consequently,
\begin{equation}
    T_{ci}
    \le
    \frac{1}{\epsilon_{c_i} c_{i1}(1-\rho_i)}
    +
    \frac{1}{\epsilon_{c_i} c_{i2}(\theta_i-1)} .
    \label{eq:Jci_settling_time_eps}
\end{equation}
Since the right-hand side of
\eqref{eq:Jci_settling_time_eps} is independent of
\(\mathcal J_i(0)\), \(\mathcal J_i(t)\) reaches
\(\Omega_{ci}^{\epsilon_{c_i}}\) in fixed time.

The proof is complete.
\end{proof}


\begingroup
\setlength{\abovedisplayskip}{2pt plus 1pt minus 1pt}
\setlength{\belowdisplayskip}{2pt plus 1pt minus 1pt}
\setlength{\abovedisplayshortskip}{1pt}
\setlength{\belowdisplayshortskip}{1pt}
\setlength{\jot}{0pt}

\section{Simulation}

\FloatBarrier
\subsection{Simulation Setup}

Consider one leader and four nonlinear followers evolving in the two-dimensional plane. The state of follower \(i\) is \(x_i=\operatorname{col}(p_{ix},p_{iy},v_{ix},v_{iy})\in\mathbb R^4\), and the follower dynamics are \(\dot x_i=f_i(t,x_i)+\mathbf g_i(x_i)u_{ci}+\mathbf g_i(x_i)u_{ai}+\mathbf d_i\omega_i\), \(i\in\{1,\ldots,4\}\), where \(u_{ci}\in\mathbb R^2\) is the secure control input, \(u_{ai}\in\mathbb R^2\) is the actuator-side FDI signal, and \(\omega_i\in\mathbb R^2\) is the external disturbance. The unknown nonlinear drift is \(f_i(t,x_i)=\operatorname{col}(v_{ix},v_{iy},F_{ix},F_{iy})\), where \(F_{ix}=-0.30v_{ix}+0.16\sin(0.45p_{ix})+0.05\sin(p_{iy}v_{ix}/8)+0.035\cos(0.80t+\vartheta_i)+c_i\), \(F_{iy}=-0.28v_{iy}+0.15\sin(0.42p_{iy})+0.05\sin(p_{ix}v_{iy}/8)+0.035\sin(0.70t+\vartheta_i)-c_i\), \(c_i=0.02v_{ix}v_{iy}/[1+0.20(v_{ix}^2+v_{iy}^2)]\), and \(\vartheta_i=0.7(i-1)\). The state-dependent input matrix is
\(\mathbf g_i(x_i)=\operatorname{col}\!\left(\mathbf 0_{2\times2},
\operatorname{diag}(g_{ix},g_{iy})\right)\),, where \(g_{ix}=1+0.08\cos[0.30p_{ix}+0.20(i-1)]\) and \(g_{iy}=0.95+0.08\sin[0.25p_{iy}-0.20(i-1)]\), while
\(\mathbf d_i=\operatorname{col}(\mathbf 0_{2\times2},\mathbf I_2)\). The disturbance is \(\omega_i(t)=\operatorname{col}(\omega_{ix},\omega_{iy})\), where \(\omega_{ix}=0.045\sin([1.30+0.10(i-1)]t)+0.018\cos([2.20+0.07(i-1)]t+0.30(i-1))\) and \(\omega_{iy}=0.040\cos([1.50+0.08(i-1)]t)+0.020\sin([2.45+0.09(i-1)]t+0.20(i-1))\). The FDI attack is activated at \(t_a=8\,\mathrm{s}\), with \(u_{ai}(t)=\mathbf 0_2\) for \(t<t_a\) and \(u_{ai}(t)=\operatorname{col}(u_{ai,x},u_{ai,y})\) for \(t\ge t_a\), where \(u_{ai,x}=a_{xi}^{(1)}\sin(\omega_{xi}^{(1)}t+\vartheta_i)+a_{xi}^{(2)}\cos(\omega_{xi}^{(2)}t+\varphi_i)\), \(u_{ai,y}=a_{yi}^{(1)}\cos(\omega_{yi}^{(1)}t+\varphi_i)+a_{yi}^{(2)}\sin(\omega_{yi}^{(2)}t+\vartheta_i)\), \(\vartheta_i=0.7(i-1)\), and \(\varphi_i=0.45+0.30(i-1)\). The heterogeneous attack amplitudes are \(a_x^{(1)}=\operatorname{col}(0.32,0.46,0.39,0.54)\), \(a_x^{(2)}=\operatorname{col}(0.15,0.11,0.18,0.13)\), \(a_y^{(1)}=\operatorname{col}(0.41,0.34,0.52,0.47)\), and \(a_y^{(2)}=\operatorname{col}(0.12,0.19,0.14,0.17)\), whereas the corresponding frequency vectors are \(\omega_x^{(1)}=\operatorname{col}(0.83,1.07,0.76,1.23)\), \(\omega_x^{(2)}=\operatorname{col}(1.64,1.41,1.89,1.52)\), \(\omega_y^{(1)}=\operatorname{col}(0.91,1.18,0.87,1.31)\), and \(\omega_y^{(2)}=\operatorname{col}(1.72,1.55,1.96,1.63)\).

The leader state is \(x_0=\operatorname{col}(p_{0x},p_{0y},v_{0x},v_{0y})\), with \(\dot p_{0x}=v_{0x}\), \(\dot p_{0y}=v_{0y}\), \(\dot v_{0x}=-0.035p_{0x}-0.160v_{0x}+0.480\cos(0.32t)\), and \(\dot v_{0y}=-0.030p_{0y}-0.180v_{0y}+0.420\sin(0.28t)\). The desired offsets are \(h_1=\operatorname{col}(-10,-6)\), \(h_2=\operatorname{col}(10,-6)\), \(h_3=\operatorname{col}(-10,6)\), and \(h_4=\operatorname{col}(10,6)\) (The model is used solely to generate simulation data and is not required
by the proposed learning controller). The adjacency and leader-pinning matrices are \(\mathbf A=\begin{bmatrix}0&1&0&1\\1&0&1&0\\0&1&0&1\\1&0&1&0\end{bmatrix}\) and \(\mathbf B=\operatorname{diag}(1.4,0,0,1.2)\), respectively, where \(\mathbf L\) denotes the Laplacian matrix and \(\mathbf H=\mathbf L+\mathbf B\). 



\FloatBarrier


For each follower, the critic is implemented by a six-node Gaussian radial-basis-function neural network (RBFNN) as \(\hat V_i(\chi_i)=\hat W_i^\top\phi_i(\chi_i)\), where \(\hat W_i\in\mathbb R^6\) is the critic-weight vector and \(\chi_i=\operatorname{col}(\chi_{pi},\chi_{vi})\in\mathbb R^4\) is the local distributed coordination error. To improve numerical conditioning, define the normalized critic input as \(\bar\chi_i=\operatorname{col}(\chi_{pi}/5,\chi_{vi}/3)\). The RBF vector is \(\phi_i(\chi_i)=\operatorname{col}(\phi_{i1},\ldots,\phi_{i6})\), with \(\phi_{ik}(\chi_i)=\exp\!\left(-\|\bar\chi_i-c_{ik}\|^2/\sigma_{ik}^2\right)\), \(k\in\{1,\ldots,6\}\), where \(c_{ik}\in\mathbb R^4\) and \(\sigma_{ik}>0\) denote the center and width of the \(k\)th Gaussian basis function, respectively. The six centers are distributed over the normalized operating region of \(\bar\chi_i\), and the widths are selected sufficiently large to provide overlapping activation over the admissible state domain. 


The replay stack is collected during the pre-attack interval \(t\in[0,8]\,\mathrm{s}\) and retained thereafter to preserve the initial-excitation information required for critic learning.

The initial conditions are \(x_0(0)=\operatorname{col}(0,0,1.2,1.6)\), \(x_1(0)=\operatorname{col}(-5,-10,2.2,0.8)\), \(x_2(0)=\operatorname{col}(6,-1,0.5,2.5)\), \(x_3(0)=\operatorname{col}(-4,9,2.0,2.3)\), and \(x_4(0)=\operatorname{col}(5,2,0.3,1.0)\). 

For the numerical fixed-time comparison, the initial formation-error perturbations are additionally scaled by \(s\in\{0.5,1.0,1.5,2.0,2.5\}\), where \(s\) denotes the initial-error scaling factor.
The model is used solely to generate data during the simulation process.

The simulation is performed over
\(t\in[0,25]\,\mathrm{s}\) with integration step
\(T_s=0.01\,\mathrm{s}\) using a fourth-order Runge--Kutta method.
The integral Bellman window is
\(T_i=0.25\,\mathrm{s}\), the replay sampling period is
\(0.05\,\mathrm{s}\), and at most \(80\) replay samples are retained.
Replay data are collected over \(t\in[0,8]\,\mathrm{s}\).
The critic dimension is \(N_c=6\), with initial weights
\(\hat W_i(0)=50\mathbf 1_6\).
The critic powers are \(p_i=0.65\) and \(q_i=1.70\), with gains
\(k_{i1}=1.20\) and \(k_{i2}=0.18\).
The fixed-time powers are
\(\alpha_i=0.65\) and \(\beta_i=1.35\).
The input penalty and saturation bound are
\(\mathbf R_i=0.08\mathbf I_2\) and
\(\bar{\mathbf U}_i=40\mathbf I_2\), respectively.
The FDI signal is activated at
\(t_a=8\,\mathrm{s}\).

\FloatBarrier

\subsection{Simulation Results}

Figures~\ref{fig:tracking-states} and \ref{fig:formation-trajectory} show the closed-loop leader--follower responses. Figures~\ref{fig:critic-weights}--\ref{fig:fixed-time-verification} summarize the learning and convergence behavior. The six critic weights of each follower reach bounded constant values in approximately \(5.7\,\mathrm{s}\) without subsequent drift. Meanwhile, the position errors \(e_{pi}=p_i-p_0-h_i\), initially spanning approximately \([-5,6]\) along \(x\) and \([-4,5]\) along \(y\), are driven close to the origin, and \(\|e_{vi}\|\) remains small after the transient. More importantly, increasing the initial-error scale from \(0.5\) to \(2.5\) increases \(\|\chi(0)\|\) from approximately \(30\) to above \(150\), whereas the observed settling times are only \(9.53\), \(10.05\), \(10.33\), \(10.55\), and \(10.76\,\mathrm{s}\), yielding the common numerical bound \(T_{\mathrm{FT}}^{\mathrm{obs}}=10.76\,\mathrm{s}\).



\begin{figure}[H]
\centering
\includegraphics[width=\columnwidth]{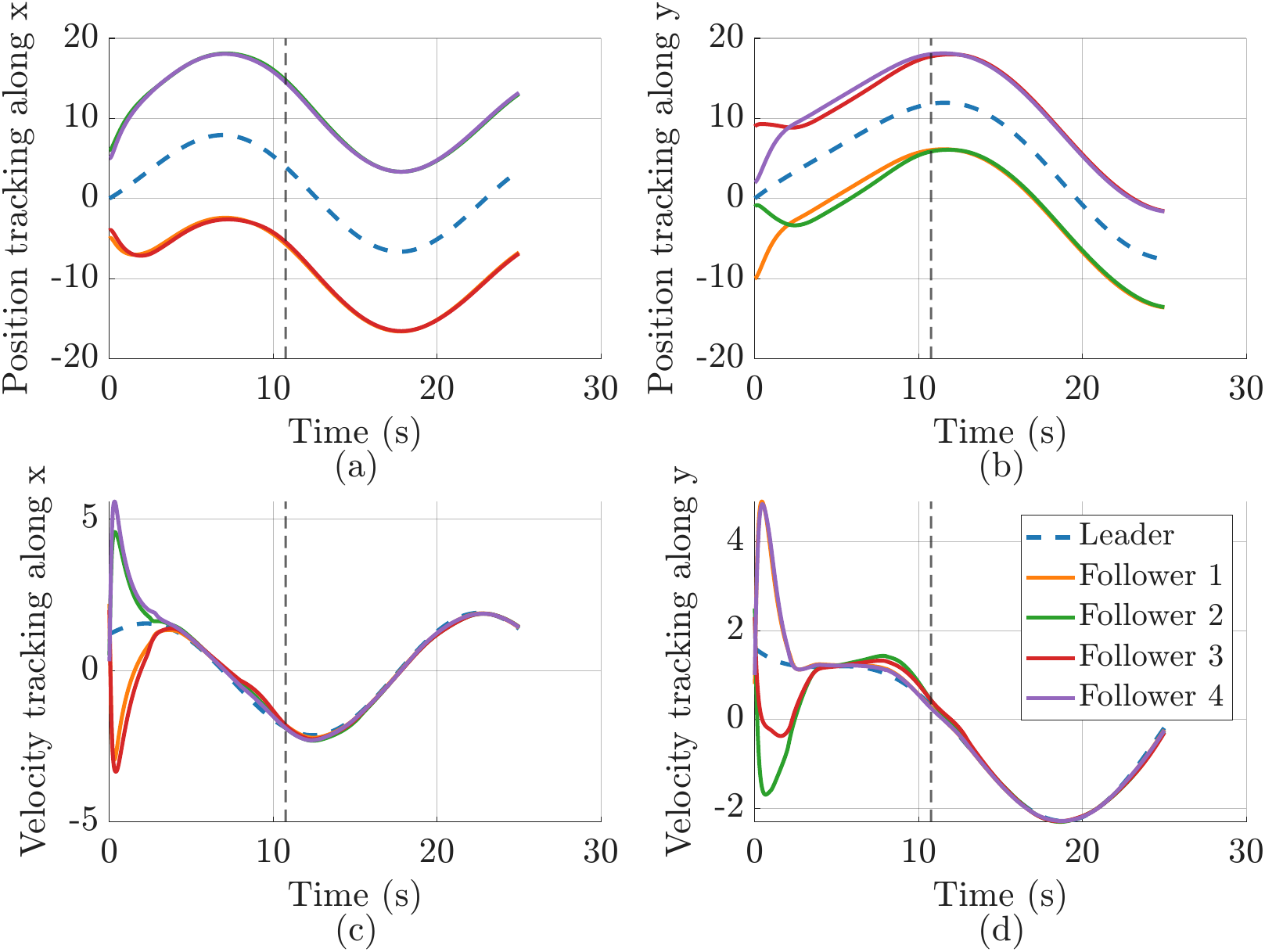}
\caption{Leader--follower position and velocity tracking responses.}
\label{fig:tracking-states}
\end{figure}

\begin{figure}[H]
\centering
\includegraphics[width=\columnwidth]{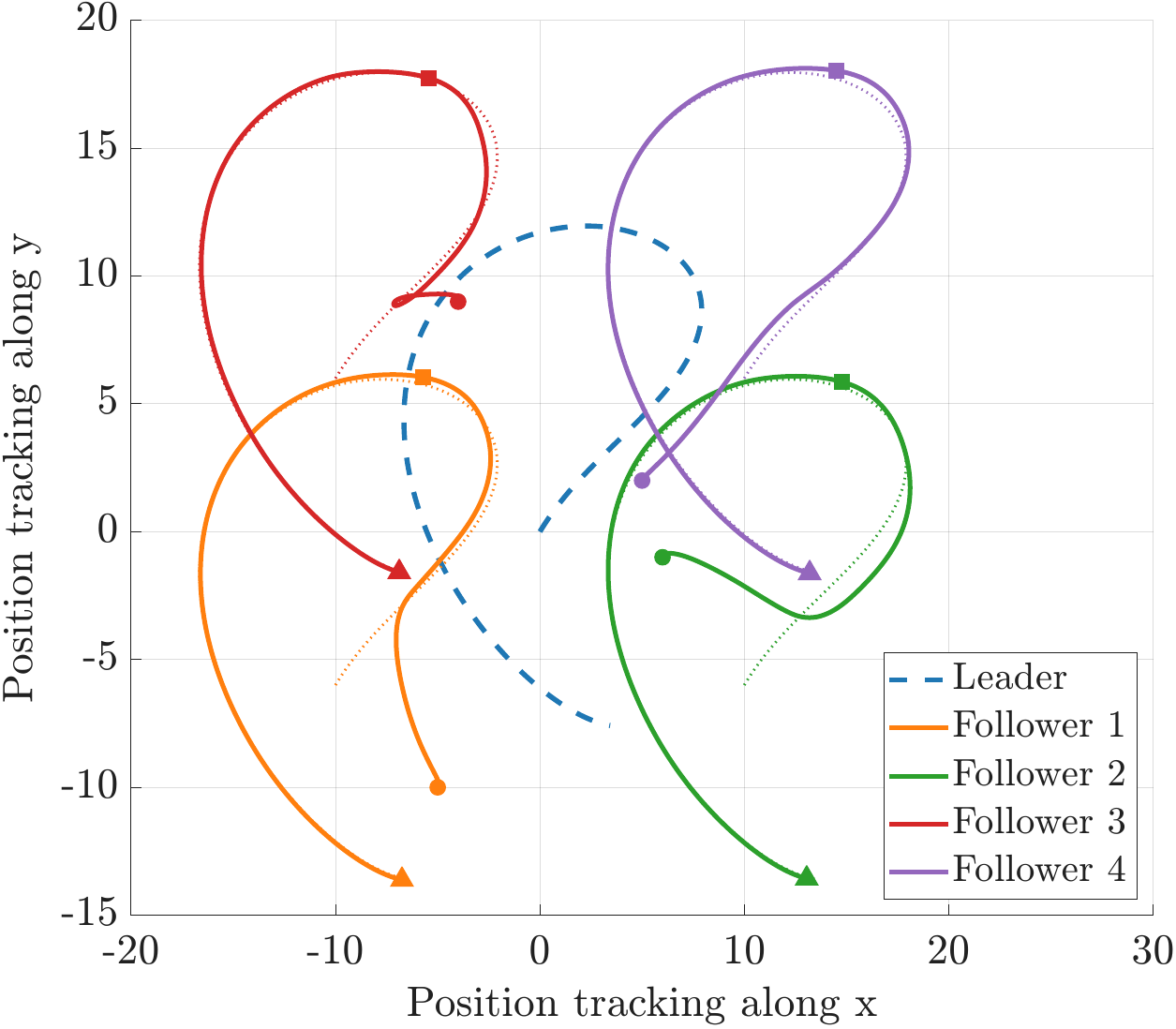}
\caption{Two-dimensional leader--follower formation trajectories.}
\label{fig:formation-trajectory}
\end{figure}

\begin{figure}[H]
\centering
\includegraphics[width=\columnwidth]{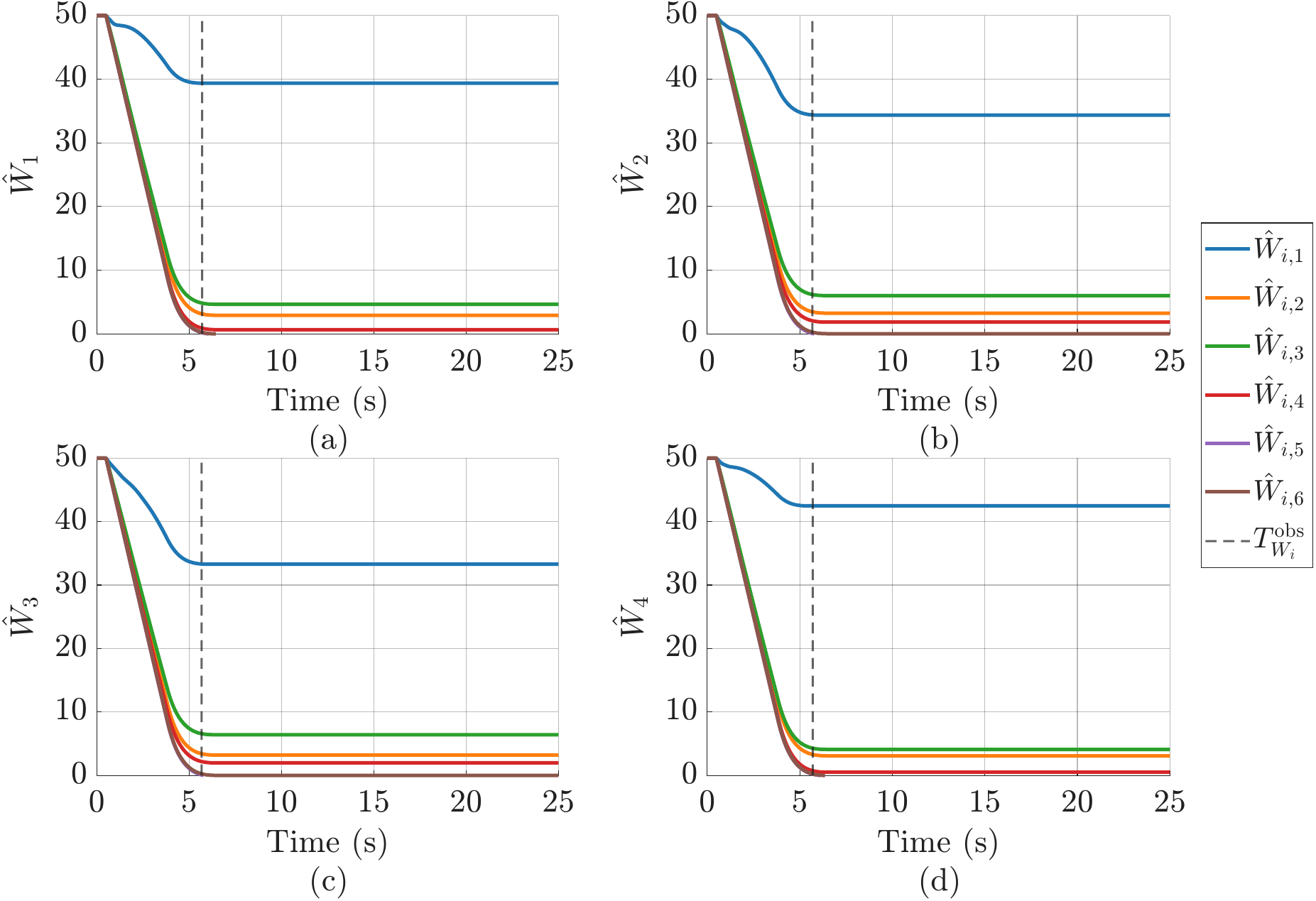}
\caption{Evolution of the critic weights \(\hat{W}_i\) for each follower:
(a) follower~1, (b) follower~2, (c) follower~3, (d) follower~4.}
\label{fig:critic-weights}
\end{figure}


\begin{figure}[H]
\centering
\includegraphics[width=\columnwidth]{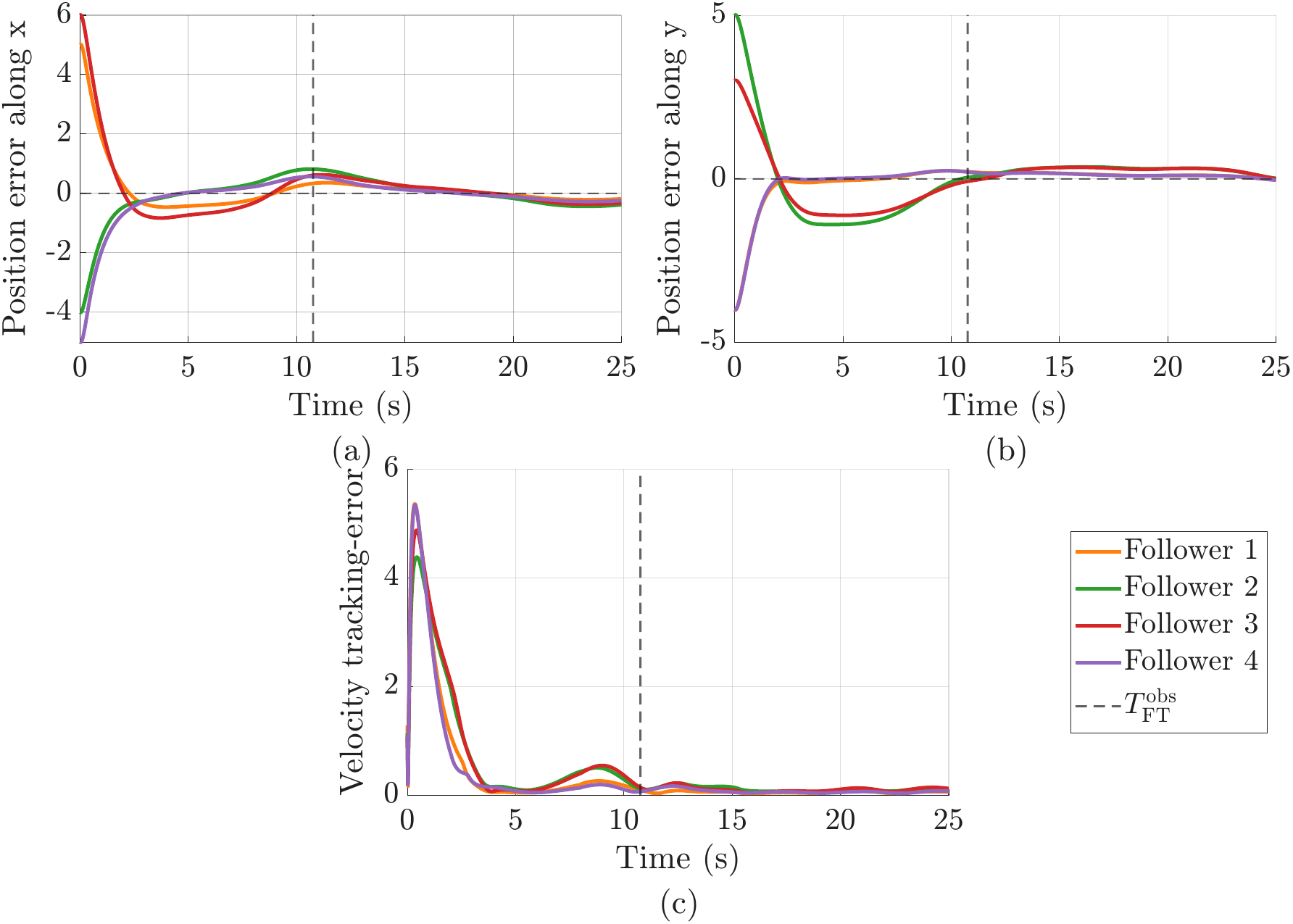}
\caption{Formation position and velocity tracking errors.}
\label{fig:tracking-errors}
\end{figure}

\begin{remark}
The convergence times in Fig.~\ref{fig:critic-weights} are numerical settling times, not the exact theoretical fixed-time bounds. The analysis guarantees an initial-condition-independent fixed-time bound, although its exact value depends on Lyapunov and replay-informativity constants that may not be known a priori. This motivates a predefined-time extension in which the desired convergence time is explicitly assigned by the designer.
\end{remark}

\begin{figure}[H]
\centering
\includegraphics[width=0.85\columnwidth]{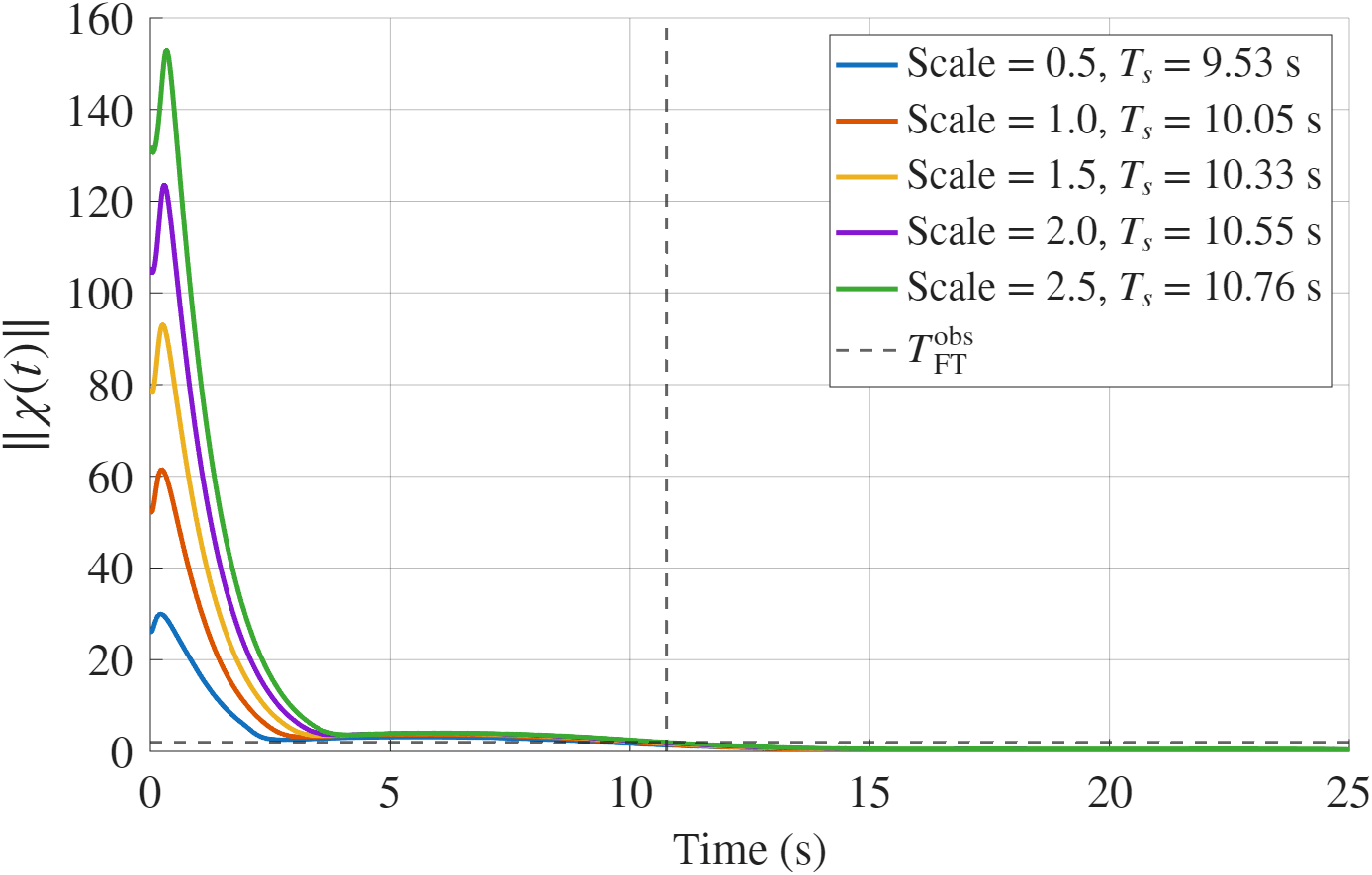}
\caption{Practical fixed-time verification under different initial-error scales.}
\label{fig:fixed-time-verification}
\end{figure}


Figures~\ref{fig:control-inputs}--\ref{fig:disturbances} further confirm the constrained and resilient behavior. The secure inputs \(u_{ci}=\operatorname{col}(u_{ci,x},u_{ci,y})\) remain strictly within the prescribed bound \(|u_{ci,\ell}|<40\), with peak magnitudes of approximately \(39\) and \(32\) along the \(x\)- and \(y\)-channels. At \(t_a=8\,\mathrm{s}\), heterogeneous FDI signals with magnitudes up to approximately \(0.67\) are activated and remain time varying thereafter, while the external disturbances persist within approximately \([-0.063,0.063]\). The bounded tracking errors in Fig.~\ref{fig:tracking-errors} are therefore maintained under simultaneous nonlinear dynamics, persistent disturbances, actuator FDI attacks, and input constraints.

\begin{figure}[H]
\centering
\includegraphics[width=\columnwidth]{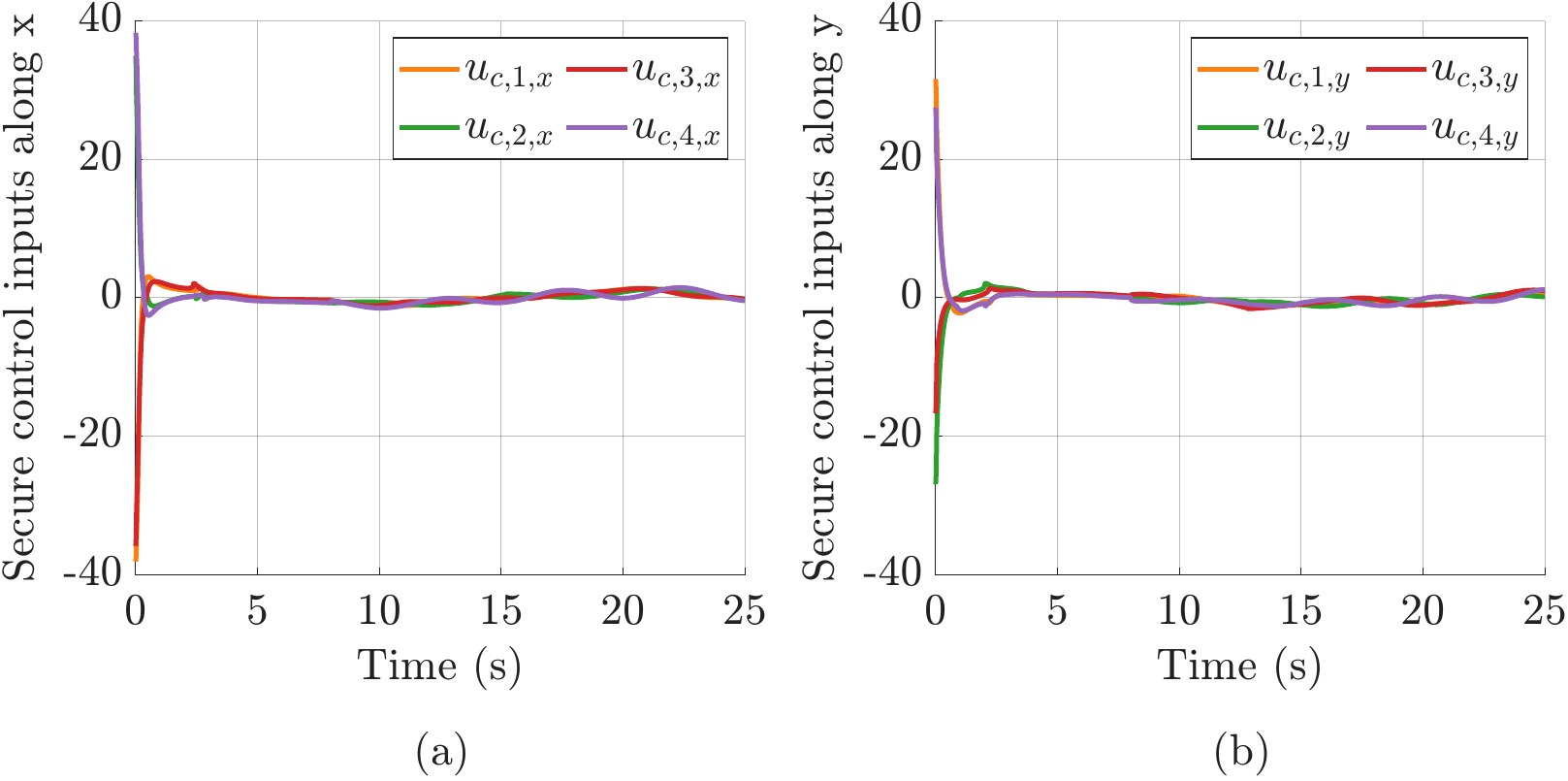}
\caption{Secure control inputs under the prescribed input constraint.}
\label{fig:control-inputs}
\end{figure}

\begin{figure}[H]
\centering
\includegraphics[width=\columnwidth]{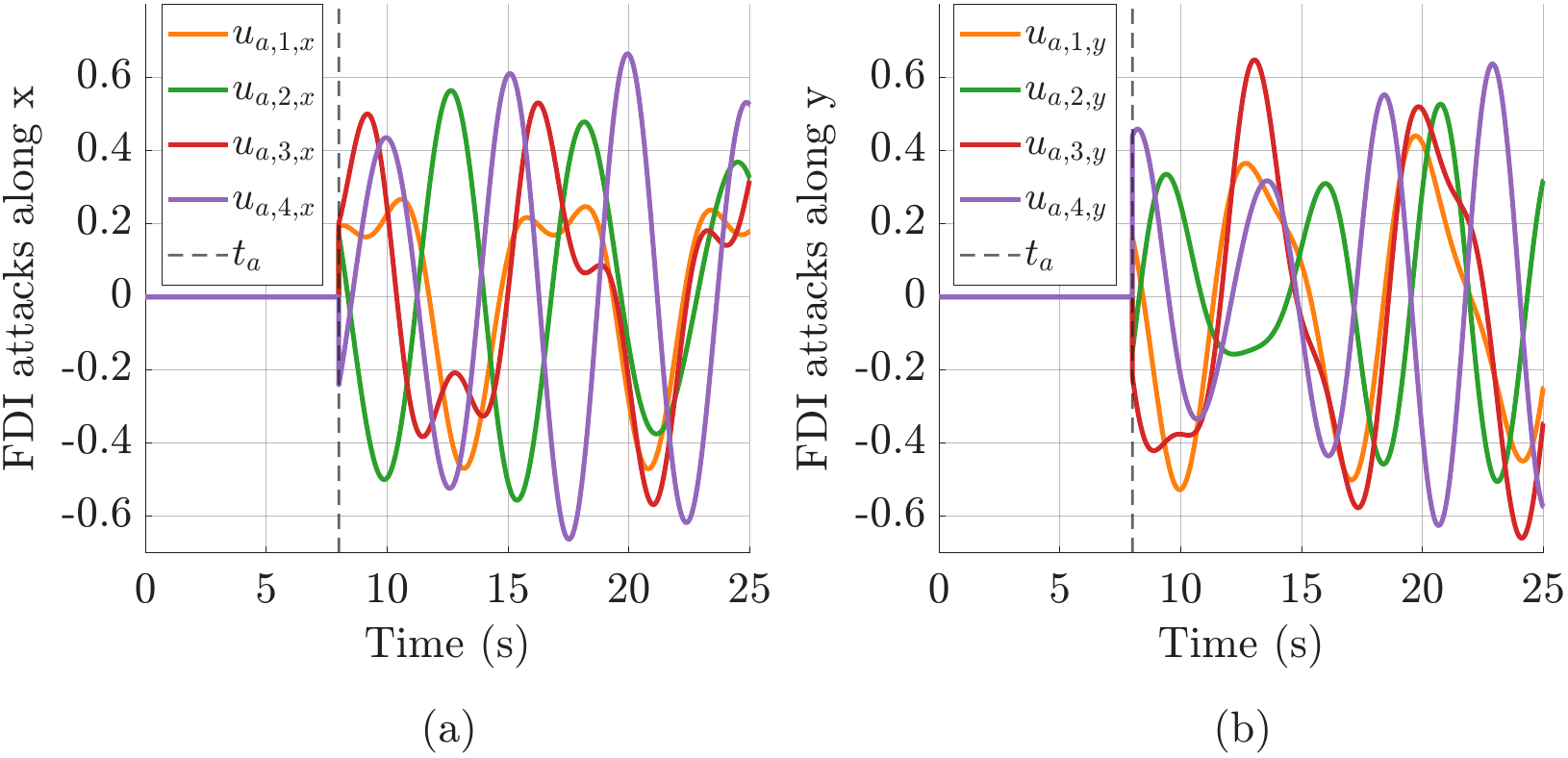}
\caption{Heterogeneous actuator-side FDI attacks.}
\label{fig:fdi-signals}
\end{figure}

\begin{figure}[H]
\centering
\includegraphics[width=\columnwidth]{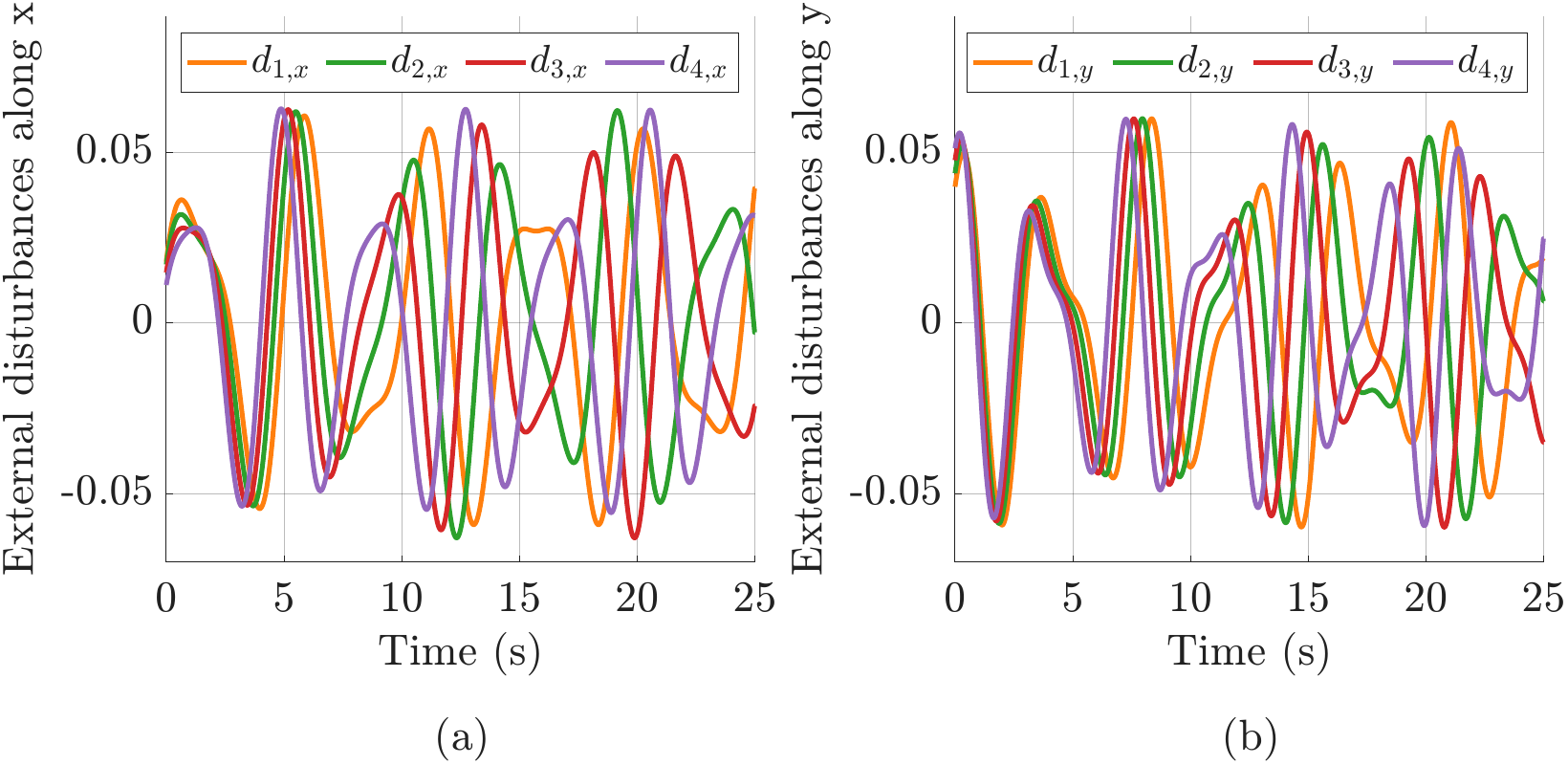}
\caption{External disturbances acting on the followers, where $d_{i,x}$ and $d_{i,y}$ denote the $x$- and $y$-components of the disturbance signal $\boldsymbol{\omega}_i$, respectively.}
\label{fig:disturbances}
\end{figure}

\FloatBarrier

\section{Discussion and Limitations}
\label{sec:discussion}

The proposed controller is constructive, whereas the stability certificate remains partly nonconstructive. In particular, the comparison constants associated with \(V_i^{*}\) and \(\nabla V_i^{*}\), used in Lemma~\ref{lem:local_value_bounds} and Remark~\ref{rem:cost_induced_fixed_time}, are guaranteed to exist on the compact nonterminal region \(\Omega_i^{r}\), but cannot yet be systematically computed from the available system information. Hence, their relation with the cost-shaping coefficients \(\kappa_{i1}\) and \(\kappa_{i2}\) is not fully constructive, and these coefficients are conservatively selected by empirical tuning; since they enter \(Q_{ii}\), they also affect the Bellman--Isaacs residual and critic learning. Moreover, bounded approximation and adversarial residuals yield practical fixed-time convergence to a prescribed neighborhood rather than exact convergence to the origin under persistent perturbations. Deriving computable comparison bounds directly from data, system constraints, and certified envelopes of the unknown HJI value function remains a topic for future investigation.

A second issue concerns admissible initialization, which is fundamental
to undiscounted infinite-horizon HJI formulations. Rather than merely
postulating that \(\Psi_i(\Omega_i)\neq\varnothing\), Appendix~\ref{app:admissible_policy}
outlines how the replay data already collected in
Eqs.~\eqref{eq:history_stack_short}--\eqref{eq:history_integral_short}, and required to satisfy the finite-excitation
condition of Lemma~\ref{lem:ie_condition}, can be reused to construct a Koopman--Riccati
policy \(u_{ci}^{(0)}\in\Psi_i(\Omega_i^0)\). This construction is
motivated by data-driven Koopman control
\cite{KordaMezic2018KoopmanMPC,StrasserBerberichAllgower2023}
and the Koopman--Hamiltonian--Jacobi connection in
\cite{Vaidya2025KoopmanHamiltonJacobi}. Nevertheless, the resulting
certificate is local and does not imply that every subsequent
critic-only policy update is an exact admissibility-preserving policy
iteration step. Establishing data-driven admissibility preservation
throughout online learning, together with constructive evaluation of
the HJI comparison constants, constitutes an important direction for
future work.


\section{Conclusion}


This paper developed a fixed-time integral reinforcement learning framework for nonlinear leader--follower formation under unknown dynamics, disturbances, FDI attacks, and bounded inputs. A nonquadratic utility enforced input constraints, while a two-power cost and replay-based critic update guaranteed fixed-time convergence of the critic and formation errors to compact residual sets, with settling times independent of initial conditions. Avoiding explicit knowledge of the drift, simulations confirmed resilient formation tracking under persistent disturbances and time-varying FDI attacks.

\section*{Acknowledgment}
The authors would like to thank Ho Chi Minh City University of Technology (HCMUT) and Vietnam National University Ho Chi Minh City (VNU-HCM) for supporting this research.

\appendices
\section{Data-Reused Construction of an Initial Admissible Policy}
\label{app:admissible_policy}

The construction below reuses the finite trajectory windows already
stored in \(\mathcal D_i\) in
Eqs.~\eqref{eq:history_stack_short}--\eqref{eq:history_integral_short};
hence, no additional exploration is required.
These data are already used in Lemma~\ref{lem:ie_condition} for critic
learning.
Motivated by data-driven Koopman control
\cite{KordaMezic2018KoopmanMPC,StrasserBerberichAllgower2023}
and the Koopman--Hamiltonian--Jacobi connection in
\cite{Vaidya2025KoopmanHamiltonJacobi}, the same data are reused only
to construct an admissible warm start for the HJI branch.
The critic finite-excitation condition and the Koopman rank condition
below are distinct; therefore, the stored data are checked separately
for both properties.

Let
\(\bm\eta_i=\bm{\mathcal L}_{K_i}(\chi_i)\in\mathbb R^{N_{K_i}}\),
where \(N_{K_i}\) is the lifting dimension, with
\(\bm{\mathcal L}_{K_i}(0)=0\) and
\(\underline c_{\mathcal L_i}\|\chi_i\|
\le\|\bm\eta_i\|
\le\bar c_{\mathcal L_i}\|\chi_i\|\)
on a compact neighborhood
\(\Omega_i^0\subseteq\Omega_i\).
Thus, \(\bm\eta_i=0\) if and only if \(\chi_i=0\) on
\(\Omega_i^0\).

For the same sampling windows used in \(\mathcal D_i\), define
\(\Delta\bm\eta_{ik}
:=\bm\eta_i(t_k)-\bm\eta_i(t_k-T_i)\),
\(\bm\eta_{I,ik}
:=\int_{t_k-T_i}^{t_k}\bm\eta_i(\tau)d\tau\), and
\(\bm u_{I,ik}
:=\int_{t_k-T_i}^{t_k}u_{ci}(\tau)d\tau\).
Let
\(\Delta\bm H_i
:=[\Delta\bm\eta_{i1},\ldots,\Delta\bm\eta_{iM_i}]\),
\(\bm H_{I,i}
:=[\bm\eta_{I,i1},\ldots,\bm\eta_{I,iM_i}]\),
\(\bm U_{I,i}
:=[\bm u_{I,i1},\ldots,\bm u_{I,iM_i}]\), and
\(\bm{\mathcal R}_{K_i}
:=\operatorname{col}\{\bm H_{I,i},\bm U_{I,i}\}\).
Select an informative substack satisfying
\(\operatorname{rank}(\bm{\mathcal R}_{K_i})=N_{K_i}+m\),
which requires \(M_i\ge N_{K_i}+m\), and obtain
\([\bm A_{K_i}\ \bm B_{K_i}]
=\Delta\bm H_i\bm{\mathcal R}_{K_i}^{\dagger}\),
where \((\cdot)^\dagger\) denotes the Moore--Penrose pseudoinverse.
Equivalently,
\(\Delta\bm\eta_{ik}
=\bm A_{K_i}\bm\eta_{I,ik}
+\bm B_{K_i}\bm u_{I,ik}
+\bm E_{K_i}^{(k)}\),
where \(\bm E_{K_i}^{(k)}\) is the finite-data fitting residual.

Thus, on \(\Omega_i^0\), the nominal lifted dynamics are represented as
\(\dot{\bm\eta}_i
=\bm A_{K_i}\bm\eta_i+\bm B_{K_i}u_{ci}+\bm d_{K_i}\),
where
\(\bm d_{K_i}\in\mathbb R^{N_{K_i}}\)
collects the finite-dimensional lifting and identification errors and
is not an additional physical disturbance.

Let
\(\bm Q_{K_i}=\bm Q_{K_i}^{\top}>0\) and
\(\bm R_{K_i}=\bm R_{K_i}^{\top}>0\), and let
\(\bm P_i=\bm P_i^\top>0\) be the stabilizing solution of
\(\bm A_{K_i}^{\top}\bm P_i+\bm P_i\bm A_{K_i}
-\bm P_i\bm B_{K_i}\bm R_{K_i}^{-1}
\bm B_{K_i}^{\top}\bm P_i+\bm Q_{K_i}=0\).
Define
\(\bm K_{0i}:=\bm R_{K_i}^{-1}\bm B_{K_i}^{\top}\bm P_i\),
\(V_{0i}:=\bm\eta_i^\top\bm P_i\bm\eta_i\), and
\(u_{ci}^{(0)}:=-\bm K_{0i}\bm\eta_i\).
The set \(\Omega_i^0\) is selected as a compact sublevel set of
\(V_{0i}\), contained in the validity region of the lifting and residual
bounds, and sufficiently small such that
\(u_{ci}^{(0)}(\chi_i)\in\mathbb U_i\) for all
\(\chi_i\in\Omega_i^0\).

\begin{lemma}[Koopman--Riccati admissible initialization]
\label{lem:koopman_admissible}
Suppose that a validation subset of the stored data together with the
local regularity bound yields
\(\|\bm d_{K_i}\|\le\rho_{K_i}\|\bm\eta_i\|\) on
\(\Omega_i^0\), where \(\rho_{K_i}\ge0\), and define
\(\alpha_{K_i}
:=\lambda_{\min}(\bm Q_{K_i})
-2\|\bm P_i\|\rho_{K_i}>0\).
Then
\(u_{ci}^{(0)}\in\Psi_i(\Omega_i^0)\).
\end{lemma}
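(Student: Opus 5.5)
We use $V_{0i}=\bm\eta_i^\top\bm P_i\bm\eta_i$ as a Lyapunov certificate for the lifted closed loop. The properties required by the definition of $\Psi_i(\Omega_i^0)$ are then checked one at a time: continuity, $u_{ci}^{(0)}(0)=0$, membership in $\mathbb U_i$, asymptotic stability, and finiteness of $J_i$.

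\textbf{Static properties.} Continuity follows because $u_{ci}^{(0)}=-\bm K_{0i}\bm{\mathcal L}_{K_i}(\chi_i)$ composes a linear map with the continuous lifting. Since $\bm{\mathcal L}_{K_i}(0)=0$, we get $u_{ci}^{(0)}(0)=0$. Membership $u_{ci}^{(0)}\in\mathbb U_i$ on $\Omega_i^0$ holds by the choice of $\Omega_i^0$ as a sufficiently small sublevel set.

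\textbf{Lyapunov decrease.} Along $\dot{\bm\eta}_i=(\bm A_{K_i}-\bm B_{K_i}\bm K_{0i})\bm\eta_i+\bm d_{K_i}$ we differentiate $V_{0i}$. Substituting the Riccati equation gives
\[
\dot V_{0i}=-\bm\eta_i^\top\bm Q_{K_i}\bm\eta_i-\bm\eta_i^\top\bm P_i\bm B_{K_i}\bm R_{K_i}^{-1}\bm B_{K_i}^\top\bm P_i\bm\eta_i+2\bm\eta_i^\top\bm P_i\bm d_{K_i}.
\]
Dropping the nonpositive middle term and using $\|\bm d_{K_i}\|\le\rho_{K_i}\|\bm\eta_i\|$ yields $\dot V_{0i}\le-\alpha_{K_i}\|\bm\eta_i\|^2$. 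Together with $\lambda_{\min}(\bm P_i)\|\bm\eta_i\|^2\le V_{0i}\le\lambda_{\max}(\bm P_i)\|\bm\eta_i\|^2$, this gives exponential decay of $V_{0i}$. The sublevel set $\Omega_i^0$ is therefore forward invariant, so the residual bound remains valid along trajectories. The lifting equivalence $\underline c_{\mathcal L_i}\|\chi_i\|\le\|\bm\eta_i\|\le\bar c_{\mathcal L_i}\|\chi_i\|$ then transfers exponential, and hence asymptotic, stability to $\chi_i$.

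\textbf{Finite cost.} We bound the integrand of $J_i$ along the exponentially decaying trajectory. $Q_{ii}$ is dominated by $c(\|\chi_i\|^{2\alpha_i}+\|\chi_i\|^2+\|\chi_i\|^{2\beta_i})$, and each of these terms is integrable because $\|\chi_i\|$ decays exponentially. $\mathcal U_i(u_{ci}^{(0)})$ is locally quadratic in $u$ near the origin, since $\tanh^{-1}(s/\bar u)\approx s/\bar u$, and $u$ is linear in $\bm\eta_i$, so it is integrable as well. The neighbor utilities are integrable under the fixed neighbor policies. The adversarial terms enter with a negative sign, and $\varpi\in L_2$, so they only lower or keep finite the cost. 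Hence $J_i<\infty$.

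\textbf{Main obstacle.} The delicate step is showing that the lifted model with residual $\bm d_{K_i}$ really represents the closed-loop $\chi_i$ dynamics under the disturbance and attack channels. The definition of admissibility asks for stability of the disturbance- and attack-affected local dynamics. I would interpret it, as in the paper's admissibility definition, as stability of the nominal closed loop with $\varpi\equiv0$, where the residual bound is assumed to absorb the modeling error. If bounded $L_2$ adversarial inputs must be included, I would instead conclude input-to-state stability from the strict decrease margin $\alpha_{K_i}>0$ together with asymptotic stability at $\varpi=0$.
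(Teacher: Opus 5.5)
Your proposal is correct and follows essentially the same route as the paper: the Riccati identity for $V_{0i}$ (your middle term is exactly $u_{ci}^{(0)\top}\bm R_{K_i}u_{ci}^{(0)}$, which the paper keeps and then drops), the residual bound giving $\dot V_{0i}\le-\alpha_{K_i}\|\bm\eta_i\|^2$, forward invariance of the sublevel set $\Omega_i^0$, transfer of exponential decay to $\chi_i$ through the lifting bounds, and integrability of $Q_{ii}$ and $\mathcal U_i(u_{ci}^{(0)})$ along the decaying trajectory. The paper's proof, like your fallback interpretation, certifies only the nominal closed loop and nominal cost and does not address the neighbor utilities or the adversarial terms, so your remarks on those and on input-to-state stability go beyond what the paper proves.
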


\begin{proof}
Along the true nominal lifted dynamics under
\(u_{ci}^{(0)}=-\bm K_{0i}\bm\eta_i\),
\(\dot{\bm\eta}_i
=(\bm A_{K_i}-\bm B_{K_i}\bm K_{0i})\bm\eta_i
+\bm d_{K_i}\).
Using the Riccati equation gives
\(\dot V_{0i}
=-\bm\eta_i^\top\bm Q_{K_i}\bm\eta_i
-u_{ci}^{(0)\top}\bm R_{K_i}u_{ci}^{(0)}
+2\bm\eta_i^\top\bm P_i\bm d_{K_i}\).
Moreover,
\(2\bm\eta_i^\top\bm P_i\bm d_{K_i}
\le
2\|\bm P_i\|\rho_{K_i}\|\bm\eta_i\|^2\),
and hence
\(\dot V_{0i}
\le
-\alpha_{K_i}\|\bm\eta_i\|^2
-u_{ci}^{(0)\top}\bm R_{K_i}u_{ci}^{(0)}
\le
-\alpha_{K_i}\|\bm\eta_i\|^2<0\)
for \(\chi_i\neq0\).

Since \(\Omega_i^0\) is a sublevel set of \(V_{0i}\), it is forward
invariant.
Furthermore,
\(\lambda_{\min}(\bm P_i)\|\bm\eta_i\|^2
\le V_{0i}
\le\lambda_{\max}(\bm P_i)\|\bm\eta_i\|^2\),
so
\(\dot V_{0i}
\le
-\alpha_{K_i}V_{0i}/\lambda_{\max}(\bm P_i)\).
Therefore,
\(\|\chi_i(t)\|
\le M_{0i}e^{-\lambda_{0i}t}\|\chi_i(0)\|\),
where one may take
\(M_{0i}
:=
(\bar c_{\mathcal L_i}/\underline c_{\mathcal L_i})
\sqrt{\lambda_{\max}(\bm P_i)/\lambda_{\min}(\bm P_i)}\)
and
\(\lambda_{0i}
:=
\alpha_{K_i}/[2\lambda_{\max}(\bm P_i)]>0\).

Since
\(0<\alpha_i<1<\beta_i\),
this exponential decay implies
\(\int_0^\infty Q_{ii}(\chi_i(t))dt<\infty\).
Also,
\(u_{ci}^{(0)}=-\bm K_{0i}\bm\eta_i\)
converges exponentially to zero and satisfies
\(u_{ci}^{(0)}\in\mathbb U_i\) on \(\Omega_i^0\).
Because the saturation-compatible input cost is locally smooth and
\(\mathcal U_i(0)=0\), there exists
\(\bar c_{\mathcal U_i}>0\) such that
\(\mathcal U_i(u_{ci}^{(0)})
\le
\bar c_{\mathcal U_i}\|u_{ci}^{(0)}\|^2\)
on the compact control image of \(\Omega_i^0\).
Hence
\(\int_0^\infty\mathcal U_i(u_{ci}^{(0)}(t))dt<\infty\).

Thus,
\(u_{ci}^{(0)}\) is continuous,
\(u_{ci}^{(0)}(0)=0\),
satisfies the input constraint, renders the nominal origin
asymptotically stable while keeping \(\Omega_i^0\) forward invariant,
and yields a finite nominal cost.
Therefore,
\(u_{ci}^{(0)}\in\Psi_i(\Omega_i^0)\).
\end{proof}

Consequently,
\(\Psi_i(\Omega_i^0)\neq\varnothing\).
The exact saturation-compatible policy iteration may therefore be
initialized from
\(u_{ci}^{(0)}\in\Psi_i(\Omega_i^0)\); under the standard admissible
policy-iteration conditions in
\cite{AbuKhalafLewis2005,VamvoudakisLewis2010},
\(u_{ci}^{(k)}\in\Psi_i(\Omega_i^0)\) and
\(V_i^{(k+1)}(\chi_i)\le V_i^{(k)}(\chi_i)\).
This construction only certifies and initializes an admissible HJI
branch; it does not require the finite-dimensional Koopman model to be
exact.
The online critic-only integral Bellman--Isaacs update in the main text
continues to learn directly from the finite-window Bellman residual and
is not interpreted as an exact policy-iteration step.

\bibliographystyle{IEEEtran}
\bibliography{references}

\makeatletter
\def\@IEEEBIOskipN{1.4\baselineskip}
\makeatother

\vspace{0.2em}
\section*{Author Biographies}
\vspace{-0.6em}

\begin{IEEEbiography}
[{\includegraphics[
    width=1in,
    height=1.25in,
    clip,
    keepaspectratio
]{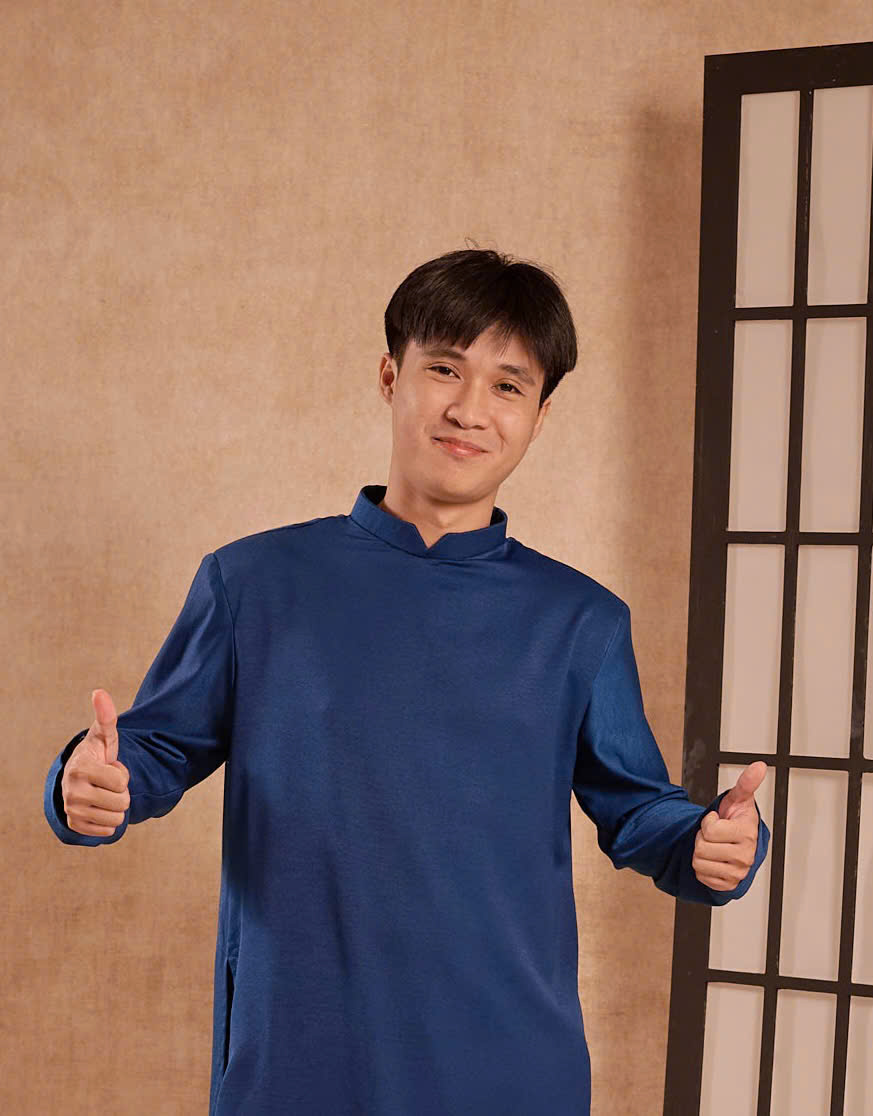}}]
{Tien Dat Vu}
is a senior undergraduate student in the Vietnamese--French Program
in Mechatronics Engineering at Ho Chi Minh City University of
Technology (HCMUT), Vietnam National University Ho Chi Minh City
(VNU-HCM), Ho Chi Minh City, Vietnam. His research interests include
robotics, optimization, formal methods, reinforcement
learning, game theory, model predictive control, fault-tolerant control, data-based systems,
data-driven control, multi-agent systems, stochastic control, and
Riemannian geometric methods for control.
\end{IEEEbiography}

\begin{IEEEbiography}
[{\includegraphics[
    width=1in,
    height=1.25in,
    clip,
    keepaspectratio
]{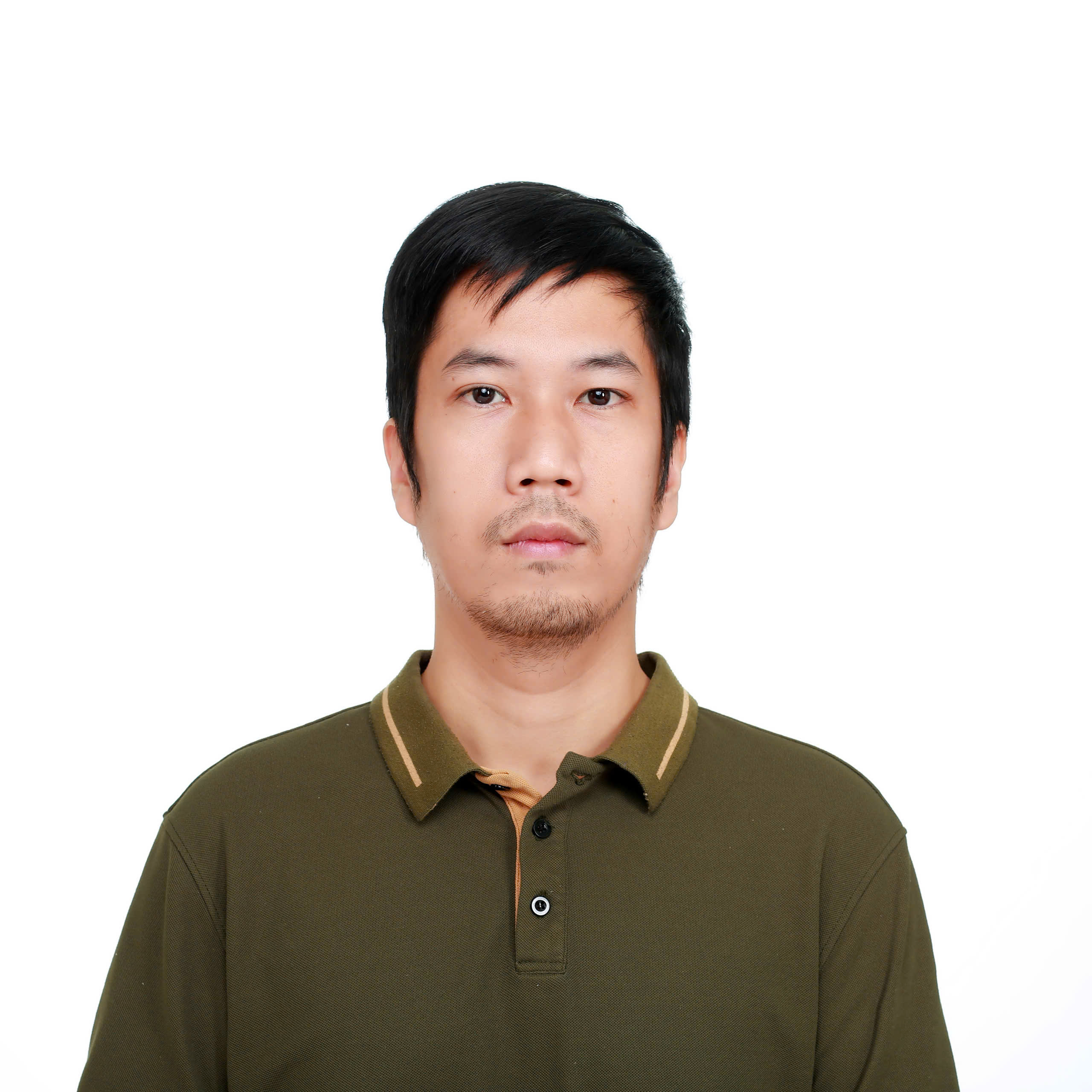}}]
{Minh Doan}
received the B.S. degree in Mechanical Engineering from Bucknell
University, Lewisburg, PA, USA, in 2015, and the Ph.D. degree from
Keio University, Tokyo, Japan, in 2021. He is currently a Lecturer
with the Faculty of Mechanical Engineering, Ho Chi Minh City
University of Technology (HCMUT), Vietnam National University
Ho Chi Minh City (VNU-HCM), Ho Chi Minh City, Vietnam. His research
interests include UAV design, wind turbine systems, multi-agent
systems, and networked control.
\end{IEEEbiography}

\vfill

\end{document}